\definecolor{royalpurple}{rgb}{0.47, 0.32, 0.66}
\definecolor{dblue}  {RGB}{20,66,129}
\definecolor{ddblue} {RGB}{11,36,69}
\definecolor{jblue}  {RGB}{20,50,100}
\definecolor{cblue}{rgb}{0.16, 0.32, 0.75}
\definecolor{cred}{rgb}{0.7, 0.11, 0.11}
\DeclareMathOperator{\mspan}{Span}
\DeclareMathOperator{\Ker}{Ker}
\DeclareMathOperator{\Ran}{Ran}
\DeclarePairedDelimiterX{\norm}[1]\lVert\rVert{
  \ifblank{#1}{\:\cdot\:}{#1}
}
\newcommand{\iu}{\mathrm{i}\mkern1mu}
\newcommand{\e}{\mathrm{e}}
\newcommand{\nnum}{\mathbb{N}}
\newcommand{\rnum}{\mathbb{R}}
\newcommand{\cnum}{\mathbb{C}}
\newcommand{\znum}{\mathbb{Z}}
\newcommand{\hilbert}{\mathcal{H}}
\newcommand{\domain}{\mathcal{D}}
\numberwithin{equation}{section}
\theoremstyle{plain}
\newtheorem{theorem}{Theorem}[section]
\newtheorem{corollary}[theorem]{Corollary}
\newtheorem{lemma}[theorem]{Lemma}
\newtheorem{proposition}[theorem]{Proposition}
\theoremstyle{definition}
\newtheorem{definition}[theorem]{Definition}
\theoremstyle{remark}
\newtheorem{remark}[theorem]{Remark}
\theoremstyle{remark}
\newtheorem{example}[theorem]{Example}
\newcommand\blfootnote[1]{%
  \begingroup
  \renewcommand\thefootnote{}\footnote{#1}%
  \addtocounter{footnote}{-1}%
  \endgroup
}
\newcommand{\Aop}{A_{k,l}(f)}
\newcommand{\dr}{d^{(n_0)}}
\newcommand{\drpm}{d^{(n_0,\pm)}}
\newcommand{\tcrpm}{\tilde{c}^{(n_0,\pm)}}
\newcommand{\cnpm}{c^{(n_0,\pm)}}
\title[Self-adjoint realizations of higher-order squeezing operators]{Self-adjoint realizations of higher-order squeezing operators}
\author{Felix Fischer\textsuperscript{\,1}\hspace{2pt}\orcidlink{0009-0001-3040-8184}\hspace{1pt}}
\author{Daniel Burgarth\textsuperscript{\,1}\hspace{2pt}\orcidlink{0000-0003-4063-1264}\hspace{1pt}}
\author{Davide Lonigro\textsuperscript{\,1}\hspace{2pt}\orcidlink{0000-0002-0792-8122}\hspace{1pt}}
\address{\footnotesize \textsuperscript{1}Department Physik, Friedrich-Alexander-Universität Erlangen-Nürnberg, Staudtstraße 7, 91058 Erlangen, Germany}
\email{\footnotesize \href{mailto:felix.o.fischer@fau.de}{\texttt{felix.o.fischer@fau.de}}}
\email{\footnotesize \href{mailto:daniel.burgarth@fau.de}{\texttt{daniel.burgarth@fau.de}}}
\email{\footnotesize \href{mailto:davide.lonigro@fau.de}{\texttt{davide.lonigro@fau.de}}}
\begin{document}

\vspace{-0.75cm}

\begin{abstract}
    Higher-order squeezing captures non-Gaussian features of quantum light by probing moments of the field beyond the variance, and is associated with operators involving nonlinear combinations of creation and annihilation operators. Here we study a class of operators of the form $\xi (a^\dag)^ka^l+\xi^\ast (a^\dag)^la^k+f(a^\dag a)$, which arise naturally in the analysis of higher-order quantum fluctuations. The operators are defined on the linear span of Fock states. We show that the essential self-adjointness of these operators depends on the asymptotics of the real-valued function $f(n)$ at infinity. In particular, pure higher-order squeezing operators ($k\geq3$, $l=0$, and $f(n)=0$) are not essentially self-adjoint, but adding a properly chosen term $f(a^\dag a)$, like a Kerr term, can have a regularizing effect and restore essential self-adjointness. In the non-self-adjoint regime, we compute the deficiency indices and classify all self-adjoint extensions. Our results provide a rigorous operator-theoretic foundation for modeling and interpreting higher-order squeezing in quantum optics, and reveal interesting connections with the Birkhoff--Trjitzinsky theory of asymptotic expansions for recurrence relations.
\end{abstract}

\maketitle\thispagestyle{empty}

\blfootnote{2020 \textit{Mathematics Subject Classification}. 81Q10, 81Q12, 47B25, 46N50.}

\vspace{-0.75cm}

\noindent \small \textbf{Keywords}: Squeezing operators; Higher-order squeezing; Light--matter interaction; Fock space; Self-adjoint extensions; Recurrence relations; Birkhoff--Trjitzinsky theory.\normalsize

\section{Introduction}\label{sec:intro}

Squeezing of light is a quantum optical phenomenon in which the uncertainty (or quantum noise) of one observable, such as the electric field amplitude at a given phase, is reduced below the level of the standard quantum limit, at the expense of increasing the uncertainty in the conjugate observable. In practical terms, for a squeezed state, the variance of one of the quadratures is smaller than the corresponding one for the vacuum state~\cite{walls-squeezedstateslight-1983,braunstein-generalizedsqueezing-1987}. This enables one to overcome fundamental limitations in measurement processes such as shot noise and quantum radiation pressure noise~\cite{braunstein-generalizedsqueezing-1987,crouch-limitationssqueezingparametric-1988,braunstein-phasehomodynestatistics-1990,braunstein-quantumlimitsprecision-1992,milburn-quantumteleportationsqueezed-1999,braunstein-squeezingirreducibleresource-2005}, and even to enhance the precision of gravitational wave detection~\cite{tse-quantumenhancedadvancedligo-2019,mcculler-frequencydependentsqueezingadvanced-2020,ganapathy-broadbandquantumenhancement-2023}.

At the mathematical level, squeezed states of single-mode (monochromatic) light are generated by applying, either to the vacuum state or to coherent states, the exponential of the squeezing operator
\begin{equation}\label{eq:2-squeeze}
    A_2=\xi(a^\dag)^2+\xi^*a^2,
\end{equation}
where $a,a^\dag$ are the bosonic annihilation and creation operators on a Hilbert space $\hilbert$ satisfying the standard commutation relations $[a,a^\dag]=1$, and $\xi$ is a complex parameter whose phase determines the direction of the squeezing in the phase space. The operator~\eqref{eq:2-squeeze}, defined on the linear span of Fock states---states with a finite number of excitations---is essentially self-adjoint as a consequence of Nelson's analytic vector theorem, cf.~\cite{gorska-squeezingarbitraryorder-2014}, which makes its exponential uniquely defined and unitary. For such states, depending on said phase, either the variance of the position $x$ or the momentum $p$ acquires a value smaller than the one in the vacuum state, at the price of the conjugate variable.

\textit{Higher-order squeezing} refers to generalizations of standard squeezing, where higher-order fluctuations of position and momentum are considered, possibly revealing more subtle forms of nonclassicality that are not captured by conventional squeezing. Definitions of higher-order squeezed states were first proposed by Hong and Mandel~\cite{hong-higherordersqueezingquantum-1985,hong-generationhigherordersqueezing-1985}
and by Hillery and Yu~\cite{hillery-amplificationhigherordersqueezing-1992}, initiating an extensive discussion of the mathematical and physical properties of such states, see e.g. Refs.~\cite{marian-higherordersqueezingphoton-1992,gerry-squeezinghigherordersqueezing-1988,mahran-squeezinghigherordersqueezing-1990,ansari-effectatomiccoherence-1993,marian-higherordersqueezingproperties-1991,peerina-relationsantibunchingsubpoissonian-1984,giri-higherordersqueezingelectromagnetic-2005}.
Among the diverse generalizations discussed in the literature, multiple authors considered $k$-photon generalizations of the squeezing operator in \cref{eq:2-squeeze} in the form
\begin{equation}\label{eq:k-squeeze}
    A_k=\xi(a^\dag)^k+\xi^*a^k,
\end{equation}
with $k\geq3$, together with similar generalizations of well-known models of light--matter interaction such as the Rabi and Jaynes--Cummings model~\cite{fisher-impossibilitynaivelygeneralizing-1984,nagel-higherpowersqueezed-1997,lo-multiquantumjaynescummingsmodel-1998,braunstein-generalizedsqueezing-1987,braunstein-phasehomodynestatistics-1990}. Recent progress in experimentally realizing multi-photon interactions~\cite{corona-thirdorderspontaneousparametric-2011,chang-observationthreephotonspontaneous-2020,menard-emissionphotonmultiplets-2022} reignited the discussion about the theoretical modelling of higher-order photon processes~\cite{lee-exactsolutionsfamily-2011,zhang-solvingtwomodesqueezed-2013,gardas-multiphotonrabimodel-2013,gardas-initialstatesqubit-2013,gorska-squeezingarbitraryorder-2014,zhang-2modekphotonquantum-2017,braak-$k$photonquantumrabi-2025,ashhab-propertiesdynamicsgeneralized-2025}. In the context of quantum information processing and quantum computing, higher-order bosonic operators play a role as an error source~\cite{nagies-rolehigherorderterms-2025,michael-newclassquantum-2016}
and as a resource for state preparation and error mitigation~\cite{cai-bosonicquantumerror-2021,puri-engineeringquantumstates-2017,mirrahimi-dynamicallyprotectedcatqubits-2014}.

This physical motivation leads naturally to the question of whether operators of the form~\eqref{eq:k-squeeze} define legitimate quantum observables—in other words, whether there exists a suitable domain on which the operator $A_k$ is densely defined and essentially self-adjoint.
Unlike the well-understood case $k=2$, where the linear span of Fock states suffices by virtue of Nelson's theorem, the general case does not seem to be settled.
Several authors have argued that the operator $A_k$ fails to be essentially self-adjoint for $k\geq3$, employing a range of methods including the Bargmann space formalism~\cite{zhang-solvingtwomodesqueezed-2013,zhang-2modekphotonquantum-2017,braak-$k$photonquantumrabi-2025}, spectral analysis~\cite{lo-multiquantumjaynescummingsmodel-1998}, and von Neumann’s theory of deficiency indices~\cite{nagel-higherpowersqueezed-1997,gorska-squeezingarbitraryorder-2014}. These arguments have prompted considerable debate and a number of conflicting claims in the literature~\cite{lo-commentsolvingtwomode-2014,lo-commentmultiphotonrabi-2014,gardas-replycommentmultiphoton-2014,lo-commentinitialstates-2014}.

Furthermore, one could consider squeezing operators with additional physical terms, one relevant example being Kerr nonlinearities \cite{mirrahimi-dynamicallyprotectedcatqubits-2014,puri-engineeringquantumstates-2017,cai-bosonicquantumerror-2021,michael-newclassquantum-2016,ayyash-dispersiveregimemultiphoton-2025,ying-criticalquantummetrology-2025,felicetti-universalspectralfeatures-2020} resulting in operators in the form
\begin{equation}\label{eq:k-squeeze-Kerr}
    A_{k,\rm{Kerr}}=\xi(a^\dag)^k+\xi^*a^k+K(a^\dag)^ha^h
\end{equation}
for some integer $h$ and some constant $K\geq0$. Such terms create a natural cutoff for the photon number, and are thus argued to work as regulating terms for numerical simulations \cite{ayyash-dispersiveregimemultiphoton-2025,ying-criticalquantummetrology-2025,ashhab2025truncation}.

\subsection{Our results}

In this paper we aim to shed light on this discussion. To this purpose, we shall investigate self-adjoint realizations of a more general class of operators (cf. \cref{def:higher-order-squeezing}), defined on $\domain_0$, the linear span of Fock states (cf. \cref{eq:d0}), whose action is given by
\begin{equation}\label{eq:most_general}
    \Aop=\xi(a^\dag)^ka^l+\xi^*(a^\dag)^la^k+f(a^\dag a),
\end{equation}
where $k,l$ are arbitrary non-equal positive integers---hereby we fix $k>l$---and $f:\nnum \rightarrow\mathbb{R}_+$ is a nonnegative-valued function. As $a^\dag a$ corresponds to the number operator, $f(a^\dag a)$ is a term only depending on the number of boson excitations.
This additional term accounts for free-field energy contributions or Kerr nonlinearities as in \cref{eq:k-squeeze-Kerr}, which can nontrivially affect the interplay with the squeezing term—much like how the spectrum of the two-photon Rabi model depends sensitively on the ratio between the coupling strength and the field frequency~\cite{duan-twophotonrabimodel-2016}.

In order to illustrate our results, let us first restrict our attention to the case where \(f\) is a polynomial with nonnegative coefficients:
\begin{equation}
    f(n) = a_0 + a_1 n + a_2 n^2 + \dots + a_h n^h, \qquad h \in \mathbb{N}, \quad a_1, \dots, a_{h} \geq 0\, ,
\end{equation}
and $a_h>0$ whenever $f$ is nonzero. The analysis then splits into the following cases:
\begin{itemize}
    \item If \(k + l < 3\), thus necessarily \(l = 0\) and \(k \leq2\), the operator \(\Aop\) is essentially self-adjoint for all choices of the remaining parameters.
    
    \item If \(k + l \geq 3\), the behavior depends on the degree \(h\) of the polynomial \(f\):
    \begin{itemize}
        \item If \(2h > k + l\), or if \(2h = k + l\) and \(\lim_{n \to \infty} f(n) |\xi|^{-1}\,n^{-(k+l)/2} = a_h/|\xi| > 2\), then \(\Aop\) is essentially self-adjoint.
        
        \item If \(2h < k + l\), or if \(2h = k + l\) and \(\lim_{n \to \infty} f(n) |\xi|^{-1}\,n^{-(k+l)/2} = a_h/|\xi|< 2\), then $\Aop$ is not essentially self-adjoint and has equal deficiency indices \(n_\pm = k - l\). In this case, \(\Aop\) admits a \((k - l)\)-dimensional family of self-adjoint extensions, which we can describe explicitly.
    \end{itemize}
\end{itemize}
In particular, all pure higher-order squeezing operators as in \cref{eq:k-squeeze} with $k\geq3$ are not essentially self-adjoint on $\domain_0$ and admit a $k$-dimensional space of self-adjoint extensions. Physically, this means that \cref{eq:k-squeeze} only defines a proper observable once one specifies some additional conditions on the states on which it acts. However, $k$-squeezing operators with an additional term which is a polynomial of order $h$, e.g. the one in \cref{eq:k-squeeze-Kerr}, can be essentially self-adjoint depending on the value of $h$.
Precisely, if $k\geq3$, they are essentially self-adjoint for $2h>k$ and not essentially self-adjoint for $2h<k$, with the critical case $2h=k$ depending on the value of the squeezing parameter $\xi$ and the higher-order coefficient $a_h$ of $f(n)$. This corresponds to the intuition that the higher-order term dominates in the large-photon regime: in particular, if $2h>k$ the field term $f(a^\dag a)$ dominates and has a ``regularizing'' effect. These considerations also have a profound impact on numerical simulations, which is extensively discussed in our companion paper \cite{ashhab2025truncation}.

While above we illustrated our findings for polynomial $f(n)$ for clarity—and because cases of physical interest are expected to be covered by them—our mathematical analysis applies more generally. First: every operator of the form $\Aop$ admits at least one self-adjoint extension. Then, a sufficient condition for essential self-adjointness is  
\begin{equation}
f(n)\,|\xi|^{-1}n^{-(k+l)/2} > \kappa
\end{equation}
for some $\kappa > 2$ and all sufficiently large $n$, in which case the term $f(a^\dag a)$, itself essentially self-adjoint by Nelson’s analytic theorem, dominates over the squeezing term and the result follows perturbatively. Instead, when the squeezing term dominates, von Neumann’s theory reduces the problem to a linear recurrence relation with nonconstant coefficients whose solutions, additionally, must be square summable.  
We analyze this problem using the asymptotic theory of Birkhoff and Trjitzinsky~\cite{birkhoff-formaltheoryirregular-1930,birkhoff-analytictheorysingular-1933} under the assumption that $f(n)\,n^{-(k+l)/2}$ admits an asymptotic expansion in powers of $n^{-1/2}$. To this end, we also provide an explicit adaptation of the Birkhoff--Trjitzinsky theory to asymptotic expansions of this type, which may be of independent interest.

\subsection{Structure of the paper}
The remainder of the paper is organized as follows. In \cref{sec:preliminaries}, we introduce the notation used throughout and define the class of operators under consideration---the $(k,l)$-squeezing operators---along with auxiliary quantities needed in the subsequent analysis. \cref{sec:ess-sa} is devoted to the cases in which these operators are essentially self-adjoint (\cref{prop:k2,prop:rel-bounded}). 
In~\cref{sec:deficiency}, we turn to the more intricate and mathematically richer case where the $(k,l)$-squeezing operators are not essentially self-adjoint. To prepare for this, we show in \cref{sec:conjugation} that all $(k,l)$-squeezing operators have equal deficiency indices and thus admit self-adjoint extensions (\cref{prop:conjugation}). To determine them, we recall key results on linear recurrence relations in~\cref{sec:recurrence}, and then apply this framework in~\cref{sec:recurrence_for_A,sec:deficiency_sub} to compute the deficiency indices and characterize their deficiency subspaces (\cref{thm:a-deficiency}), thereby obtaining a full parametrization of their self-adjoint extensions. These results are finally particularized in \cref{sec:examples} to the case of polynomial $f(n)$ (\cref{prop:polynomial}) and applied to some examples of $k$-order squeezing operators with and without additional polynomial terms. Final considerations and outlooks are gathered in \cref{sec:conclusion}.

\section{Preliminaries}\label{sec:preliminaries}

Throughout the paper, we shall consider a separable Hilbert space $\hilbert$ equipped with an orthogonal basis $(\phi_n)_{n \in \nnum}$. The subspace of $\hilbert$ spanned by finite linear combinations of these vectors,
\begin{equation}\label{eq:d0}
    \domain_0 = \mspan (\phi_n)_{n \in \nnum}\, , 
\end{equation}
is thus dense in $\hilbert$. Physically, we shall interpret each vector $\phi_n$ as the state of a single-mode boson field with exactly $n$ boson excitations---a \textit{Fock state}---so that $\domain_0$ represents the space of all states of the field containing a finite number of excitations.

Correspondingly, we can define the annihilation and creation operators $a,a^\dag$:
\begin{definition}
    \label{def:annihilation-creation}
    The \textit{annihilation operator} $a$ and the \textit{creation operator} $a^\dagger$ are the operators on $\hilbert$ with domains $\domain(a) = \domain(a^\dagger) = \domain_0$ defined as follows: for every $n\in\mathbb{N}$, we set
    \begin{align}\label{eq:action_a}
        a \phi_n & = \sqrt{n}\,  \phi_{n-1} \\\label{eq:action_adag}
        a^\dagger \phi_n & = \sqrt{n+1}\,  \phi_{n+1}.
    \end{align}
\end{definition}
\cref{eq:action_a,eq:action_adag} uniquely determine the action of $a$ and $a^\dag$ on their domain $\domain_0$ by linearity. Notably, $\domain_0$ is invariant under the action of both operators:
\begin{equation}\label{eq:invariance_d0}
    a\domain_0\subset\domain_0,\qquad a^\dag\domain_0\subset\domain_0,
\end{equation}
and thus under the action of arbitrary finite powers of both operators. 
Besides, one readily checks that the operator $a^\dag a:\domain_0\rightarrow\hilbert$ acts as
\begin{equation}
    a^\dag a\phi_n=n\phi_n,
\end{equation}
thus playing the role of the \textit{number operator} of the field.

As mentioned in \cref{sec:intro}, the operators we will investigate also include an additional term in the form $f(a^\dag a)$, thus uniquely depending on the number of bosons. Let us recall its definition:

\begin{definition}
    \label{def:fn}
    Let $f: \nnum \to \rnum_+$ be a nonnegative function, and consider the number operator $a^\dagger a$ on $\domain_0$.
    Then $f(a^\dagger a)$ is the operator with domain $\domain(f(a^\dagger a)) = \domain_0$ acting as
    \begin{equation}
        f(a^\dagger a) \phi_n = f(n) \phi_n 
    \end{equation}
    for every $n\in\mathbb{N}$.
\end{definition}
  Again, one then uniquely recovers the action of $f(a^\dag a)$ on arbitrary vectors of $\domain_0$ by linearity. 

\begin{definition}
    \label{def:higher-order-squeezing}
    Let $k,l \in \nnum$ with $k>l$, $\xi \in \cnum$, and $f: \nnum \to \rnum_+$.
    A $(k,l)$\textit{-order squeezing operator} is an operator $\Aop$ on $\hilbert$ defined by
    \begin{equation}
        \domain(\Aop) = \domain_0\, , \quad \Aop = \xi (a^\dagger)^k a^l +  \xi^\ast (a^\dagger)^l a^k + f(a^\dagger a)\, .
    \end{equation}
\end{definition}
The operator is well-defined by virtue of the invariance of $\domain_0$ under $a,a^\dag$ (cf.  \cref{eq:invariance_d0}); furthermore, $\Aop\domain_0\subset\domain_0$ as well.

For our purposes, we will need to exactly determine the action of such operators. To this end, let us introduce the following coefficients:
\begin{definition}
    \label{def:beta}
    Given $n,k,l \in \nnum$, we define
    \begin{equation}
        \beta^{kl}_n = \sqrt{(n-l+1,l)(n-l+1,k)} \, , 
    \end{equation}
    where, for every $x,s\in\mathbb{N}$, $(x,s)$ is the \textit{Pochhammer symbol} given by
    \begin{equation}\label{eq:pochhammer}
        (x,s) = \prod_{i=1}^s(x+i-1) = x (x+1) \dots (x+s-1),
    \end{equation}
    and we use the convention $(x,s) = 0$ whenever $x$ is a negative integer.
\end{definition}
These coefficients fulfill the following properties:
\begin{lemma}
    \label{lem:beta-sym}
    Let $n,k,l \in \nnum$.
    Then    
    \begin{equation}\label{eq:symmetry}
        \beta^{kl}_{n-k+l} = \beta^{lk}_n \quad \mathrm{and} \quad  \beta^{lk}_{n-l+k} = \beta^{kl}_n \, .
    \end{equation}
    Furthermore, $\beta^{kl}_n =0$ for all $n < l$, and $\beta^{lk}_n = 0$ for all $n < k$.
\end{lemma}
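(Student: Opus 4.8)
The plan is to prove all four assertions by direct substitution into the explicit product form of the Pochhammer symbols from \cref{eq:pochhammer}, exploiting commutativity of the product for the two symmetry identities and the vanishing convention of \cref{def:beta} for the boundary statements. There is no deep obstacle here: the entire content is bookkeeping with index shifts, and the only point requiring care is the borderline argument where the first Pochhammer factor equals $0$ rather than a negative integer.

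First I would record the effect of swapping the superscripts. By \cref{def:beta}, replacing $k\leftrightarrow l$ throughout $\beta^{kl}_n=\sqrt{(n-l+1,l)(n-l+1,k)}$ gives
\begin{equation}
    \beta^{lk}_n=\sqrt{(n-k+1,k)(n-k+1,l)}\, .
\end{equation}
For the first symmetry relation I substitute $n\mapsto n-k+l$ into the definition of $\beta^{kl}_n$; since $(n-k+l)-l+1=n-k+1$, both Pochhammer arguments become $n-k+1$, so that
\begin{equation}
    \beta^{kl}_{n-k+l}=\sqrt{(n-k+1,l)(n-k+1,k)}\, ,
\end{equation}
which coincides with $\beta^{lk}_n$ after reordering the two factors. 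The second relation follows in exactly the same way: substituting $n\mapsto n-l+k$ into $\beta^{lk}_n$ and using $(n-l+k)-k+1=n-l+1$ returns $\sqrt{(n-l+1,k)(n-l+1,l)}=\beta^{kl}_n$.

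For the vanishing statements I consider $\beta^{kl}_n$ with $n<l$ (this is vacuous if $l=0$, so I may assume $l\geq1$), and focus on the first factor $(n-l+1,l)$, whose argument satisfies $n-l+1\leq0$. I split into two sub-cases. If $n=l-1$ the argument is exactly $0$, and the explicit product $(0,l)=0\cdot1\cdots(l-1)$ vanishes on account of its first factor. If $n<l-1$ the argument is a negative integer, and $(n-l+1,l)=0$ by the convention adopted in \cref{def:beta}. In either case the radicand vanishes, hence $\beta^{kl}_n=0$. The statement $\beta^{lk}_n=0$ for $n<k$ is proved verbatim with $k$ in place of $l$ (here $k\geq1$ automatically, since $k>l\geq0$).

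The \emph{only} subtlety—hardly an obstacle—is that the borderline argument $n-l+1=0$ is \textbf{not} covered by the negative-integer convention, yet the product still vanishes because of the zero factor; keeping the two sub-cases separate makes this transparent and avoids any appeal to the convention where it does not apply.
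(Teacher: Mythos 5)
Your proof is correct and follows essentially the same route as the paper's: direct substitution of the shifted index into the definition of $\beta^{kl}_n$ for the two symmetry identities, and inspection of the first Pochhammer factor for the vanishing statements. Your explicit separation of the borderline case $n=l-1$ (where the argument is $0$ and the product vanishes through its first factor rather than via the negative-integer convention) is a small refinement over the paper's terser ``$n-l+1\leq0$ and therefore $(n-l+1,l)=0$'', but it does not change the argument.
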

\begin{proof}
   \cref{eq:symmetry} follows from a direct computation:
    \begin{align}
        \beta^{kl}_{n-k+l} & = \sqrt{(n-k+l-l+1,l)(n-k+l-l+1,k)} = \beta^{lk}_n \, , \\
        \beta^{lk}_{n-l+k} & = \sqrt{(n-l+k-k+1,k)(n-l+k-k+1,l)} = \beta^{kl}_n \, .
    \end{align}
    If $l<n$, $n-l+1\leq 0$ and therefore $(n-l+1,l) = 0$.
    Thus, $\beta^{kl}_n = 0$ for $l < n$.
    Exchanging $k$ and $l$, we obtain $\beta^{lk}_n = 0$ for $n < k$. 
\end{proof}

\begin{lemma}
    \label{lem:a-action}
    Let $k,l\in\mathbb{N}$ with $k>l$, $\xi\in\mathbb{C}$, $f:\nnum\rightarrow\mathbb{R}_+$, and $\Aop$ the corresponding $(k,l)$-squeezing operator. Then, for all $n\in\mathbb{N}$,
    \begin{equation}\label{eq:action}
        \Aop \phi_n = \xi \beta^{kl}_n  \phi_{n+(k-l)} +f(n)\phi_n +  \xi^\ast \beta^{lk}_n  \phi_{n-(k-l)}\, .
    \end{equation}
\end{lemma}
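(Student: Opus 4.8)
The plan is to prove \cref{eq:action} by direct computation: I would evaluate each of the three summands of $\Aop$ on a generic basis vector $\phi_n$ and then reassemble the result. The diagonal term is immediate, since $f(a^\dag a)\phi_n = f(n)\phi_n$ by \cref{def:fn}, so the content lies in the two off-diagonal squeezing terms, which I would handle by iterating the elementary rules \cref{eq:action_a,eq:action_adag}.

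For the first term I would apply $a$ successively $l$ times, collecting the scalar prefactors into the Pochhammer symbol of \cref{eq:pochhammer}, to obtain $a^l\phi_n = \sqrt{(n-l+1,l)}\,\phi_{n-l}$; then $k$ applications of $a^\dag$ give $(a^\dag)^k\phi_{n-l} = \sqrt{(n-l+1,k)}\,\phi_{n-l+k}$. Multiplying the two scalar factors yields exactly $\beta^{kl}_n = \sqrt{(n-l+1,l)(n-l+1,k)}$ as in \cref{def:beta}, and since $k>l$ we have $n-l+k = n+(k-l)$, producing the first term of \cref{eq:action}. The second squeezing term is treated symmetrically: $k$ annihilations give $a^k\phi_n = \sqrt{(n-k+1,k)}\,\phi_{n-k}$, followed by $l$ creations $(a^\dag)^l\phi_{n-k} = \sqrt{(n-k+1,l)}\,\phi_{n-k+l}$; the product of scalars is $\sqrt{(n-k+1,k)(n-k+1,l)}$, which is precisely $\beta^{lk}_n$ (the defining expression for $\beta^{kl}_n$ with $k$ and $l$ interchanged), while $n-k+l = n-(k-l)$. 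Summing the three contributions and attaching the coefficients $\xi$ and $\xi^\ast$ gives \cref{eq:action}.

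The only delicate point—and the main, if modest, obstacle—is the treatment of the boundary cases in which a string of annihilation operators runs past the vacuum. For $n<l$ the vector $a^l\phi_n$ vanishes because the iteration encounters $a\phi_0 = 0$, so that the putative ket $\phi_{n-l}$ with negative index never actually appears; similarly $a^k\phi_n = 0$ for $n<k$. I would verify that these are exactly the ranges in which the corresponding scalar prefactors vanish, under the convention $(x,s)=0$ for negative integer $x$ (together with the empty-product and zero-factor cases covering $l=0$ and $n=l-1$), so that the right-hand side of \cref{eq:action} is consistently zero there. This matches the vanishing statements $\beta^{kl}_n = 0$ for $n<l$ and $\beta^{lk}_n = 0$ for $n<k$ already recorded in \cref{lem:beta-sym}, which I would cite to close the argument.
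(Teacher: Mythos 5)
Your proposal is correct and follows essentially the same route as the paper's proof: a direct computation iterating \cref{eq:action_a,eq:action_adag} to express $a^r\phi_n$ and $(a^\dag)^r\phi_n$ via Pochhammer symbols, assembling the products into $\beta^{kl}_n$ and $\beta^{lk}_n$, and handling the boundary cases $n<l$ and $n<k$ by the vanishing of the corresponding coefficients as recorded in \cref{lem:beta-sym}. No gaps.
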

\begin{remark}\label{rem:abuse}
    Note that \cref{eq:action} involves a slight abuse of notation: it contains $\phi_n$ with possibly negative values of the index $n$ up to $-(k-l)$, which are not defined. However, such terms are always zero since the corresponding multiplicative coefficients $\beta^{lk}_{n}$ vanish by \cref{lem:beta-sym}. This abuse of notation enables us to write the action of $A_{k,l}(f)$ on Fock states in a unified fashion without explicitly distinguishing between the cases $n<k-l$ and $n\geq k-l$. Equivalently, one could adopt the convention $\phi_n=0$ for $n<0$.
\end{remark}
\begin{proof}
    Let $n,r\in \nnum$.
    Applying~\cref{def:annihilation-creation} $r$ times, we obtain
    \begin{align}
        a^r \phi_n & = 0,&&    n<r \, ; \\
        a^r \phi_n & = \sqrt{n (n-1) \dots (n-r+1)}\, \phi_{n-r}  = \sqrt{(n-r+1,r)} \, \phi_{n-r}, &&  n \geq r\, ; \\
        (a^\dagger)^r \phi_n & = \sqrt{(n+1) (n+2) \dots  (n+r)}\, \phi_{n+r}  = \sqrt{(n+1,r)}\,  \phi_{n+r}, &&  n \in \nnum\, .
    \end{align}
    Thus, using~\cref{def:beta}, we get
    \begin{align}
        \xi (a^\dagger)^k a^l \phi_n & = 0, && n<l \, ; \\
        \xi (a^\dagger)^k a^l \phi_n & = \xi \sqrt{(n-l+1,l)(n-l+1,k)} \phi_{n-l+k} = \xi \beta^{kl}_n \phi_{n-l+k}, && n \geq l \, .
    \end{align}
    As $\beta^{kl}_n = 0$ for $n < l$ by \cref{lem:beta-sym}, the previous statement can thus be summarized as
    \begin{equation}
        \xi (a^\dagger)^k a^l \phi_n = \xi \beta_n^{kl} \phi_{n-l+k} \, .
    \end{equation}
Similarly, $\beta^{lk}_n = 0$ for $n < k$, whence, recalling \cref{rem:abuse}, we get
\begin{equation}
        \xi^\ast (a^\dagger)^l a^k \phi_n = \xi^\ast \sqrt{(n-k+1,k)(n-k+1,l)} \phi_{n-k+l} = \xi^\ast \beta^{lk}_n \phi_{n-k+l}\, .
    \end{equation}
    The claim then follows from these expressions and $f(a^\dag a)\phi_n=f(n)\phi_n$.
\end{proof}
\begin{proposition}
    \label{prop:a-symmetric}
    For every $k>l\in\mathbb{N}$, $\xi\in\mathbb{C}$, and $f:\nnum\rightarrow\mathbb{R}_+$, the $(k,l)$-squeezing operator $\Aop$ is symmetric.
\end{proposition}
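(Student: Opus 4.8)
The plan is to verify the defining identity of a symmetric operator, namely $\langle \Aop\psi, \chi\rangle = \langle \psi, \Aop\chi\rangle$ for all $\psi,\chi\in\domain_0$, directly on the basis. Since $\domain_0$ is the linear span of the $\phi_n$ and both sides are sesquilinear in $(\psi,\chi)$, it suffices to prove the identity for $\psi=\phi_m$ and $\chi=\phi_n$ with $m,n\in\nnum$ arbitrary. Equivalently, writing $A_{mn}=\langle\phi_m,\Aop\phi_n\rangle$ for the matrix elements in the orthonormal basis $(\phi_n)$, the claim reduces to showing that this matrix is Hermitian, i.e. $A_{mn}=\overline{A_{nm}}$ for all $m,n$.

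Next I would read off the matrix elements from the explicit action established in \cref{lem:a-action}. Setting $d=k-l>0$, \cref{eq:action} together with orthonormality gives
\begin{equation}
A_{mn}=\xi\,\beta^{kl}_n\,\delta_{m,n+d}+f(n)\,\delta_{m,n}+\xi^\ast\,\beta^{lk}_n\,\delta_{m,n-d},
\end{equation}
so that the only nonzero entries lie on the main diagonal and on the two diagonals shifted by $\pm d$. On the main diagonal one has $A_{nn}=f(n)$, which is real since $f$ takes values in $\rnum_+$; hence $A_{nn}=\overline{A_{nn}}$ and the diagonal part is automatically Hermitian. For indices with $m-n\notin\{-d,0,d\}$ both $A_{mn}$ and $A_{nm}$ vanish and the identity holds trivially, so only the two off-diagonal bands require attention.

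It then remains to match these bands, which is the one genuine step of the argument. Reading off the entries gives $A_{n+d,n}=\xi\,\beta^{kl}_n$ and $A_{n,n+d}=\xi^\ast\,\beta^{lk}_{n+d}$; because the coefficients $\beta$ are real, the Hermiticity condition $A_{n+d,n}=\overline{A_{n,n+d}}$ amounts exactly to the identity $\beta^{kl}_n=\beta^{lk}_{n+d}=\beta^{lk}_{n-l+k}$, which is precisely the second relation in \cref{lem:beta-sym}. Note that checking this single band for all $n$ already covers the other one: relabelling $n\mapsto n-d$ turns it into $A_{n,n-d}=\overline{A_{n-d,n}}$. I therefore expect the entire content of the proposition to be encoded in the symmetry relations satisfied by the $\beta$-coefficients, with everything else—sesquilinearity, reality of $f$, orthonormality—being bookkeeping.

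A minor point worth keeping in mind is the abuse of notation flagged in \cref{rem:abuse}: the formula for $A_{mn}$ formally involves $\phi_n$ with negative index, but such terms never contribute because the accompanying $\beta$-coefficients vanish by \cref{lem:beta-sym}. Consequently no separate treatment of small $n$ is needed, and the computation above is uniform in $n$.
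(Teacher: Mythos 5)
Your proposal is correct and follows essentially the same route as the paper: both compute the matrix elements $\braket{\phi_m,\Aop\phi_n}$ from \cref{lem:a-action} and reduce Hermiticity of the off-diagonal bands to the symmetry relations $\beta^{lk}_{n-l+k}=\beta^{kl}_n$ of \cref{lem:beta-sym}, with the reality of $f$ handling the diagonal and (sesqui)linearity extending the identity to all of $\domain_0$. The paper phrases this as a single chain of equalities rewriting the Kronecker deltas rather than as an explicit band-by-band check, but the content is identical.
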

\begin{proof}
    Let $n,m \in \nnum$.
    Using~\cref{lem:a-action,lem:beta-sym}, we obtain
    \begin{align}
        \braket{\phi_m, \Aop\phi_n} & = \braket{\phi_m,\xi \beta^{kl}_n \phi_{n-l+k}+f(n) \phi_n +\xi^\ast \beta^{lk}_n \phi_{n-k+l}} \\
        & = \xi \beta^{kl}_n \delta_{m,n-l+k}+f(n)\delta_{mn} + \xi^\ast \beta^{lk}_n \delta_{m,n-k+l} \\
        & = \xi \beta^{kl}_{m-k+l} \delta_{m-k+l,n}+f(n)\delta_{mn} + \xi^\ast \beta^{lk}_{m-l+k} \delta_{m-l+k,n} \\
        & = \braket{\xi^\ast \beta^{lk}_{m} \phi_{m-k+l}+f(n)\phi_m + \xi \beta^{kl}_{m} \phi_{m-l+k},\phi_n } = \braket{\Aop \phi_m,\phi_n}\,.
    \end{align}
    By linearity, $\braket{\varphi,\Aop\psi} = \braket{\Aop \varphi,\psi}$ for all $\varphi,\psi \in \domain_0$.
\end{proof}

As anticipated, our main goal will be to determine, for different parameter regimes, whether $\Aop$ is essentially self-adjoint---and, if not, whether it admits self-adjoint extensions. In preparation for this, let us determine bounds for $\beta^{kl}_n$:

\begin{lemma}
    \label{lem:beta-bounds}
    Let $k > l \in \nnum$, $n\in\mathbb{N}$ and $\beta^{kl}_n$ as per~\cref{def:beta}.
    Then, for every pair of positive numbers $C_1<1$ and $C_2 > 1$, there exists $N \in \nnum$  such that, for all $n \geq N$,
    \begin{equation}
        C_1 n^{(k+l)/2} \leq \beta^{lk}_n < \beta^{kl}_n \leq C_2 n^{(k+l)/2}.
    \end{equation}
\end{lemma}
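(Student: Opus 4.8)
The plan is to reduce the statement to elementary estimates on the $k+l$ linear factors that constitute the two Pochhammer products. Set $d = k-l > 0$. From \cref{def:beta} we have
\begin{equation}
  (\beta^{kl}_n)^2 = (n-l+1,l)\,(n-l+1,k), \qquad (\beta^{lk}_n)^2 = (n-k+1,k)\,(n-k+1,l),
\end{equation}
so each of $(\beta^{kl}_n)^2$ and $(\beta^{lk}_n)^2$ is a product of exactly $k+l$ positive integer factors (once $n\ge k$), every one of the form $n+O(1)$. More precisely, all factors of $\beta^{kl}_n$ lie in $[n-l+1,\,n+d]$, and all factors of $\beta^{lk}_n$ lie in $[n-k+1,\,n]$. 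This already indicates that both quantities behave like $n^{(k+l)/2}$, and the two outer inequalities should follow by bounding all factors by a common endpoint.

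For the upper bound, since each of the $k+l$ factors of $\beta^{kl}_n$ is at most $n+d$, I would estimate $\beta^{kl}_n \le (n+d)^{(k+l)/2} = n^{(k+l)/2}\,(1+d/n)^{(k+l)/2}$; as $(1+d/n)^{(k+l)/2}\to 1$, the right-hand side drops below $C_2\,n^{(k+l)/2}$ once $n$ passes an explicit threshold depending only on $C_2,k,l$. Dually, for the lower bound I would use that, for $n\ge k$, every factor of $\beta^{lk}_n$ is at least $n-k+1>0$, so $\beta^{lk}_n \ge (n-k+1)^{(k+l)/2} = n^{(k+l)/2}\,(1-(k-1)/n)^{(k+l)/2}$, whose prefactor tends to $1>C_1$; hence $\beta^{lk}_n \ge C_1\,n^{(k+l)/2}$ for large $n$. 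Taking $N$ to be the maximum of these two thresholds and of $k$ settles both outer inequalities at once.

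The core of the argument is the strict middle inequality $\beta^{lk}_n < \beta^{kl}_n$. I would prove it by comparing the two multisets of factors directly. Sorted in increasing order, the factors of $(\beta^{kl}_n)^2$ are the values $n-l+1,\dots,n$ each appearing twice, followed by $n+1,\dots,n+d$ each once; those of $(\beta^{lk}_n)^2$ are $n-k+1,\dots,n-d$ each twice, followed by $n-d+1,\dots,n$ each once. A direct check shows that the two sorted lists are related by a uniform shift: the $i$-th smallest factor of $(\beta^{kl}_n)^2$ equals the $i$-th smallest factor of $(\beta^{lk}_n)^2$ plus exactly $d=k-l$. Provided $n\ge k$, so that all factors are positive, this factorwise inequality multiplies up to $(\beta^{lk}_n)^2 < (\beta^{kl}_n)^2$, and taking square roots gives the claim.

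The main obstacle is precisely this middle inequality. Because both $\beta^{kl}_n$ and $\beta^{lk}_n$ are asymptotic to $n^{(k+l)/2}$, their ratio tends to $1$, so the strict ordering cannot be read off from the leading-order asymptotics and genuinely depends on the exact $+d$ shift between the two factor multisets. The delicate points are therefore to determine the multiplicities in each multiset correctly and to verify that the sorted, term-by-term matching is indeed monotone; once this is in place, the positivity of all factors for $n\ge k$ converts the factorwise comparison into the desired product inequality.
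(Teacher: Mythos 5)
Your proposal is correct and follows essentially the same route as the paper: the two outer bounds are obtained exactly as in the paper's proof by bounding every Pochhammer factor by the extreme endpoint ($n-k+1$ from below, $n+O(1)$ from above) and absorbing the $(1+O(1/n))^{(k+l)/2}$ correction into $C_1,C_2$ for large $n$, and the strict middle inequality is again a factorwise comparison valid for $n\ge k$. The only cosmetic difference is that you establish $\beta^{lk}_n<\beta^{kl}_n$ by sorting the two multisets of $k+l$ factors and exhibiting a uniform shift by $k-l$, whereas the paper gets the same conclusion more economically from the monotonicity $(x,s)<(y,s)$ for $0<x<y$ applied to the pair of Pochhammer symbols; your shift computation checks out, so this is a valid (if slightly more laborious) packaging of the same idea.
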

\begin{proof}
    For every $x<y \in \nnum$ and $s \in \nnum$, the following relation between the Pochhammer symbols $(x,s)$ and $(y,s)$ holds:
    \begin{equation}
        (x,s) = \prod_{i = 1}^s (x+i-1) < \prod_{i = 1}^s (y+i-1) = (y,s) .
    \end{equation}
   Consequently, as $n-k+1<n-l+1$ by assumption, 
    \begin{equation}
        \beta^{lk}_n = \sqrt{(n-k+1,k)(n-k+1,l)} < \sqrt{(n-l+1,l)(n-l+1,k)} = \beta^{kl}_n
    \end{equation}
    holds for all $n \geq k$.    
    Furthermore, for $x,s \in \nnum$, the following upper and lower bounds on $(x,s)$ hold:
    \begin{equation}
        \label{proofeq:lem-beta-bounds-pochhammer}
        x^s \leq (x,s) = \prod_{i =1}^s (x+i-1) \leq (x+s-1)^s \, .
    \end{equation}
    Thus, we obtain
    \begin{equation}
        \beta^{lk}_n = \sqrt{(n-k+1,k)(n-k+1,l)} \geq (n-k+1)^{(k+l)/2} = n^{(k+l)/2} \left(\frac{n-k+1}{n}\right)^{(k+l)/2} \, .
    \end{equation}
    But $\lim_{n \to \infty} (n-k+1)/n = 1$, hence for all $C_1 < 1$ there exists a $N_1 \in \nnum$ such that, for all $n \geq N_1$,
    \begin{equation}
        \left(\frac{n-k+1}{n}\right)^{(k+l)/2} \geq C_1 \, , 
    \end{equation}
    and therefore
    \begin{equation}
        \beta^{lk}_n \geq n^{(k+l)/2} \left(\frac{n-k+1}{n}\right)^{(k+l)/2} \geq C_1  n^{(k+l)/2} \, .
    \end{equation}
    Similarly, $\lim_{n \to \infty} (n+k)/n = 1$, and thus for all $C_2 > 1$ there exists $N_2 \in \nnum$ such that for all $n \geq N_2$
    \begin{equation}
        \left(\frac{n+k}{n}\right)^{(k+l)/2} \leq C_2 \, , 
    \end{equation}
    and using~\cref{proofeq:lem-beta-bounds-pochhammer} we obtain
    \begin{equation}
        \beta^{kl}_n = \sqrt{(n-l+1,l)(n-l+1,k)} \leq (n+k)^{(k+l)/2} = n^{(k+l)/2}  \left(\frac{n+k}{n}\right)^{(k+l)/2} \leq C_2 n^{(k+l)/2} \, . 
    \end{equation}
    The claimed estimate thus holds for $N = \max(N_1,N_2)$.
\end{proof}

\section{Essentially self-adjoint squeezing operators}\label{sec:ess-sa}

Let us begin by considering the cases $k+l\leq3$, i.e. $l=0$ and $k=1,2$. In the absence of the additional free field term $f(a^\dag a)$, one obtains, respectively for $k=1$ and $k=2$, the displacement operator
\begin{equation}
    A_{1,0}=\xi a^\dag+\xi^* a,
\end{equation}
which is essentially self-adjoint, and the standard squeezing operator
\begin{equation}
    A_{2,0}=\xi (a^\dag)^2+\xi^* a^2,
\end{equation}
which is also known to be essentially self-adjoint on $\domain_0$; this can be proven by using Nelson's analytic vector theorem, 
see e.g. Ref.~\cite{gorska-squeezingarbitraryorder-2014}.
For a review on Nelson's analytic vector theorem and generalizations, see e.g.~\cite[Theorem X.39]{reed-mmmp2-fourier-1975}.

As a first result, we show that adding a field term $f(a^\dagger a)$ which grows in the same order as $(a^\dagger)^2$ does not prevent $\Aop$ from being essentially self-adjoint:
\begin{proposition}
    \label{prop:k2}
    Let $ k \leq 2, \,\xi \in \cnum$, and $f: \nnum \to \rnum_+$.
    Suppose that there exists $C > 0$ such that $f(n) \leq C \beta^{k0}_n$ for all $n \in \nnum$.
    Then the $(k,0)$-squeezing operator $A_{k,0}(f)$ is essentially self-adjoint.
\end{proposition}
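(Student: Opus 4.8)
The plan is to show that $A:=A_{k,0}(f)$ is essentially self-adjoint by exhibiting a dense set of analytic vectors and invoking Nelson's analytic vector theorem~\cite[Theorem X.39]{reed-mmmp2-fourier-1975}, exactly as in the base case $f=0$ recalled just above the statement. Since $A$ is symmetric by \cref{prop:a-symmetric} and the Fock states $(\phi_m)_{m\in\nnum}$ form a total set, it suffices to prove that each $\phi_m$ is an analytic vector for $A$, i.e. that $\sum_{n\geq0}\norm{A^n\phi_m}\,t^n/n!<\infty$ for some $t>0$. The role of the hypothesis $f(n)\leq C\beta^{k0}_n$ is precisely to let the additional field term be absorbed into the same estimate that controls the squeezing term.

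First I would record a uniform bound on the one-step coefficients. By \cref{lem:a-action} with $l=0$,
\begin{equation}
    A\phi_j=\xi\beta^{k0}_j\,\phi_{j+k}+f(j)\,\phi_j+\xi^\ast\beta^{0k}_j\,\phi_{j-k},
\end{equation}
so applying $A$ to a Fock state produces at most three Fock states, shifted by $0,\pm k$. Combining \cref{lem:beta-bounds} (which gives $\beta^{k0}_j,\beta^{0k}_j\leq C_2\,j^{k/2}$ for large $j$) with $f(j)\leq C\beta^{k0}_j$, and absorbing the finitely many small indices into a constant, there is $K>0$ such that each of the three coefficients is bounded by $K(1+j)^{k/2}$ for every $j\in\nnum$.

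Next I would iterate. Starting from $\phi_m$, after $n$ applications of $A$ one only reaches Fock states of index at most $m+nk$, so every coefficient encountered is at most $K(1+m+nk)^{k/2}$; since there are at most $3^n$ sign-choice ``paths'' of length $n$, the triangle inequality gives
\begin{equation}
    \norm{A^n\phi_m}\leq (3K)^n\,(1+m+nk)^{nk/2}.
\end{equation}
Feeding this into the analytic-vector series and using $1+m+nk\leq(1+m+k)\,n$ for $n\geq1$ together with Stirling's bound $n!\geq(n/e)^n$ yields, for $k\leq2$ (so that $nk/2\leq n$ and the base exceeds $1$),
\begin{equation}
    \frac{\norm{A^n\phi_m}}{n!}\,t^n\leq\bigl(3K(1+m+k)\,e\,t\bigr)^n,
\end{equation}
which is summable for $t$ small enough. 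Hence every $\phi_m$ is analytic and Nelson's theorem applies.

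The main obstacle—and the exact point where $k\leq2$ is used—is this last step: the coefficients grow like a power $n^{nk/2}$ of the step count, and only when $nk/2\leq n$, i.e. $k\leq2$, does $n!\sim(n/e)^n$ dominate and produce a positive radius of convergence. For $k\geq3$ one has $n^{nk/2}/n!\to\infty$ faster than geometrically, the series diverges for every $t>0$, and the argument genuinely breaks down—consistent with the non-essential-self-adjointness established later in the squeezing-dominated regime. I would also note that the same conclusion can be obtained through Nelson's commutator theorem with comparison operator $N+1$, where $N=a^\dag a$: the bound $f(n)\leq C\beta^{k0}_n$ furnishes the relative bound $\norm{A\psi}\leq c\norm{(N+1)\psi}$, while $[A,N]=-k\xi(a^\dag)^k+k\xi^\ast a^k$ is controlled in quadratic form by $\langle\psi,(N+1)\psi\rangle$ precisely when $k\leq2$.
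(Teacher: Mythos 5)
Your proposal is correct and follows essentially the same route as the paper: both show that each Fock state is an analytic vector by expanding $A_{k,0}(f)^n\phi_m$ into $3^n$ shift-paths, bounding every one-step coefficient by a power $\lesssim(1+j)^{k/2}$ of the current index (using $f(j)\leq C\beta^{k0}_j$ to absorb the field term), and observing that for $k\leq2$ the resulting growth $n^{nk/2}\leq n^n$ is tamed by $n!$, so Nelson's analytic vector theorem applies. The only differences are cosmetic (you close the estimate with Stirling's bound where the paper uses a binomial expansion plus the ratio test), and your side remark about Nelson's commutator theorem with comparison operator $N+1$ is a valid alternative, though only sketched.
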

\begin{proof}
    To simplify the notation, we rescale $C$ in such a way that $f(n) \leq C |\xi| \beta^{k0}_n$ for all $n \in \nnum$.
    As $A_{k,0}(f) \domain_0 \subset \domain_0$, the Fock states are in the domain of any power of $A_{k,0}(f)$, i.e. $\phi_n \in \domain(\Aop^s) = \domain_0$ for all $s \in \nnum$. In the following we show that, in fact, each Fock state is an analytic vector for $A_{k,0}(f)$: there exists $t>0$ such that 
    \begin{equation}
        \sum_{s = 0}^{\infty} \frac{\norm{A_{k,0}(f)^s \phi_n }t^s}{s!} < \infty \, .
    \end{equation}
    We begin by noticing that $A_{k,0}(f)^s \phi_n$ can be written as the sum of $3^s$ terms of the kind
    \begin{equation}
        \label{proofeq:analytic-bi-prod}
        \left( \prod_{i = 1}^s b_i \right) \phi_n   \, , \qquad b_i \in \{\xi^\ast a^k,\xi (a^\dagger)^k,f(a^\dagger a)\} \, .
    \end{equation}
    For each $i \in [1,s]$ and $m \in \nnum$, we have 
    \begin{equation}
    \label{proofeq:analytic-terms}
        b_i \phi_m \in \left\{
            \xi  \beta^{k0}_m \phi_{m+k} \, , 
            f(m) \phi_m \,, 
            \xi^\ast \beta^{0k}_m \phi_{m-k} \right\} \, , 
    \end{equation}
    i.e. each $b_i$ shifts a basis vector $\phi_m$ by at most $k \in \{1,2\}$ up or down.
    Furthermore, by assumption $f(m) \leq C |\xi| \beta^{k0}_m$  and, per~\cref{def:beta}, the $\beta^{k0}_m$ are monotonously increasing in $m$.
    Therefore, the biggest multiplicative factor appearing in~\cref{proofeq:analytic-terms} is upper bounded by $C |\xi| \beta^{k0}_{n+sk}$. Consequently, since $k\leq2$,
    \begin{equation}
        \beta^{k0}_{n+sk} \leq \beta^{k0}_{n+2s} = \sqrt{(n+2s+1,k)} \leq \sqrt{(n+2s+1,2)} \leq (n+2s+2)^{2/2} \, . 
    \end{equation}
    Hence, each of the $3^s$ terms from~\cref{proofeq:analytic-bi-prod} is bounded by 
    \begin{equation}
        \norm*{\prod_{i = 1}^s b_i \phi_n} \leq  C^s |\xi|^s (n+2s+2)^s \, ,
    \end{equation}
    and applying the triangle inequality we obtain
    \begin{align}
        \norm{A_{k,0}(f)^s \phi_n} &  \leq (3C |\xi|)^s (n+2s+2)^s = (3C |\xi|)^s \,  \sum_{p=0}^{s}\binom{s}{p}(n+2)^p (2s)^{s-p} \\
        &  \leq (3C |\xi|)^s 2^s (n+2)^s (2s)^s = (12 (n+2) C |\xi| )^s s^s \, . 
    \end{align}
    Choosing $t = (12 C (n+2) |\xi|)^{-1} \e^{-2}$, we thus get
    \begin{equation}
        \sum_{s = 0}^{\infty} \frac{\norm{A_{k,0}(f)^s \phi_n }t^s}{s!} \leq \sum_{s = 0}^{\infty}\frac{\e^{-2s}s^s}{s!} \, . 
    \end{equation}
    To determine whether the series at the right-hand side converges, define $c_s = \e^{-2s} s^s/s!$.
    Then
    \begin{equation}
        \lim_{s \to \infty} \frac{c_{s+1}}{c_s}  = \e^{-2} \lim_{s \to \infty}\frac{(s+1)^{(s+1)} s!}{s^s (s+1)!} = \e^{-2} \lim_{s \to \infty}\frac{(s+1)^s }{s^s} = \e^{-2} \lim_{s \to \infty}\left(1+\frac{1}{s}\right)^s = \e^{-1} < 1\, , 
    \end{equation}
    hence the series $\sum_{s=0}^{\infty}c_s$ converges by the ratio test.

We proved that every Fock state $\phi_n$ is an analytic vector of $A_{k,0}(f)$. As such, every element of the dense space $\domain_0$ is also analytic. By Nelson's analytic vector theorem, $A_{k,0}(f)$ is essentially self-adjoint.
\end{proof}

A simplified version of~\cref{prop:k2} with more explicit assumptions on $f(n)$ can be readily obtained by taking into account the bounds for $\beta^{k0}_n$ from~\cref{lem:beta-bounds}:
\begin{corollary}\label{cor:slower}
    Let $k \leq 2,\, \xi \in \cnum$, and $f: \nnum \to \rnum_+$.
    Suppose that there exists $C > 0$ and $N \in \nnum$ such that $f(n) \leq C n^{k/2}$ for all $n \geq N$.
    Then $A_{k,0}(f)$ is essentially self-adjoint.
\end{corollary}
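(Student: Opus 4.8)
The plan is to reduce the claim to \cref{prop:k2}: it suffices to promote the asymptotic hypothesis $f(n)\leq C n^{k/2}$ (valid for $n\geq N$) to a bound $f(n)\leq C'\beta^{k0}_n$ holding for \emph{every} $n\in\nnum$, since \cref{prop:k2} then applies verbatim and yields essential self-adjointness of $A_{k,0}(f)$.

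The first step is the elementary comparison $\beta^{k0}_n\geq n^{k/2}$. This is immediate from \cref{def:beta}: with $l=0$ one has $\beta^{k0}_n=\sqrt{(n+1,0)(n+1,k)}=\sqrt{(n+1,k)}\geq(n+1)^{k/2}\geq n^{k/2}$, using $(n+1,k)=(n+1)(n+2)\cdots(n+k)\geq(n+1)^k$. Alternatively, the lower bound in \cref{lem:beta-bounds}, namely $\beta^{k0}_n\geq\beta^{0k}_n\geq C_1 n^{k/2}$ for any $C_1<1$ and all large $n$, would serve equally well up to an adjustment of the constant; this is the route suggested by the way the corollary is phrased.

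Combining the comparison with the hypothesis gives $f(n)\leq C n^{k/2}\leq C\beta^{k0}_n$ for all $n\geq N$. It then remains to absorb the finitely many indices $n<N$ into the constant. The key point is that $\beta^{k0}_n=\sqrt{(n+1,k)}$ is strictly positive for every $n\in\nnum$ --- in particular $\beta^{k0}_0=\sqrt{k!}>0$ --- so each ratio $f(n)/\beta^{k0}_n$ is finite. Taking $C'=\max\{C,\ \max_{0\leq n<N} f(n)/\beta^{k0}_n\}$ yields $f(n)\leq C'\beta^{k0}_n$ for all $n\in\nnum$, which completes the reduction and lets \cref{prop:k2} conclude.

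I do not anticipate any real difficulty: the whole argument is a routine rescaling reduction to \cref{prop:k2}. The only place calling for a moment of attention is the exceptional set $n<N$, and especially the index $n=0$, where the power $n^{k/2}$ degenerates to $0$ and the hypothesis carries no information; this is resolved simply by invoking the strict positivity of $\beta^{k0}_n$ and enlarging the constant.
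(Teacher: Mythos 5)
Your proof is correct and follows essentially the same route as the paper: establish $n^{k/2}\leq\beta^{k0}_n$ directly from \cref{def:beta}, then absorb the finitely many indices $n<N$ into the constant via $\max_{n< N}f(n)/\beta^{k0}_n$ (using the strict positivity of $\beta^{k0}_n$), and conclude by \cref{prop:k2}. No gaps.
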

\begin{proof}
    By \cref{def:beta}, $n^{k/2} \leq \beta^{k0}_n$ for all $n\in \nnum$.
    Setting $C_1  = \max_{n \leq N} f(n)/\beta^{k0}_n$ and $C_2 = \max(C,C_1)$, we thus have $f(n) \leq C_2 \beta^{k0}_n$ for all $n \leq N$, and 
    \begin{equation}
        f(n) \leq C n^{k/2} \leq C \beta^{k0}_n \leq C_2 \beta^{k0}_n \quad \forall n \geq N,
    \end{equation}
    whence the claim follows by~\cref{prop:k2}.
\end{proof}

We showed that, for $l=0$ and $k\leq2$, $\Aop$ is essentially self-adjoint provided that $f(a^\dagger a)$ grows more slowly than the squeezing part of $\Aop$. We will now show that, for arbitrary $k$ and $l$, $\Aop$ is also essentially self-adjoint provided that $f(a^\dagger a)$ grows \textit{faster} with $n$ than the squeezing part. The remaining cases where $f(a^\dagger a)$ grows more slowly but $k+l\geq3$ will be analyzed in \cref{sec:deficiency}.

To this end, we first show that the operator $f(a^\dagger a)$ is essentially self-adjoint on $\domain_0$ for an arbitrary function $f(n)$.
\begin{lemma}
    \label{lem:f-esssa}
    Let $f:\nnum \to \rnum_+$.
    Then the operator $f(a^\dagger a)$ as per~\cref{def:fn} is essentially self-adjoint.
\end{lemma}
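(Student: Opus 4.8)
The plan is to reuse the strategy of \cref{prop:k2}: exhibit a dense set of analytic vectors and invoke Nelson's analytic vector theorem. For $f(a^\dagger a)$ this is almost immediate, because the Fock states are eigenvectors. First I would record that $f(a^\dagger a)$ is symmetric: since $f$ is real-valued, $\braket{\phi_m, f(a^\dagger a)\phi_n} = f(n)\delta_{mn} = \overline{f(m)}\delta_{mn} = \braket{f(a^\dagger a)\phi_m,\phi_n}$, and symmetry on all of $\domain_0$ follows by linearity. (Alternatively, this is the special case $\xi = 0$ of \cref{prop:a-symmetric}.)

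Next, since $f(a^\dagger a)\phi_n = f(n)\phi_n$ and $\domain_0$ is invariant under $f(a^\dagger a)$, iterating gives $f(a^\dagger a)^s\phi_n = f(n)^s\phi_n$ for every $s\in\nnum$, so that $\norm{f(a^\dagger a)^s\phi_n} = |f(n)|^s$. Consequently, for every $t > 0$,
\begin{equation}
  \sum_{s=0}^\infty \frac{\norm{f(a^\dagger a)^s\phi_n}\,t^s}{s!} = \sum_{s=0}^\infty \frac{|f(n)|^s\, t^s}{s!} = \e^{|f(n)|\,t} < \infty,
\end{equation}
so each Fock state $\phi_n$ is an (in fact entire) analytic vector. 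By linearity every element of $\domain_0$ is analytic as well, and since $\domain_0$ is dense, Nelson's analytic vector theorem yields essential self-adjointness.

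There is no genuine obstacle here: the statement is the elementary fact that a diagonal operator with real eigenvalues, defined on the finite span of its eigenbasis, is essentially self-adjoint, and the analytic-vector series collapses to an exponential with no estimates needed. If a self-contained argument avoiding Nelson's theorem is preferred, I would instead compute the deficiency indices directly: identify the adjoint as $f(a^\dagger a)^\ast\psi = \sum_n f(n)\, c_n\, \phi_n$ on the domain $\{\psi = \sum_n c_n\phi_n : \sum_n |f(n)|^2|c_n|^2 < \infty\}$, and observe that $(f(a^\dagger a)^\ast \mp \iu)\psi = 0$ forces $(f(n)\mp\iu)\,c_n = 0$ for all $n$. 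Since $f(n)\in\rnum$, the factor $f(n)\mp\iu$ never vanishes, so $c_n = 0$ for every $n$ and both deficiency subspaces $\Ker(f(a^\dagger a)^\ast \mp \iu)$ are trivial. Either route closes the proof.
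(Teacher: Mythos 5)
Your proof is correct, but it takes a different route from the paper's. The paper disposes of this lemma in one line by citing a standard result (Theorem 2.21 in Teschl's book): a symmetric operator admitting an orthogonal basis of eigenvectors contained in its domain is essentially self-adjoint. You instead verify Nelson's criterion directly, observing that $f(a^\dagger a)^s\phi_n = f(n)^s\phi_n$ makes every Fock state an entire analytic vector, with the series collapsing to $\e^{|f(n)|t}$. Both arguments are sound and essentially effortless here; yours has the merit of being self-contained modulo Nelson's theorem and of matching the technique already deployed in \cref{prop:k2}, which gives the section a more uniform feel, while the paper's citation is shorter and emphasizes that the result is a textbook fact about diagonal operators. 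Your alternative sketch via deficiency indices --- identifying $f(a^\dagger a)^\ast$ as the maximal diagonal multiplication operator and noting that $(f(n)\mp\iu)c_n=0$ forces $c_n=0$ because $f(n)\in\rnum$ --- is in fact essentially the proof of the theorem the paper cites, so it amounts to unpacking their reference. No gaps in any of the three arguments.
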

\begin{proof}
    Clearly $f(a^\dagger a)$ is symmetric and, by construction, all $\phi_n$ are eigenvectors of $f(a^\dagger a)$.
    Thus, $f(a^\dagger a)$ has an orthogonal basis of eigenvectors, and by~\cite[Theorem 2.21]{teschl-mathematicalmethodsquantum-2009} $f(a^\dagger a)$ is essentially self-adjoint.
 \end{proof}
 Next, we show that the operator $\xi (a^\dagger)^k a^l$ is, for sufficiently fast growing $f(n)$, relatively bounded with respect to $f(a^\dagger a)$.
\begin{lemma}
    \label{lem:rel-bounded}
    Let $k > l\in\nnum$, $\xi \in \cnum$, and $f: \nnum \to \rnum_+$. 
    Assume that there exist $\kappa > 0$ and $N \in \nnum$ such that, for all $n \geq N$, $f(n) \geq \kappa |\xi| \beta^{kl}_n$.
    Then the following estimate holds for all $\psi \in \domain_0$:
    \begin{equation}
        \norm{\xi (a^\dagger)^k a^l \psi+\xi^\ast (a^\dagger)^l a^k \psi} \leq \frac{2}{\kappa} \norm{f(a^\dagger a)\psi}+ 2|\xi| \beta^{kl}_{N-1} \norm{\psi}\, .
    \end{equation}
\end{lemma}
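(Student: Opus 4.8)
The plan is to expand a generic $\psi = \sum_n c_n \phi_n \in \domain_0$ (a finite linear combination) and estimate the squeezing part $S := \xi (a^\dagger)^k a^l + \xi^\ast (a^\dagger)^l a^k$ directly, exploiting that $S$ merely shifts Fock states by $\pm(k-l)$. First I would invoke the intermediate identities obtained in the proof of \cref{lem:a-action}, namely $\xi (a^\dagger)^k a^l \phi_n = \xi \beta^{kl}_n \phi_{n+(k-l)}$ and $\xi^\ast (a^\dagger)^l a^k \phi_n = \xi^\ast \beta^{lk}_n \phi_{n-(k-l)}$. Since distinct values of $n$ yield mutually orthogonal target vectors $\phi_{n\pm(k-l)}$, the two norms are computed exactly:
\begin{equation}
\norm{\xi (a^\dagger)^k a^l \psi}^2 = |\xi|^2 \sum_n |c_n|^2 (\beta^{kl}_n)^2, \qquad \norm{\xi^\ast (a^\dagger)^l a^k \psi}^2 = |\xi|^2 \sum_n |c_n|^2 (\beta^{lk}_n)^2 .
\end{equation}

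Next I would apply the triangle inequality $\norm{S\psi} \le \norm{\xi (a^\dagger)^k a^l \psi} + \norm{\xi^\ast (a^\dagger)^l a^k \psi}$ and use the uniform comparison $\beta^{lk}_n \le \beta^{kl}_n$, valid for \emph{all} $n$ (from the strict inequality $\beta^{lk}_n < \beta^{kl}_n$ for $n \ge k$ established in the proof of \cref{lem:beta-bounds}, together with $\beta^{lk}_n = 0$ for $n < k$ from \cref{lem:beta-sym}). This controls both summands by the same quantity, giving
\begin{equation}
\norm{S\psi} \le 2|\xi| \Bigl( \sum_n |c_n|^2 (\beta^{kl}_n)^2 \Bigr)^{1/2} = 2 \Bigl( \sum_n |c_n|^2 (|\xi|\beta^{kl}_n)^2 \Bigr)^{1/2} .
\end{equation}

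The decisive step is to split this sum at the threshold $N$. For $n \ge N$ the hypothesis $|\xi|\beta^{kl}_n \le f(n)/\kappa$ yields $\sum_{n \ge N} |c_n|^2 (|\xi|\beta^{kl}_n)^2 \le \kappa^{-2} \sum_n |c_n|^2 f(n)^2 = \kappa^{-2} \norm{f(a^\dagger a)\psi}^2$. For $n < N$ I would use the monotonicity of $n \mapsto \beta^{kl}_n$ (immediate from \cref{def:beta}, since each Pochhammer factor is increasing) to bound $\beta^{kl}_n \le \beta^{kl}_{N-1}$, so that $\sum_{n < N} |c_n|^2 (|\xi|\beta^{kl}_n)^2 \le (|\xi|\beta^{kl}_{N-1})^2 \norm{\psi}^2$. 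Adding the two contributions and applying subadditivity of the square root, $\sqrt{a+b} \le \sqrt{a} + \sqrt{b}$, converts the combined estimate into the asserted additive form $\tfrac{2}{\kappa}\norm{f(a^\dagger a)\psi} + 2|\xi|\beta^{kl}_{N-1}\norm{\psi}$.

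The argument is essentially bookkeeping once the shift structure of $S$ is made explicit, so I do not anticipate a genuine obstacle. The only points requiring care are the comparison $\beta^{lk}_n \le \beta^{kl}_n$ across all indices (including the small-$n$ regime where $\beta^{lk}_n$ vanishes) and the clean separation of the tail sum, on which the growth hypothesis for $f$ is used, from the finitely many low-index terms, which get absorbed into the $\norm{\psi}$ contribution via monotonicity of $\beta^{kl}_n$. Tracking the index shifts $\pm(k-l)$ correctly and respecting orthogonality of the shifted Fock states are the steps most prone to slips.
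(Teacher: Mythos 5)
Your proposal is correct and follows essentially the same route as the paper: exact computation of the two norms via orthogonality of the shifted Fock states, the comparison $\beta^{lk}_n \leq \beta^{kl}_n$, splitting the sum at $N$ using the growth hypothesis on the tail and monotonicity of $\beta^{kl}_n$ on the head, and subadditivity of the square root. The only (immaterial) difference is the order of operations — you apply the triangle inequality and the $\beta^{lk}_n \leq \beta^{kl}_n$ comparison first and split a single sum, whereas the paper splits each of the two norms separately and applies the triangle inequality last — and both yield the identical constants.
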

\begin{proof}
    Define $b = 2|\xi| \beta^{kl}_{N-1}$ and let $\psi \in \domain_0$.
    Thus, there exists $M \in \nnum$ such that $\psi = \sum_{m=0}^{M} c_m \phi_m $.
    By setting additional coefficients to zero, we can assume $M \geq N$ without loss of generality.
    By~\cref{lem:a-action}, we obtain
    \begin{align}
        \label{proofeq:rel-bound-1}
        \norm{(a^\dagger)^k a^l \psi}^2 & = \sum_{n=l}^{M} |c_n|^2 (\beta^{kl}_n)^2 \, , \\
        \norm{(a^\dagger)^l a^k \psi}^2 & = \sum_{n=k}^{M} |c_n|^2 (\beta^{lk}_n)^2 \, , \\        
        \norm{f(a^\dagger a)\psi}^2 & = \sum_{n=0}^{M} |c_n|^2 f(n)^2 \, .
    \end{align}
    Combining~\cref{lem:beta-bounds} and our growth assumption, we get
    \begin{equation}
        |\xi| \beta^{lk}_n \leq |\xi| \beta^{kl}_n \leq \frac{1}{\kappa} f(n) \quad \forall n \geq N\, .
    \end{equation}
    Furthermore, both $\beta^{kl}_n$ and $\beta^{lk}_n$ are monotonously increasing in $n$, whence
    \begin{align}
        \norm{\xi (a^\dagger)^k a^l \psi}^2 & \leq \sum_{n=l}^{N-1} |c_n|^2 |\xi|^2 (\beta^{kl}_{N-1})^2 + \sum_{n=N}^{M} |c_n|^2 \frac{1}{\kappa^2} f(n)^2 \leq \frac{b^2}{4} \norm{\psi}^2 + \frac{1}{\kappa^2} \norm{f(a^\dagger a) \psi}^2, \\
        \norm{\xi (a^\dagger)^l a^k \psi}^2 & \leq \sum_{n=k}^{N-1} |c_n|^2 |\xi|^2 (\beta^{lk}_{N-1})^2 + \sum_{n=N}^{M} |c_n|^2 \frac{1}{\kappa^2} f(n)^2 \leq \frac{b^2}{4} \norm{\psi}^2 + \frac{1}{\kappa^2} \norm{f(a^\dagger a) \psi}^2,
    \end{align}
    and using the sublinearity of the square root we get
    \begin{align}
        \norm{\xi (a^\dagger)^k a^l \psi} & \leq  \left(\frac{b^2}{4} \norm{\psi}^2 + \frac{1}{\kappa^2} \norm{f(a^\dagger a) \psi}^2\right)^{1/2} \leq \frac{b}{2} \norm{\psi} + \frac{1}{\kappa} \norm{f(a^\dagger a)  \psi} \, , \\
        \norm{\xi^\ast (a^\dagger)^l a^k \psi} & \leq  \left(\frac{b^2}{4} \norm{\psi}^2 + \frac{1}{\kappa^2} \norm{f(a^\dagger a) \psi}^2\right)^{1/2} \leq \frac{b}{2} \norm{\psi} + \frac{1}{\kappa} \norm{f(a^\dagger a) \psi} \, .
    \end{align}
    Applying the triangle inequality, we obtain the claimed result.
\end{proof}
We aim to use~\cref{lem:rel-bounded} to prove that $\Aop$ is essentially self-adjoint as a consequence of the self-adjointness of $f(a^\dag a)$. If $\kappa<2$, then the squeezing term is indeed relatively bounded with relative bound smaller than one, thus the conditions of the Kato--Rellich theorem (see e.g.~\cite[Theorem X.12]{reed-mmmp2-fourier-1975}) are fulfilled and $\Aop$ is essentially self-adjoint. However, we can include the case $\kappa=2$ by using a theorem by Wüst (see e.g.~\cite[Theorem 5.30]{weidmann-linearoperatorshilbert-2013}):
\begin{proposition}
    \label{prop:rel-bounded}
    Let $k > l\in\nnum$, $\xi \in \cnum$, and $f: \nnum \to \rnum_+$. 
    Assume that there exists $\kappa \geq 2$ and $N \in \nnum$ such that, for all $n \geq N$, $f(n) \geq \kappa |\xi| \beta^{kl}_n$.
    Then the $(k,l)$-squeezing operator $\Aop$ is essentially self-adjoint. In particular,
    if $\kappa > 2$, $\Aop$ is bounded from below by $-2|\xi|\beta^{kl}_{N-1}\kappa/(\kappa-2)$.
\end{proposition}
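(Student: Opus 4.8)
The plan is to view $\Aop$ as a perturbation of the field term $f(a^\dagger a)$, which \cref{lem:f-esssa} already shows to be essentially self-adjoint, by the squeezing term $S := \xi(a^\dagger)^k a^l + \xi^\ast (a^\dagger)^l a^k$, and then to invoke a relative-boundedness perturbation theorem. First I would collect the three facts at our disposal. Since $f \geq 0$, the closure $\overline{f(a^\dagger a)}$ is bounded below by $0$, and $\domain_0$ is a core for it by \cref{lem:f-esssa}. The operator $S$ is symmetric on $\domain_0$: it is the instance $f \equiv 0$ of \cref{prop:a-symmetric}. Finally, \cref{lem:rel-bounded} furnishes, for every $\psi \in \domain_0$, the estimate
\begin{equation}
    \norm{S\psi} \leq \frac{2}{\kappa}\,\norm{f(a^\dagger a)\psi} + 2|\xi|\beta^{kl}_{N-1}\,\norm{\psi},
\end{equation}
exhibiting $S$ as $f(a^\dagger a)$-bounded with relative bound $2/\kappa$ and constant $b := 2|\xi|\beta^{kl}_{N-1}$. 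By a standard limiting argument along the core $\domain_0$, this estimate extends to all of $\domain(\overline{f(a^\dagger a)})$, which is exactly the form the perturbation theorems require.

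The argument then splits according to whether the relative bound $2/\kappa$ is strictly below one. If $\kappa > 2$, then $2/\kappa < 1$ and the Kato--Rellich theorem (\cite[Theorem X.12]{reed-mmmp2-fourier-1975}) applies: the sum $f(a^\dagger a) + S$ is self-adjoint on $\domain(\overline{f(a^\dagger a)})$ and essentially self-adjoint on any of its cores. Since $\domain_0$ is such a core and the operator sum restricted to $\domain_0$ is precisely $\Aop$, we conclude that $\Aop$ is essentially self-adjoint. For the lower bound I would use the quantitative part of Kato--Rellich: with lower bound $M = 0$, relative bound $a = 2/\kappa$, and constant $b$, the theorem gives the bound $-b/(1-a)$; substituting $a = 2/\kappa$ and $b = 2|\xi|\beta^{kl}_{N-1}$ reproduces exactly the claimed value $-2|\xi|\beta^{kl}_{N-1}\kappa/(\kappa-2)$.

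The borderline case $\kappa = 2$ is where the only real care is needed: there the relative bound equals $1$, so Kato--Rellich no longer applies. I would instead invoke Wüst's theorem (\cite[Theorem 5.30]{weidmann-linearoperatorshilbert-2013}), which requires only a symmetric perturbation that is relatively bounded with bound $\leq 1$ on a core and still yields essential self-adjointness of the sum on that core, though without an accompanying lower bound. This settles the remaining case and completes the proof. The main obstacle is thus not analytic but organizational: isolating the two regimes $\kappa > 2$ and $\kappa = 2$, matching each to the correct perturbation theorem, and checking that the relative-bound hypothesis from \cref{lem:rel-bounded}, stated on the core $\domain_0$, is transferred correctly to the domain of the closure.
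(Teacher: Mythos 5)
Your proposal is correct and follows essentially the same route as the paper: establishing relative boundedness of the squeezing term with bound $2/\kappa$ via \cref{lem:rel-bounded}, then invoking Wüst's theorem for essential self-adjointness in the borderline case $\kappa=2$ and Kato--Rellich for the quantitative lower bound when $\kappa>2$. The only cosmetic difference is that the paper applies Wüst uniformly for all $\kappa\geq 2$ and reserves Kato--Rellich solely for the lower bound, whereas you split into two cases from the start; the substance is identical.
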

\begin{proof}
    By assumption, there exists $N \in \nnum$ and $\kappa \geq 2$ such that $f(n) \geq \kappa |\xi| \beta^{kl}_n$ for all $n \geq N$.
    Using~\cref{lem:rel-bounded}, there exists $b \in \rnum$ such that 
    \begin{equation}
       \norm{\xi (a^\dagger)^k a^l \psi+\xi^\ast (a^\dagger)^l a^k \psi} \leq \frac{2}{\kappa} \norm{f(a^\dagger a)\psi}+ b \norm{\psi} 
    \end{equation}
    for all $\psi \in \domain_0$.
    Thus, 
    \begin{equation}
        \Aop = \xi (a^\dagger)^k a^l+f(a^\dagger a) + \xi^\ast (a^\dagger)^l a^k
    \end{equation}
    is a relatively bounded perturbation of $f(a^\dagger a)$, with relative bound $\frac{2}{\kappa} \leq 1$.
    But by~\cref{lem:f-esssa}, $f(a^\dagger a)$ is essentially self-adjoint, and from Wüst's Theorem it follows that $\Aop$ is essentially self-adjoint.
    If $\kappa > 2$, the Kato--Rellich theorem implies
    \begin{equation}
       \braket{\psi,\Aop\psi} \geq -\frac{2|\xi| \beta^{kl}_{N-1}}{1-2/\kappa} \|\psi\|^2\, 
    \end{equation} 
    for all $\psi\in\domain_0$.
\end{proof}

We can again simplify the statement with more explicit conditions on $f(n)$ by using~\cref{lem:beta-bounds}:
\begin{corollary}\label{cor:faster}
    Let $k > l$, $\xi \in \cnum$ and $f: \nnum \to \rnum_+$. 
    Assume that there exists $\kappa > 2$ and $N \in \nnum$ such that for all $n \geq N$ $f(n) \geq \kappa |\xi| n^{(k+l)/2}$.
    Then the $(k,l)$-squeezing operator $\Aop$ is essentially self-adjoint and bounded from below.
\end{corollary}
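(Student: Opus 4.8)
The plan is to reduce the corollary to \cref{prop:rel-bounded} by replacing the explicit power $n^{(k+l)/2}$ with the combinatorial factor $\beta^{kl}_n$, exactly as \cref{cor:slower} was derived from \cref{prop:k2}. The only work is to translate the hypothesis $f(n) \geq \kappa |\xi| n^{(k+l)/2}$ into a bound of the form $f(n) \geq \kappa' |\xi| \beta^{kl}_n$ with $\kappa' \geq 2$, so that \cref{prop:rel-bounded} applies verbatim and delivers both conclusions.

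First I would invoke \cref{lem:beta-bounds} to control $\beta^{kl}_n$ from above. Since by hypothesis $\kappa > 2$ strictly, the number $\kappa/2$ exceeds $1$, so I can fix a constant $C_2$ with $1 < C_2 < \kappa/2$. \cref{lem:beta-bounds} then furnishes an index $N_2 \in \nnum$ such that $\beta^{kl}_n \leq C_2\, n^{(k+l)/2}$ for all $n \geq N_2$; equivalently, $n^{(k+l)/2} \geq \beta^{kl}_n / C_2$ in that range.

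Next, combining this inequality with the assumed lower bound on $f$, for all $n \geq \max(N,N_2)$ I would obtain
\begin{equation}
  f(n) \geq \kappa |\xi|\, n^{(k+l)/2} \geq \frac{\kappa}{C_2}\, |\xi|\, \beta^{kl}_n.
\end{equation}
Setting $\kappa' = \kappa/C_2$, the choice $C_2 < \kappa/2$ guarantees $\kappa' > 2$. Applying \cref{prop:rel-bounded} with $\kappa'$ in place of $\kappa$ and $N' = \max(N,N_2)$ in place of $N$ then yields at once that $\Aop$ is essentially self-adjoint and, because $\kappa' > 2$, bounded from below.

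There is no genuine obstacle here; the statement is a routine specialization obtained by feeding the asymptotics of \cref{lem:beta-bounds} into \cref{prop:rel-bounded}. The only subtle point worth flagging is why the hypothesis demands $\kappa > 2$ strictly rather than $\kappa \geq 2$: passing from the clean power $n^{(k+l)/2}$ to $\beta^{kl}_n$ costs a multiplicative factor $C_2 > 1$, and one needs exactly this headroom to keep the resulting relative-bound constant $\kappa'$ above $2$. With $\kappa = 2$ one could only achieve $\kappa' < 2$, which would fall outside the scope of \cref{prop:rel-bounded}.
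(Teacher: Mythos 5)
Your proof is correct and follows essentially the same route as the paper: both use \cref{lem:beta-bounds} to bound $\beta^{kl}_n \leq C_2\, n^{(k+l)/2}$ with $C_2$ chosen so close to $1$ that the resulting constant $\kappa/C_2$ (the paper's $\tilde\kappa$) stays strictly above $2$, and then invoke \cref{prop:rel-bounded}. Your closing remark about why $\kappa>2$ is needed here, while $\kappa=2$ suffices in \cref{prop:rel-bounded}, matches the remark the paper makes immediately after the corollary.
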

\begin{proof}
    As $\kappa/2 > 1$, by~\cref{lem:beta-bounds} there exists $2< \tilde{\kappa} < \kappa$ and $N_1 \in \nnum$ such that $ \kappa n^{(k+l)/2} \geq \tilde{\kappa} \beta^{kl}_n$ for $n \geq N_1$, and thus the assumptions of~\cref{prop:rel-bounded} are fulfilled with $N_2 = \max(N,N_1)$.
\end{proof}

\begin{remark}
    Note that, while \cref{prop:rel-bounded} covers the critical case $\kappa = 2$ thanks to Wüst's Theorem, we cannot treat $\kappa = 2$ in \cref{cor:faster}, as $n^{(n+l)/2} < \beta^{kl}_n$. In such a case, one needs to take a closer look at the ratio $f(n)/(|\xi| \beta^{kl}_n)$ and check the slightly weaker condition in \cref{prop:rel-bounded}.
\end{remark}

To summarize, we showed two sufficient conditions under which the $(k,l)$-squeezing operator $\Aop$ is essentially self-adjoint:
\begin{enumerate}
 \item[(i)] If $l = 0$ and $k \leq 2$, $\Aop$ is essentially self-adjoint if $f(n)$ is eventually smaller than $C n^{k/2}$ for some constant $C>0$ (\cref{cor:slower}).
   \item[(ii)] Besides, for all $k,l$, $\Aop$ is essentially self-adjoint if $f(n)$ is eventually larger than $\kappa |\xi| n^{(k+l)/2}$ for some constant $\kappa>2$ (\cref{cor:faster}).
\end{enumerate}
We are thus left with analyzing what happens when $k+l\geq3$ and the squeezing term in the expression of $\Aop$ is the dominant one. As we will see in \cref{sec:deficiency}, this regime requires a more refined mathematical treatment.

\section{Squeezing operators with multiple self-adjoint extensions}
\label{sec:deficiency}

We now turn to the cases in which the $(k,l)$-squeezing operator $\Aop$ is not necessarily essentially self-adjoint.
To this purpose, we shall apply von Neumann's theory of self-adjoint extensions (see e.g.~\cite[Theorem 2.26]{teschl-mathematicalmethodsquantum-2009} or~\cite[Theorem X.2]{reed-mmmp2-fourier-1975}) to $\Aop$.
We will first show that $\Aop$ has equal deficiency indices, and thus admits self-adjoint extensions for arbitrary choices of parameters; we will then proceed to determine them in the cases not covered by the results of \cref{sec:ess-sa}, under some technical assumptions concerning the existence of asymptotic expansions.

\subsection{General properties of deficiency spaces of squeezing operators}
\label{sec:conjugation}
Recall that the \textit{deficiency subspaces} $K_\pm$ of $\Aop$ are defined as follows:
\begin{equation}
    \label{eq:def-deficiency}
    K_\pm = \Ran(\Aop \pm i)^\perp = \Ker(\Aop^\ast \mp i)\, , 
\end{equation}
with their dimensions 
\begin{equation}
    n_\pm = \dim K_\pm \, 
\end{equation}
being the \textit{deficiency indices} of $\Aop$. 
Then, according to von Neumann's theorem, $\Aop$ has self-adjoint extensions if and only if $n_+=n_-$; in this case, all self-adjoint extensions are in a one-to-one correspondence with unitary operators $U: K_+ \to K_-$. In particular, $\Aop$ is essentially self-adjoint iff $n_\pm=0$, and admits an $n_+$-dimensional vector space of self-adjoint extensions otherwise.

\begin{proposition}
\label{prop:conjugation}
    Let $k> l \in \nnum$, $\xi \in \cnum$, $f: \nnum \to \rnum_+$, and $\Aop$ be a $(k,l)$-squeezing operator.
    Then $\Aop$ has equal deficiency indices and thus admits self-adjoint extensions.
\end{proposition}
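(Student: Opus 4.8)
The plan is to produce an antiunitary conjugation $C$ on $\hilbert$ that leaves $\domain_0$ invariant and commutes with $\Aop$, and then to invoke von Neumann's criterion (see e.g.~\cite[Theorem X.3]{reed-mmmp2-fourier-1975}): since $\Aop$ is symmetric by \cref{prop:a-symmetric}, the existence of such a $C$ forces $n_+=n_-$, so that self-adjoint extensions exist. The naive candidate---coordinatewise complex conjugation in the Fock basis---does not work unless $\xi\in\rnum$: by \cref{lem:a-action} the raising coefficient is $\xi\beta^{kl}_n$ while the lowering coefficient is $\xi^\ast\beta^{lk}_n$, so ordinary conjugation would interchange $\xi$ and $\xi^\ast$ and fail to commute. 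The fix I would use is to twist the conjugation by a phase that depends on the number operator and absorbs the argument of $\xi$.

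Concretely, writing $\xi=|\xi|\,\e^{\iu\varphi}$ (with $\varphi=0$ when $\xi=0$) and setting $\gamma=2\varphi/(k-l)$, I would define
\[
    C\Big(\sum_n c_n\phi_n\Big)=\sum_n \overline{c_n}\,\e^{\iu\gamma n}\,\phi_n .
\]
The first routine checks are that $C$ is conjugate-linear with $\braket{C\psi,C\chi}=\braket{\chi,\psi}$ (hence norm-preserving), that $C^2=\id$ because applying $C$ twice cancels both the conjugation and the phase, and that $C\phi_n=\e^{\iu\gamma n}\phi_n\in\domain_0$, so $C\domain_0\subseteq\domain_0$.

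The decisive step is to verify $C\Aop\phi_n=\Aop C\phi_n$ for all $n$. Using \cref{lem:a-action} together with the fact that $\beta^{kl}_n$, $\beta^{lk}_n$ and $f(n)$ are real, both sides are linear combinations of $\phi_n$ and $\phi_{n\pm(k-l)}$; the diagonal terms $f(n)\,\e^{\iu\gamma n}\phi_n$ agree automatically, while matching the $\phi_{n+(k-l)}$ components reduces precisely to the scalar identity $\overline{\xi}\,\e^{\iu\gamma(k-l)}=\xi$ (the $\phi_{n-(k-l)}$ components give its complex conjugate, hence the same condition). By the choice of $\gamma$ one has $\e^{\iu\gamma(k-l)}=\e^{2\iu\varphi}$, whence $\overline{\xi}\,\e^{\iu\gamma(k-l)}=|\xi|\e^{-\iu\varphi}\e^{2\iu\varphi}=\xi$, so the identity holds; extending by conjugate-linearity yields $C\Aop=\Aop C$ on $\domain_0$.

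I do not anticipate a real obstacle here: the only genuine idea is recognizing that a number-dependent phase twist is required when $\xi$ is complex, after which every step is a direct verification on Fock states. Equivalently, one could first conjugate $\Aop$ by the unitary $\e^{-\iu(\varphi/(k-l))\,a^\dag a}$ to render the squeezing coefficient real and then apply ordinary complex conjugation, using that unitary equivalence preserves deficiency indices; the conjugation $C$ above is exactly the composition of these two operations.
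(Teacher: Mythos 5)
Your proposal is correct and follows essentially the same route as the paper: both construct a conjugation that twists coordinatewise complex conjugation by a number-dependent phase absorbing the argument of $\xi$, verify $C\Aop=C\Aop$ on Fock states, and invoke von Neumann's theorem \cite[Theorem X.3]{reed-mmmp2-fourier-1975}. The paper writes the phase as $\e^{2\iu r\theta}$ in terms of the ladder index $r$ from the decomposition $n=n_0+r(k-l)$, whereas your $\e^{\iu\gamma n}$ differs from it only by a constant unimodular factor on each ladder, which is immaterial for the commutation; your version is, if anything, slightly cleaner to state.
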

\begin{proof}
    Without loss of generality, we can set $|\xi| = 1$ by rescaling $f(n)$ accordingly, and write $\xi = \e^{\iu \theta}$ with $\theta \in \rnum$.
    Furthermore, let $\Delta = k-l$.

We will construct a conjugation $C:\hilbert\rightarrow\hilbert$, i.e. a norm-preserving antilinear operator with $C^2=I$, such that
\begin{equation}    
C\domain_0\subseteq\domain_0\qquad\text{and}\qquad\Aop C = C \Aop.
\end{equation}
This will imply the claimed result by virtue of a theorem by von Neumann \cite[Theorem X.3]{reed-mmmp2-fourier-1975}.

To this end, recall that any $\psi \in \hilbert$ can be uniquely represented as 
\begin{equation}
    \psi = \sum_{n = 0}^\infty c_n \phi_n
\end{equation}
with complex coefficients $(c_n)_{n\in\mathbb{N}}$.
We rearrange the sum of the terms with index $n\geq l$ by noticing that every such $n$ can be uniquely written as $n = n_0 + r(k-l) = n_0 +r \Delta$ for some $l \leq n_0 < k$ and $r \in \nnum$.
Hence, we can write
\begin{align}
\label{eq:psi_repr}
    \psi &= \sum_{n = 0}^{l-1} c_n \phi_n + \sum_{n_0 = l}^{k-1} \sum_{ r = 0}^\infty c_{n_0 + r\Delta} \phi_{n_0+r\Delta} \nonumber\\&= \sum_{n = 0}^{l-1} c_n \phi_n + \sum_{n_0 = l}^{k-1} \sum_{ r = 0}^\infty c^{(n_0)}_r \phi_{n_0+r\Delta} \, ,
\end{align}
where we introduced the notation $c_r^{(n_0)} = c_{n_0 +r\Delta}$ in the second equality.
We will use a similar decomposition in \cref{sec:recurrence_for_A}.
We define $C:\hilbert\rightarrow\hilbert$ as follows: for all $\psi\in\hilbert$ as in \cref{eq:psi_repr}, 
    \begin{equation}
        C \psi = \sum_{n = 0}^{l-1} c_n^\ast \phi_n + \sum_{n_0 = l}^{k-1} \sum_{r=0}^\infty \e^{2\iu r \theta}  \left(c^{(n_0)}_r\right)^\ast \phi_{n_0+r\Delta} \, .
    \end{equation}
This map is clearly antilinear, norm-preserving, and satisfies $C^2=I$ since
    \begin{align}
        C^2 \psi & = C \left( \sum_{n = 0}^{l-1} c_n^\ast \phi_n + \sum_{n_0 = l}^{k-1} \sum_{r=0}^\infty \e^{2\iu r \theta}  \left(c^{(n_0)}_r\right)^\ast \phi_{n_0+r\Delta} \right) \\
        & = \sum_{n = 0}^{l-1} c_n \phi_n + \sum_{n_0 = l}^{k-1} \sum_{r=0}^\infty \e^{2 \iu r \theta - 2\iu r \theta}c^{(n_0)}_r \phi_{n_0+r\Delta} = \psi\, ;
    \end{align}
    hence $C$ is a conjugation. Besides, clearly $C$ maps $\domain_0$ in itself. We need to show that, for every $\psi\in\domain_0$, $\Aop C\psi=C\Aop\psi$; by (anti)linearity, it suffices to prove it for each $\phi_n$.

    We distinguish the two cases $n<l$ and $n\geq l$. In the former case, we have $C\phi_n=\phi_n$ and $\beta^{kl}_n = \beta^{lk}_n= 0$, whence 
    \begin{equation}
        \Aop C \phi_n = \Aop \phi_n=  f(n) \phi_n = C \Aop  \phi_n
    \end{equation}
    immediately follows with \cref{lem:a-action}.
    In the latter case, let $l \leq n_0 < k$ and $r \in \nnum$.
    Then, using \cref{lem:a-action},
    \begin{align}
        \Aop C &  \phi_{n_0+r\Delta}  = \Aop  \e^{2 \iu r \theta} \phi_{n_0+r\Delta} \\
        & =  \e^{2 \iu r \theta} f(n_0+r\Delta)\phi_{n_0+r\Delta} +  \e^{\iu \theta+2\iu r \theta}\beta^{kl}_n\phi_{n_0+(r+1)\Delta}  + \e^{-\iu r \theta+2\iu r \theta}\beta^{lk}_n\phi_{n_0+(r-1)\Delta} \, ,
    \end{align}
while
\begin{align}
        C \Aop &   \phi_{n_0+r\Delta} = C\left( f(n_0+r\Delta)\phi_{n_0+r\Delta}  +  \e^{\iu \theta} \beta^{kl}_n\phi_{n_0+(r+1)\Delta}  +   \e^{-\iu \theta}\beta^{lk}_n\phi_{n_0+(r-1)\Delta}\right) \\
        & =   \e^{2 \iu r \theta} f(n_0+r\Delta)\phi_{n_0+r\Delta}  +  \e^{-\iu \theta+2\iu (r+1) \theta}\beta^{kl}_n\phi_{n_0+(r+1)\Delta}  +  \e^{\iu r \theta+2\iu (r-1) \theta}\beta^{lk}_n\phi_{n_0+(r-1)\Delta} \, .
    \end{align}
    Hence, $\Aop C  \phi_{n_0+r\Delta} = C\Aop  \phi_{n_0+r\Delta} $, and therefore $\Aop C \psi= C \Aop \psi$ for all $\psi \in \domain_0$.
    We showed that $C$ is a conjugation which commutes with $\Aop$, whence the claim follows.
\end{proof}

While this proposition guarantees that $\Aop$ always admits self-adjoint extensions, it does not determine whether such an extension is unique. Outside the cases treated in \cref{sec:ess-sa}, this question remains open—remarkably, even for the pure higher-order squeezing operators introduced in \cref{sec:intro}, cf. \cref{eq:k-squeeze}. In the remainder of this section, we address this problem under suitable assumptions on the asymptotic behavior of $f(n)$.

\begin{remark}\label{rem:gorska}
We remark that the case $l = 0$ and $f(n) = 0$, was already considered in Ref.~\cite{gorska-squeezingarbitraryorder-2014}, whose notation we will closely follow. However, the argument presented there contains a gap that undermines the validity of the result (cf. \cref{rem:gorska2}). Our proof addresses this issue and establishes the statement on a more general footing.
\end{remark}

\subsection{Asymptotics of linear recurrence relations}
\label{sec:recurrence}

We begin by recalling some results on linear recurrence relations which we will need for our purposes. We refer to Refs.~\cite{elaydi-introductiondifferenceequations-2005,kelley-differenceequationsintroduction-2001,mariconda-discretecalculus-2016} for a more detailed overview on the subject. Here we will focus on second-order linear recurrence relations:
\begin{equation}
    \label{eq:recurrence-birkhoff}
    d_{r+2} + a(r)d_{r+1} + b(r)d_r = 0\, \quad \forall r \in \nnum\, , 
\end{equation}
with coefficients $a(r)$ and $b(r)$. This can be seen as an equation in the vector space $\mathbb{C}^{\mathbb{N}}$ of complex sequences. 
Assuming $a(r) \neq 0$ and $b(r) \neq 0$ for all $r \in \nnum$,~\cref{eq:recurrence-birkhoff} has a unique solution for every choice of initial values $d_0$ and $d_1$~\cite[Theorem 2.7]{elaydi-introductiondifferenceequations-2005}.
Thus the solution space of~\cref{eq:recurrence-birkhoff} is two dimensional, and each solution can be written as a linear combination of two linearly independent solutions of~\cref{eq:recurrence-birkhoff}~\cite[Theorem 2.21]{elaydi-introductiondifferenceequations-2005}.

We will need to characterize the \textit{asymptotic} behavior of the solutions of~\cref{eq:recurrence-birkhoff} for $r\to\infty$. To this end, assume that the coefficients $a(r)$ and $b(r)$ converge to some finite values 
\begin{equation}\label{eq:limiting_coefficients}
    a_0 = \lim_{r \to \infty} a(r), \qquad b_0 = \lim_{r\to \infty} b(r) \, , 
\end{equation}
and consider the limiting recurrence relation
\begin{equation}\label{eq:limiting_relation}
        d_{r+2} + a_0d_{r+1} + b_0d_r = 0\, ,     
\end{equation}
which heuristically approximates the original relation~\eqref{eq:recurrence-birkhoff} for large $r$. Then, one can to determine the roots $\rho_\pm$ of the characteristic equation of the limiting recurrence relation,
\begin{equation}
    \label{eq:characteristic}
    \rho^2 + a_0 \rho + b_0 \, .
\end{equation}
If $|\rho_+| \neq |\rho_-|$, one can apply well-known results by Poincaré and Perron (see e.g.~\cite[Theorems 8.9--8.11]{elaydi-introductiondifferenceequations-2005}) to determine the asymptotics of the solutions of~\cref{eq:recurrence-birkhoff}. However, for our purposes, we need to consider the case $|\rho_+| = |\rho_-|$. Results in this regime were found by Birkhoff and Trjitzinsky~\cite{birkhoff-formaltheoryirregular-1930,birkhoff-analytictheorysingular-1933}, building upon the work of Adams~\cite{adams-irregularcaseslinear-1928}. Before introducing such results, let us recall some notation, cf.~\cite[722]{flajolet-analyticcombinatorics-2013}:

\begin{definition}\label{def:o}
    Let $f,g: \nnum \to \cnum$. We write
    \begin{enumerate}[(i)]
        \item $f(r) = O(g(r))$, if there exists $N \in \nnum$ and $C \in \rnum$ such that, for all $r \geq N$, $|f(r)| \leq C |g(r)|$.
        \item $f(r) = o(g(r))$, if $\lim_{r \to \infty} |f(r)|/|g(r)| = 0$.
    \end{enumerate}
\end{definition}
Next, we recall asymptotic expansions~\cite[724]{flajolet-analyticcombinatorics-2013}:
\begin{definition}
    \label{def:asymptotic-expansion}
    Let $f: \nnum \to \cnum$, and $(g_s)_{s \in \nnum}$ be a family of functions $g_s: \nnum \to \cnum$ satisfying $g_{s+1} = o(g_s)$ for all $s \in \nnum$.
    We say that $f$ admits the \textit{asymptotic expansion} 
    \begin{equation}\label{eq:asymptotics}
        f(r) \sim \sum_{s=0}^{\infty}\lambda_s g_s(r) \, 
    \end{equation}
    if there exist $(\lambda_s)_{s\in\nnum}\in \cnum$ such that, for all $S \in \nnum$,
    \begin{equation}
        f(r) = \sum_{s = 0}^{S} \lambda_s g_s(r) + O(g_{s+1}(r)) \, .
    \end{equation}
\end{definition}
    Importantly, $\sum_{s=1}^{\infty}\lambda_s g_s(r)$ is a \textit{formal} series expansion, in the sense that the series need not converge at any point $r$.
       
Formal series play a central role in the analysis of recurrence relations and differential equations.
When $a(r)$ and $b(r)$ admit asymptotic expansions in a given family of functions $(g_s)_{s\in\mathbb{N}}$, one often postulates a series ansatz for the solution, usually in the form
\begin{equation}
   \e^{Q(r)}\sum_{s=0}^\infty c_sg_s(r)
\end{equation}
for some function $Q(r)$, then substitutes it into the equation and replaces $a(r)$ and $b(r)$ with their asymptotic expansions, matching coefficients of the same order in $g_s$ so that the relation is satisfied at each order. This procedure yields what is called a \emph{formal series solution}, whose coefficients are determined recursively from the equation. A key question is whether such a formal series captures the asymptotic behavior of a \textit{genuine} solution of the original problem, that is, it serves as an asymptotic expansion for an actual solution. 

The Birkhoff--Trjitzinsky theory~\cite{birkhoff-analytictheorysingular-1933} addresses this question for a broad class of recurrence relations, providing precise conditions under which formal series solutions indeed correspond to asymptotic expansions of actual solutions. In our setting, the Birkhoff--Trjitzinsky theorem ensures that, whenever the coefficients \(a(r)\) and \(b(r)\) have asymptotic expansions in powers of \(r^{-1/\omega}\) for some integer \(\omega\), the equation admits two linearly independent formal series solutions that also serve as the asymptotic expansions of two genuine independent solutions. We refer to \cref{sec:appendix} for the precise statement, cf. \cref{thm:birkhoff-general}.

For our purposes, we will focus on the case \(\omega = 2\) (that is, asymptotic expansions in powers of \(r^{-1/2}\)), in which the form of our solutions simplifies considerably. While this specialization is, in principle, a direct consequence of the general theorem, we have not found it worked out explicitly in the literature (but see \cite{wong-asymptoticexpansionssecondorder-1992} for the case $\omega=1$), and therefore include a complete proof in \cref{sec:appendix}:

\begin{proposition}
    \label{prop:birkhoff}
    Consider a recurrence relation in the form~\eqref{eq:recurrence-birkhoff}. We assume the following:
\begin{itemize}
    \item[(i)] The coefficient functions $a(r)$ and $b(r)$ admit asymptotic expansions in powers of $r^{-1/2}$:
    \begin{equation}
        a(r) \sim \sum_{s=0}^{\infty} a_s r^{-s/2}\, ,  \quad b(r) \sim \sum_{s=0}^{\infty} b_s r^{-s/2}
    \end{equation}
    with $b_0 \neq 0$;
    \item[(ii)] The two solutions $\rho_\pm$  of the associated characteristic equation~\eqref{eq:characteristic} satisfy $\rho_+ \neq \rho_-$.
\end{itemize}
    Then there exists two linearly independent solutions $(d^\pm_r)_{r\in\nnum}$ of~\cref{eq:recurrence-birkhoff} with asymptotic expansions
    \begin{equation}
        \label{eq:birkhoff-solution}
        d^\pm_r \sim \rho^r_\pm \e^{\Omega_\pm r^{1/2}} r^{\alpha_\pm} \sum_{s=0}^{\infty} C^\pm_s r^{-s/2} \, , 
    \end{equation}
    where $\alpha_\pm$ and $\Omega_\pm$ are given by
    \begin{align}
        \label{eq:birkhoff-omega}\Omega_\pm & = -\frac{a_1 \rho_\pm+b_1}{a_0 \rho_\pm/2+\rho_\pm^2}\, , \\
        \label{eq:birkhoff-alpha}\alpha_\pm & = \frac{a_2 \rho_\pm+b_2}{a_0 \rho_\pm + 2 b_0} - \frac{\Omega_\pm^2(\rho_\pm/2+ a_0/8)+ \Omega_\pm a_1/2}{2\rho_\pm+a_0 } .
    \end{align}
\end{proposition}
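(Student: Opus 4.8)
The plan is to split the proof into an \emph{existence} part, handled by the general theory, and a \emph{determination} part, carried out by an explicit order-by-order matching. For existence, I would invoke the general Birkhoff--Trjitzinsky theorem \cref{thm:birkhoff-general}, which under hypotheses (i)--(ii) produces two linearly independent genuine solutions of \cref{eq:recurrence-birkhoff} whose asymptotics are described by formal series solutions; specializing that statement to $\omega=2$ with simple characteristic roots collapses the general exponential prefactor to a single term $\e^{\Omega_\pm r^{1/2}}$, so that the solutions take precisely the form \cref{eq:birkhoff-solution}. It then remains to pin down the constants $\Omega_\pm$, $\alpha_\pm$ and the coefficients $C_s^\pm$, which I would obtain by substituting the ansatz \cref{eq:birkhoff-solution} into \cref{eq:recurrence-birkhoff} and equating the coefficient of each power $r^{-m/2}$ to zero.

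Concretely, writing $F(r)=\rho^r\e^{\Omega r^{1/2}}r^\alpha$ and $d_r=F(r)\sum_s C_s r^{-s/2}$, I would expand the shift ratios
\begin{equation}
    \frac{F(r+j)}{F(r)}=\rho^j\,\e^{\Omega\left[(r+j)^{1/2}-r^{1/2}\right]}\left(1+\tfrac{j}{r}\right)^\alpha,\qquad j\in\{1,2\},
\end{equation}
by inserting $(r+j)^{1/2}=r^{1/2}+\tfrac{j}{2}r^{-1/2}-\tfrac{j^2}{8}r^{-3/2}+\cdots$ into the series for the exponential, combining with $(1+j/r)^\alpha=1+\alpha j\,r^{-1}+\cdots$ and $(r+j)^{-s/2}=r^{-s/2}(1-\tfrac{sj}{2}r^{-1}+\cdots)$. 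Substituting these, together with the given expansions of $a(r)$ and $b(r)$, into \cref{eq:recurrence-birkhoff} and dividing by $F(r)$ recasts the recurrence as a single formal power series in $r^{-1/2}$, each coefficient of which must vanish.

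Matching the first orders then yields the constants in a triangular fashion. At order $r^0$ one recovers $\rho^2+a_0\rho+b_0=0$, the characteristic equation \cref{eq:characteristic}, fixing $\rho=\rho_\pm$. At order $r^{-1/2}$ the coefficient of $C_1$ is the characteristic polynomial and hence vanishes, while the coefficient of $C_0$ gives $(\rho^2+\tfrac12 a_0\rho)\Omega+a_1\rho+b_1=0$, i.e.\ \cref{eq:birkhoff-omega}. At order $r^{-1}$ the coefficient of $C_2$ vanishes by the characteristic equation and that of $C_1$ vanishes by the $\Omega$-relation just found, so the surviving $C_0$-terms produce a single equation for $\alpha$ that rearranges into \cref{eq:birkhoff-alpha}. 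For $s\geq1$ the same two cancellations show that in the equation at order $r^{-(s+2)/2}$ the coefficients of $C_{s+2}$ and $C_{s+1}$ drop out, while $C_s$ survives with coefficient $-\tfrac{s}{2}\rho(2\rho+a_0)$, determining $C_s$ recursively in terms of $C_0,\dots,C_{s-1}$; thus the formal series is unique once $C_0$ is normalized.

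The hypothesis $\rho_+\neq\rho_-$ is what makes everything well posed, and is the point I would treat most carefully. Since $2\rho_\pm+a_0$ is the derivative of the characteristic polynomial at a simple root, one has $2\rho_\pm+a_0=\pm(\rho_+-\rho_-)\neq0$; this guarantees that the denominators $a_0\rho_\pm/2+\rho_\pm^2=\tfrac12\rho_\pm(2\rho_\pm+a_0)$ and $2\rho_\pm+a_0$ in \cref{eq:birkhoff-omega,eq:birkhoff-alpha}, as well as the coefficients $-\tfrac{s}{2}\rho_\pm(2\rho_\pm+a_0)$ in the recursion for $C_s^\pm$, are all nonzero, so that $\Omega_\pm$, $\alpha_\pm$ and the entire series are well defined; it also ensures that the two solutions, having distinct leading factors $\rho_\pm^r$, are linearly independent and therefore span the solution space of \cref{eq:recurrence-birkhoff}. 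I expect the main obstacle to be the bookkeeping at order $r^{-1}$: isolating $\alpha$ requires expanding the exponential prefactor to second order (producing the $\Omega^2$ contributions) simultaneously with the first-order expansion of $r^\alpha$, and then verifying that the $C_1$- and $C_2$-contributions cancel \emph{exactly} through the two previously derived relations, leaving a clean equation for $\alpha$.
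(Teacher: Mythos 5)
Your proposal is correct and follows essentially the same route as the paper: invoke the general Birkhoff--Trjitzinsky/Wimp--Zeilberger theorem for the existence of exactly two formal series solutions that are asymptotic expansions of genuine solutions, then substitute the ansatz and match powers of $r^{-1/2}$ to recover the characteristic equation at order $r^0$, $\Omega_\pm$ at order $r^{-1/2}$, $\alpha_\pm$ at order $r^{-1}$, and the recursion for $C_s^\pm$ with nonvanishing coefficient $-\tfrac{s}{2}\rho_\pm(2\rho_\pm+a_0)$ at higher orders. The only presentational difference is that the paper justifies the simplified form of the ansatz by exhibiting it as a special case of the general form and appealing to the theorem's "exactly two" uniqueness, rather than asserting upfront that the general form "collapses"; your explicit construction supplies the same content.
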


\cref{prop:birkhoff} will serve as a central tool in our analysis of the self-adjoint extensions of $\Aop$.

\subsection{A recurrence relation for \texorpdfstring{$\Aop$}{A}}
\label{sec:recurrence_for_A}

We now return to our main quest: determining the self-adjoint extensions of the $(k,l)$-squeezing operator $\Aop$ for $k+l \geq 3$ in cases not covered by the results of \cref{sec:ess-sa}. To this end, we will explicitly determine the deficiency spaces $K_\pm$ (cf. \cref{eq:def-deficiency}) by making use of the properties of recurrence relations recalled in~\cref{sec:recurrence}. Without loss of generality, we shall henceforth assume that the parameter $\xi$ in the expression of $\Aop$ satisfies $|\xi|=1$; this can always be ensured by dividing $f(n)$ by $|\xi|$ and redefining $f(n)$ accordingly. As such, we will fix $\xi=\e^{\iu\theta}$ for some $\theta\in\rnum$.

To begin with, we represent any vector $\psi^\pm \in K_\pm$ in the basis of Fock states:
\begin{equation}
    \label{eq:psipm-expansion}
    \psi^\pm = \sum_{n=0}^{\infty} c^\pm_n \phi_n \, , \qquad c^\pm_n = \braket{\phi_n,\psi^\pm},
\end{equation}
with $(c^\pm_n)_{n\in\mathbb{N}}\in\ell^2(\mathbb{C})\subset\cnum^\nnum$.

We now show that enforcing the condition $\psi^\pm\in K_\pm$ translates into a recurrence relation on the coefficients $c^\pm_n$:
\begin{lemma}
    \label{lem:kpm-recurrence}
    Let $k>l \in \nnum$, $\theta \in \rnum$, and $\xi = \e^{\iu \theta}$.
    Let $\psi^\pm \in K_\pm$. Then its expansion coefficients $(c^\pm_n)_{n\in\mathbb{N}}$ in the Fock states, cf.~\cref{eq:psipm-expansion}, obey the following recurrence relation:
    \begin{equation}
        \label{eq:kpm-recurrence}
        \e^{-\iu \theta} \beta^{kl}_n c^\pm_{n+(k-l)}+\e^{\iu \theta} \beta^{lk}_n c^\pm_{n-(k-l)} +(f(n)\mp \iu) c^\pm_n = 0\, .
    \end{equation}
\end{lemma}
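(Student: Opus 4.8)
The plan is to derive the recurrence relation directly from the defining condition $\psi^\pm \in K_\pm = \Ker(\Aop^\ast \mp \iu)$, reformulated in terms of the adjoint acting on the Fock basis. Since $\psi^\pm \in K_\pm$ means $(\Aop^\ast \mp \iu)\psi^\pm = 0$, the natural strategy is to test this equation against each basis vector $\phi_n$, i.e. to compute $\braket{\phi_n, (\Aop^\ast \mp \iu)\psi^\pm}$ and set it to zero for every $n$.

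First I would exploit the symmetry of $\Aop$ established in \cref{prop:a-symmetric}: because $\Aop$ is symmetric with $\domain(\Aop) = \domain_0$, we have $\braket{\phi_n, \Aop^\ast \psi^\pm} = \braket{\Aop \phi_n, \psi^\pm}$ for every $\phi_n \in \domain_0$. This moves the operator onto the Fock state, where its action is explicitly known from \cref{lem:a-action}. Concretely, with $\xi = \e^{\iu\theta}$,
\begin{equation}
    \Aop \phi_n = \e^{\iu\theta} \beta^{kl}_n \phi_{n+(k-l)} + f(n)\phi_n + \e^{-\iu\theta}\beta^{lk}_n \phi_{n-(k-l)},
\end{equation}
so that taking the inner product with $\psi^\pm = \sum_m c^\pm_m \phi_m$ and using orthonormality picks out the coefficients $c^\pm_{n+(k-l)}$, $c^\pm_n$, and $c^\pm_{n-(k-l)}$. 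Since the inner product is antilinear in the first slot, the coefficients $\e^{\iu\theta}$ and $\e^{-\iu\theta}$ attached to $\Aop\phi_n$ get complex-conjugated, producing the $\e^{-\iu\theta}$ and $\e^{\iu\theta}$ factors that appear in \cref{eq:kpm-recurrence}.

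It then remains to handle the $\mp\iu$ term: $\braket{\phi_n, \mp\iu\,\psi^\pm} = \pm\iu\, c^\pm_n$ (again by antilinearity in the first argument). Combining, the equation $\braket{\phi_n, (\Aop^\ast \mp \iu)\psi^\pm} = 0$ becomes
\begin{equation}
    \e^{-\iu\theta}\beta^{kl}_n c^\pm_{n+(k-l)} + \e^{\iu\theta}\beta^{lk}_n c^\pm_{n-(k-l)} + f(n)c^\pm_n \mp \iu\, c^\pm_n = 0,
\end{equation}
which is exactly \cref{eq:kpm-recurrence} after collecting the $(f(n)\mp\iu)c^\pm_n$ term. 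One subtlety I would be careful about is the convergence and interchange of limits needed to justify evaluating $\braket{\phi_n,\Aop^\ast\psi^\pm}$ term by term against the (infinite) expansion of $\psi^\pm$; this is legitimate because the inner product is continuous and, for fixed $n$, only finitely many coefficients survive the projection onto $\phi_n$. I expect this to be the only real point requiring care: the computation is otherwise a direct bookkeeping exercise, and the main obstacle is simply tracking the complex conjugation of the phase factors and the sign of $\iu$ correctly through the antilinear first slot of the inner product. The boundary behavior for small $n$ (where $n - (k-l)$ would be negative) is automatically accommodated by the convention from \cref{rem:abuse}, since $\beta^{lk}_n = 0$ for $n < k$ kills the offending term.
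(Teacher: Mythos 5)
Your proposal is correct and follows essentially the same route as the paper: the paper starts from $K_\pm=\Ran(\Aop\pm\iu)^\perp$ and imposes $\braket{(\Aop\pm\iu)\phi_n,\psi^\pm}=0$, while you start from the equivalent $K_\pm=\Ker(\Aop^\ast\mp\iu)$ and move $\Aop^\ast$ onto $\phi_n$ via the adjoint relation; both reduce to the same computation of $\braket{\Aop\phi_n,\psi^\pm}$ using \cref{lem:a-action} and the vanishing of $\beta^{lk}_n$ for small $n$. One small slip: your parenthetical claim $\braket{\phi_n,\mp\iu\,\psi^\pm}=\pm\iu\,c^\pm_n$ is wrong---the scalar sits in the \emph{linear} slot, so no conjugation occurs and the value is $\mp\iu\,c^\pm_n$, which is in fact what your final displayed equation (correctly) uses; had the parenthetical been applied consistently you would have landed on $(f(n)\pm\iu)c^\pm_n$, i.e.\ the equation for the opposite deficiency space.
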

\begin{remark}\label{rem:abuse2}
    Analogously as in \cref{lem:a-action} (see \cref{rem:abuse}), here we are using a slight abuse of notation to simplify the presentation: \cref{eq:kpm-recurrence} involves coefficients $c_n^\pm$ with possibly negative values of the index $n$ up to $-(k-l)$. However, those terms are always multiplied by vanishing coefficients, and thus do not affect our discussion. An explicit version of \cref{eq:kpm-recurrence} where the various cases are presented separately is reported in the proof of \cref{lem:rec-delta-solutions}, cf. \cref{proofeq:recurrence-ll,proofeq:recurrence-geqk}. 
\end{remark}
\begin{proof}
    Let $\psi \in K_\pm$.
    Then, for all $\varphi \in \domain_0$, 
    \begin{equation}
        \braket{(\Aop\pm \iu)\varphi,\psi^\pm } = 0 \, .
    \end{equation}
    By linearity, this is equivalent to 
    \begin{equation}
        \braket{(\Aop \pm \iu)\phi_n,\psi^\pm} = 0 \qquad \forall n \in \nnum\, .
    \end{equation}
    Using~\cref{lem:a-action} with $\xi=\e^{\iu\theta}$, we obtain
    \begin{equation}
        \braket{\e^{+\iu \theta}\beta^{kl}_n \phi_{n+(k-l)}+f(n)\phi_n + \e^{-\iu \theta} \beta^{lk}_n \phi_{n-(k-l)} \pm \iu \phi_n, \psi^\pm} = 0\, .
        \end{equation}
    Using~\cref{eq:psipm-expansion}, for $n \geq k-l$ this is equivalent to
    \begin{equation}
        \e^{-\iu \theta}\beta^{kl}_n c^\pm_{n+(k-l)}+\e^{+\iu \theta} \beta^{lk}_n c^\pm_{n-(k-l)} +(f(n)\mp \iu) c^\pm_n = 0\, ,
    \end{equation}
    whence we obtain the claim.
    For $n < k-l$, we have $\beta^{lk}_{n} = 0$, and therefore, by using~\cref{eq:psipm-expansion}
    \begin{equation}
            \e^{-\iu \theta}\beta^{kl}_n c^\pm_{n+(k-l)} +(f(n)\mp \iu) c^\pm_n = 0\, ,       
    \end{equation}
    which is also equivalent to the claimed recurrence relation.
\end{proof}
Consequently, all elements of $K^\pm$ correspond to solutions $(c^\pm_n)_{n\in\mathbb{N}}\in\cnum^{\nnum}$ of the recurrence relation~\eqref{eq:kpm-recurrence} satisfying the \textit{additional} constraint $(c^\pm_n)_{n\in\mathbb{N}}\in\ell^2(\mathbb{C})$, which ensures that the corresponding series in~\cref{eq:psipm-expansion} defines an element of $\hilbert$. We will indeed show the following result:

\begin{proposition}\label{prop:conv-birkhoff}
    Let $k>l \in \nnum$ with $k+l\geq3$, and $\theta \in \rnum$. Assume that $f(n) / \beta^{kl}_n$ admits an asymptotic expansion in powers of $n^{-1/2}$:
    \begin{equation}
        \frac{f(n)}{\beta^{kl}_n} \sim \kappa + \frac{L_1}{n^{1/2}} + \frac{L_2}{n} + \dots
    \end{equation}   
    with $\kappa < 2$.
    Then the recurrence relation
    \begin{equation}
        \label{eq:kpm-recurrence-bis}
        \e^{-\iu \theta} \beta^{kl}_n c^\pm_{n+(k-l)}+\e^{\iu \theta} \beta^{lk}_n c^\pm_{n-(k-l)} +(f(n)\mp \iu) c^\pm_n = 0\, 
        \end{equation}
        admits $k-l$ linearly independent solutions in $\ell^2(\mathbb{C})$.
\end{proposition}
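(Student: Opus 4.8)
The plan is to exploit the three-term structure of \cref{eq:kpm-recurrence-bis}, which couples $c^\pm_n$ only to $c^\pm_{n\pm(k-l)}$, so that the recurrence decouples into $\Delta:=k-l$ independent second-order recurrences indexed by residue classes. Writing $n=n_0+r\Delta$ with $n_0\in\{l,\dots,k-1\}$ and $d_r=c^\pm_{n_0+r\Delta}$, each class gives a genuine three-term recurrence in $r$. First I would record the boundary structure: by \cref{lem:beta-sym} one has $\beta^{kl}_n=0$ for $n<l$, forcing $c^\pm_n=0$ there, while $\beta^{lk}_{n_0}=0$ collapses the relation at $r=0$ into a single constraint tying $d_1$ to $d_0$. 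Since $\{l,\dots,k-1\}$ represents every residue class mod $\Delta$ exactly once, each chain carries exactly one free parameter, and the solution space of \cref{eq:kpm-recurrence-bis} is precisely $\Delta$-dimensional. The whole task then reduces to showing that every such solution lies in $\ell^2(\cnum)$, i.e.\ that the constrained one-parameter family in each chain is square-summable.

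Next I would bring each chain into the normal form~\eqref{eq:recurrence-birkhoff}. After the gauge substitution $d_r=\e^{\iu\theta r}e_r$ (which removes the phase $\theta$ without affecting $\ell^2$-membership) and a shift of index, the coefficients become $a(r)=(f(n)\mp\iu)/\beta^{kl}_n$ and $b(r)=\beta^{lk}_n/\beta^{kl}_n$ with $n=n_0+(r+1)\Delta$. The hypothesis on $f(n)/\beta^{kl}_n$, together with a short computation giving $\beta^{lk}_n/\beta^{kl}_n=1-(k^2-l^2)/(2n)+O(n^{-2})$, shows that $a(r)$ and $b(r)$ admit asymptotic expansions in powers of $r^{-1/2}$ with $a_0=\kappa$, $b_0=1$, $b_1=0$. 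Crucially, the $\mp\iu$ term only contributes at order $n^{-(k+l)/2}$, hence at order $r^{-(k+l)/2}$ with $k+l\geq3$, so the coefficients $a_0,a_1,a_2,b_0,b_1,b_2$ entering \cref{eq:birkhoff-omega,eq:birkhoff-alpha} are all real. The characteristic roots $\rho_\pm=(-\kappa\pm\iu\sqrt{4-\kappa^2})/2$ are distinct and lie on the unit circle precisely because $\kappa<2$; this verifies the hypotheses of \cref{prop:birkhoff} and produces two linearly independent solutions $d^\pm_r\sim\rho_\pm^r\,\e^{\Omega_\pm r^{1/2}}r^{\alpha_\pm}$.

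Since $|\rho_\pm|=1$, membership in $\ell^2(\cnum)$ is decided entirely by the factors $\e^{\Omega_\pm r^{1/2}}r^{\alpha_\pm}$. I would first show $\Real\Omega_\pm=0$: Vieta's relations $\rho_++\rho_-=-a_0$, $\rho_+\rho_-=b_0$ applied to \cref{eq:birkhoff-omega} give $\Omega_++\Omega_-=0$, while the reality of $a_0,a_1,b_1$ together with $\rho_-=\overline{\rho_+}$ gives $\Omega_-=\overline{\Omega_+}$; hence both real parts vanish and the stretched-exponential factors have modulus one, so the decay is governed by $r^{\Real\alpha_\pm}$. To locate these exponents I would use the Casoratian $C_r=d^+_{r+1}d^-_r-d^+_r d^-_{r+1}$, which satisfies $C_{r+1}=b(r)C_r$ and hence $|C_r|=|C_0|\prod_{j}|b(j)|\sim r^{-(k+l)/2}$ by the expansion of $b$; matching this against the asymptotic form (using $\rho_+\rho_-=b_0$, $|b_0|=1$, $\rho_+\neq\rho_-$) yields $\Real\alpha_++\Real\alpha_-=-(k+l)/2$. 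The same conjugation symmetry that gave $\Omega_-=\overline{\Omega_+}$ gives $\alpha_-=\overline{\alpha_+}$ via \cref{eq:birkhoff-alpha}, whence $\Real\alpha_+=\Real\alpha_-=-(k+l)/4\leq-3/4<-1/2$.

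Consequently both chain solutions are square-summable, since $\sum_r r^{2\Real\alpha_\pm}<\infty$, so the full two-dimensional solution space of each chain lies in $\ell^2(\cnum)$; in particular the one-parameter family selected by the boundary constraint is square-summable. Summing over the $\Delta$ chains, the $\Delta$-dimensional solution space of \cref{eq:kpm-recurrence-bis} consists entirely of $\ell^2(\cnum)$ sequences, giving the claimed $k-l$ linearly independent solutions. I expect the $\ell^2$ classification to be the main obstacle: because $|\rho_\pm|=1$ and $\Real\Omega_\pm=0$, neither the geometric factor nor the stretched exponential decides membership, and one must pin down the polynomial exponent $\Real\alpha_\pm$ \emph{exactly}. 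This forces the two genuinely quantitative inputs—the subleading term of $\beta^{lk}_n/\beta^{kl}_n$ feeding the Casoratian, and the conjugation symmetry splitting the sum evenly—and it is precisely the fact that \emph{both} solutions are $\ell^2$ (rather than one) that makes the boundary bookkeeping return exactly $k-l$.
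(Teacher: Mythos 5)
Your proof is correct, and its overall architecture coincides with the paper's: decouple \cref{eq:kpm-recurrence-bis} into $\Delta=k-l$ residue-class chains, check that the boundary relations force $c_n^\pm=0$ for $n<l$ and leave exactly one free parameter per chain, bring each chain into the normal form~\eqref{eq:recurrence-birkhoff}, invoke \cref{prop:birkhoff}, and conclude square summability from $\Real\alpha_\pm=-(k+l)/4$ together with $k+l\geq3$. The two technical steps are handled differently, though. First, the paper uses the stronger gauge $c_{n_0+r\Delta}=(\pm\iu)^r\e^{\iu r\theta}d^{(n_0,\pm)}_r$ (\cref{lem:rec-dr-def}), which absorbs the $\mp\iu$ into the middle coefficient and makes the $(\pm)$ equations complex conjugates; you keep the $\mp\iu$ inside $a(r)$ and observe that it only enters the expansion at order $r^{-(k+l)/2}$, hence—precisely because $k+l\geq3$—does not touch the coefficients $a_0,a_1,a_2$ appearing in \cref{eq:birkhoff-omega,eq:birkhoff-alpha}. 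Second, the paper evaluates $\Omega_\pm$ and $\alpha_\pm$ explicitly, obtaining $\alpha_\pm=-(k+l)/4\pm\iu\varphi$ by direct algebra (\cref{lem:conv-birkhoff}); you instead get $\Real\Omega_\pm=0$ from Vieta plus the conjugation symmetry $\rho_-=\overline{\rho_+}$, $\Omega_-=\overline{\Omega_+}$, and then determine $\Real\alpha_++\Real\alpha_-=-(k+l)/2$ from the Casoratian identity $C_{r+1}=b(r)C_r$, splitting it evenly via $\alpha_-=\overline{\alpha_+}$. I verified that this agrees with the paper's explicit formulas (the $\Omega$-dependent contributions to $\alpha_++\alpha_-$ cancel), and it is an appealing shortcut: it exposes the decay exponent as coming directly from $\prod_j\lvert b(j)\rvert\sim r^{-(k+l)/2}$ without unpacking \cref{eq:birkhoff-alpha}. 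Two small points to tighten: $b(r)$, being a ratio of square roots of polynomials in $n=n_0+(r+1)\Delta$, admits a full asymptotic expansion in powers of $r^{-1}$ (you display only two terms but \cref{prop:birkhoff} needs all orders, cf.\ \cref{lem:gammar-expansion}); and the leading Casoratian coefficient $C_0^+C_0^-(\rho_+-\rho_-)$ must be nonzero for the exponent matching—immediate here since $\rho_+\neq\rho_-$ and the Casoratian of independent solutions never vanishes.
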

The remainder of this section will be devoted to proving \cref{prop:conv-birkhoff}. We will proceed as follows:
\begin{enumerate}
    \item First, we evaluate a linearly independent family of solutions of~\cref{eq:kpm-recurrence};
    \item Secondly, we will check whether these solutions are in $\ell^2(\cnum)$---and thus, correspond via \cref{eq:psipm-expansion} to proper elements of the deficiency subspaces of $\Aop$---by employing the results on the asymptotics recalled in~\cref{sec:recurrence}.
\end{enumerate}
We refer to \cref{sec:deficiency_sub} for the consequent characterization of all self-adjoint extensions of $\Aop$.

\subsubsection{Step 1: finding a linearly independent set of solutions}

We start with the following result:

\begin{lemma}
    \label{lem:rec-delta-solutions}
    Let $k>l \in \nnum$ and $\theta \in \rnum$.
    Then the recurrence relation~\eqref{eq:kpm-recurrence} has $k-l$ linearly independent solutions in $\cnum^\nnum$.
\end{lemma}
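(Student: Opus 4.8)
The plan is to solve the recurrence by forward propagation and to count the resulting degrees of freedom directly. Set $\Delta = k - l$. The key observation is that \eqref{eq:kpm-recurrence} couples the coefficient $c_n$ only to $c_{n\pm\Delta}$, so it decouples into $\Delta$ independent chains indexed by the residue class of $n$ modulo $\Delta$; equivalently, I will track how each $c_m$ is determined by lower-index data. Two facts drive the argument: by \cref{lem:beta-sym}, $\beta^{kl}_n = 0$ precisely for $n < l$ and $\beta^{lk}_n = 0$ precisely for $n < k$, while $\beta^{kl}_n > 0$ for every $n \ge l$ (being a product of positive Pochhammer factors); and, since $f(n)$ is real-valued, the factor $f(n) \mp \iu$ never vanishes.

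First I would treat the lowest indices. For $0 \le n \le l-1$ both $\beta$-coefficients vanish, so \eqref{eq:kpm-recurrence} collapses to $(f(n) \mp \iu)\,c_n = 0$, forcing $c_0 = \dots = c_{l-1} = 0$ (this range is empty when $l = 0$). For $l \le n \le k-1$ one has $\beta^{kl}_n > 0$ but $\beta^{lk}_n = 0$, so the relation reads $\e^{-\iu\theta}\beta^{kl}_n c_{n+\Delta} + (f(n)\mp\iu)c_n = 0$, which expresses $c_{n+\Delta}$ in terms of $c_n$ without constraining $c_n$ itself. Finally, for $n \ge k$ both coefficients are nonzero and, since $\beta^{kl}_n > 0$, the relation can be solved for the highest index,
\begin{equation*}
    c_{n+\Delta} = -\frac{1}{\e^{-\iu\theta}\beta^{kl}_n}\left[\e^{\iu\theta}\beta^{lk}_n\, c_{n-\Delta} + (f(n)\mp\iu)\,c_n\right].
\end{equation*}

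The counting then proceeds as follows. The values $c_l, c_{l+1}, \dots, c_{k-1}$ --- exactly $k-l = \Delta$ of them --- may be chosen freely; the coefficients $c_0, \dots, c_{l-1}$ are forced to vanish; and every $c_m$ with $m \ge k$ is determined, once and for all, by the relation at index $n = m - \Delta \ge l$, whose leading coefficient $\beta^{kl}_{m-\Delta}$ is nonzero. I would then verify that there is no over-determination: each index $m \ge k$ is produced by exactly one equation (that at $n = m-\Delta$), and the equations at indices $n < l$ impose no constraint on the free data $c_l,\dots,c_{k-1}$, because the coefficient $\beta^{kl}_n$ multiplying $c_{n+\Delta}$ vanishes there. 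Choosing the standard basis of initial data, i.e. the solutions $(c^{(j)}_i)$ determined by $c^{(j)}_i = \delta_{ij}$ for $i,j \in \{l,\dots,k-1\}$ and propagated forward, yields $k-l$ elements of $\cnum^\nnum$ whose initial segments are linearly independent, hence the solutions themselves are linearly independent (and in fact span the whole solution space).

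The argument is essentially bookkeeping, so the one place demanding care --- and the main obstacle --- is the correct treatment of the low-index boundary, where the abuse of notation for negative indices (\cref{rem:abuse2}) and the vanishing of the $\beta$-coefficients must be handled precisely. One must confirm that the number of free parameters is exactly $k-l$: neither more, which would happen if one overlooked the forced vanishing of $c_0,\dots,c_{l-1}$, nor fewer, which would happen if one mistook one of the bottom equations for a constraint on the free data rather than a propagation rule for a higher coefficient. Once this boundary analysis is pinned down, the dimension count is immediate.
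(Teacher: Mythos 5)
Your proof is correct and follows essentially the same route as the paper's: the same three-range case analysis ($n<l$, $l\le n<k$, $n\ge k$) driven by the vanishing pattern of $\beta^{kl}_n$ and $\beta^{lk}_n$, with the free data identified as $c_l,\dots,c_{k-1}$. The only difference is presentational—the paper counts a $2(k-l)$-dimensional solution space for the order-$2(k-l)$ relation and subtracts the $k-l$ boundary constraints, whereas you obtain the count directly by forward propagation from the $k-l$ free initial values, which is slightly more self-contained but amounts to the same bookkeeping.
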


\begin{proof}
    By~\cref{lem:beta-sym}, $\beta^{kl}_n = 0$ for $n < l$ and $\beta^{kl}_n\neq0$ for $n\geq l$; analogously, $\beta^{lk}_n = 0$ for $n < k$, and $\beta^{lk}_n\neq0$ for $n\geq k$.
    Thus, we have
    \begin{align}
        \label{proofeq:recurrence-ll} (f(n)\mp\iu) c_n^\pm & = 0, && n < l \, ; \\
        \label{proofeq:recurrence-lk}\e^{-\iu \theta} \beta^{kl}_n c^\pm_{n+(k-l)} +(f(n)\mp\iu) c^\pm_n & = 0, && l \leq n < k; \, , \\
        \label{proofeq:recurrence-geqk}\e^{-\iu \theta} \beta^{kl}_n c^\pm_{n+(k-l)}+\e^{\iu \theta} \beta^{lk}_n c^\pm_{n-(k-l)} +(f(n)\mp\iu) c^\pm_n & = 0, && n \geq k\, ,
    \end{align}
    with all terms $\beta^{kl}_n$ and $\beta^{lk}_n$ above being nonzero. The linear system~\eqref{proofeq:recurrence-ll} is decoupled from the system corresponding to \cref{proofeq:recurrence-lk,proofeq:recurrence-geqk}, as the former only involves the coefficients $(c^\pm_n)_{n<l}$ and the latter only involve the coefficients $(c^\pm_n)_{n\geq l}$.  Besides, as $f(n) \in \rnum$,~\cref{proofeq:recurrence-ll} uniquely fixes all $(c^\pm_n)_{n<l}$ to zero. So the dimensionality of the space of solutions of the recurrence relation~\eqref{eq:kpm-recurrence} coincides with the one of~\eqref{proofeq:recurrence-lk}--\eqref{proofeq:recurrence-geqk}.
           
    Now, \cref{proofeq:recurrence-geqk} is a linear recurrence relation of order $2(k-l)$ in $(c^\pm_n)_{n\geq l}$, thus admits $2(k-l)$ linearly independent solutions~\cite[Theorem 2.18]{elaydi-introductiondifferenceequations-2005}. At the same time, \cref{proofeq:recurrence-lk} imposes $k-l$ additional constraints on $(c^\pm_n)_{n\geq l}$, namely
    \begin{equation}
        c^\pm_{n+(k-l)} = - \frac{(f(n) \mp\iu) \e^{\iu \theta}}{\beta^{kl}_{n}} c^\pm_{n} \qquad l \leq n < k \, ,
    \end{equation}
    which reduces the dimension of the space of solutions to $k-l$, thus implying the claim.
  \end{proof}

We then need to construct $k-l$ linearly independent solutions of the recurrence relation~\eqref{eq:kpm-recurrence}. To this end, we note that the recurrence relation is \textit{tridiagonal}: for each $n\in\mathbb{N}$,~\cref{eq:kpm-recurrence} only involves the three entries $c^\pm_{n-(k-l)},c^\pm_n,c^\pm_{n+(k-l)}$ of the sequence to be determined. Consequently, it decomposes into $k-l$ decoupled recurrence relations of second order. One can thus construct $k-l$ linearly independent solutions of \cref{eq:kpm-recurrence} by solving each of these $k-l$ second-order relations, and ``completing'' them with zeros. This is the content of the following lemma:

\begin{lemma}
\label{lem:rec-decouple}
    Let $k > l \in \nnum$, $\theta \in \rnum$, $n_0\in\mathbb{N}$ such that $l \leq n_0 < k$, and assume that $(\tcrpm_r)_{r \in \nnum}$ is a solution of the second-order recurrence relation
    \begin{equation}
    \label{eq:kpm-recurrence-r}
        \e^{-\iu \theta} \beta^{kl}_{n_0+r(k-l)} \tcrpm_{r+1}+ \e^{+\iu \theta} \beta^{lk}_{n_0+r(k-l)}\tcrpm_{r-1} + (f(n_0+r(k-l))\mp \iu)\tcrpm_{r} = 0
    \end{equation}
    for all $r \in \nnum$. Then the sequence $(\cnpm_n)_{n \in \nnum}$ given by
    \begin{equation}
    \label{eq:cnpm-construction}
        \cnpm_n = \begin{cases}
            \tcrpm_r, & n = n_0+r(k-l);\\
            0,&\text{otherwise},
        \end{cases}
    \end{equation}
    obeys the recurrence relation~\eqref{eq:kpm-recurrence}. Furthermore, sequences $(c^{(n_0,\pm)}_n)_{n\in\mathbb{N}}$ corresponding to distinct values of $n_0$ are linearly independent.
\end{lemma}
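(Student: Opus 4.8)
The plan is to exploit the tridiagonal structure of \eqref{eq:kpm-recurrence}: at each index $n$ the relation couples only the three entries $c^\pm_{n-(k-l)}$, $c^\pm_n$, $c^\pm_{n+(k-l)}$, whose indices all lie in the same residue class modulo $k-l$. Hence the full recurrence block-decomposes into $k-l$ mutually independent second-order recurrences, one per residue class, and the representatives $n_0\in\{l,l+1,\dots,k-1\}$—being $k-l$ consecutive integers—exhaust these classes exactly once. The sequence $(\cnpm_n)$ built from \eqref{eq:cnpm-construction} is supported precisely on the arithmetic progression $n_0+(k-l)\nnum$, i.e.\ on the residue class of $n_0$, so it is a natural candidate for the solution attached to that block.

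To verify that $(\cnpm_n)$ solves \eqref{eq:kpm-recurrence}, I would check the relation index by index, splitting according to whether $n$ lies in the support $n_0+(k-l)\nnum$. When $n=n_0+r(k-l)$ with $r\geq 1$, substituting \eqref{eq:cnpm-construction} turns \eqref{eq:kpm-recurrence} verbatim into \eqref{eq:kpm-recurrence-r} at index $r$, which holds by hypothesis, since then $c^\pm_{n+(k-l)}=\tcrpm_{r+1}$, $c^\pm_{n}=\tcrpm_{r}$, and $c^\pm_{n-(k-l)}=\tcrpm_{r-1}$. The only delicate point is the boundary case $r=0$, i.e.\ $n=n_0$: here \eqref{eq:kpm-recurrence-r} formally involves $\tcrpm_{-1}$, but since $n_0<k$ we have $\beta^{lk}_{n_0}=0$ by \cref{lem:beta-sym}, so that term drops out and the two relations again coincide. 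This is exactly the abuse of notation flagged in \cref{rem:abuse2}.

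When $n\notin n_0+(k-l)\nnum$ we have $\cnpm_n=0$, and I must show the two neighbouring terms also vanish. If $n\not\equiv n_0\pmod{k-l}$, then $n\pm(k-l)$ lie in the same wrong residue class, so $\cnpm_{n\pm(k-l)}=0$ and \eqref{eq:kpm-recurrence} reduces to $0=0$. The only remaining possibility is $n\equiv n_0\pmod{k-l}$ with $n<n_0$, i.e.\ $n=n_0-m(k-l)\geq 0$ for some $m\geq 1$; any such $n$ satisfies $n\leq n_0-(k-l)<l$, so that $\beta^{kl}_n=\beta^{lk}_n=0$ by \cref{lem:beta-sym}, and every term of \eqref{eq:kpm-recurrence} vanishes regardless of the neighbours. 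In particular, the potentially nonzero neighbour $\cnpm_{n_0}=\tcrpm_0$ arising when $m=1$ is annihilated by the vanishing factor $\beta^{kl}_n$. This exhausts all cases and shows $(\cnpm_n)$ obeys \eqref{eq:kpm-recurrence}.

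Finally, linear independence follows immediately from the support structure: for distinct $n_0,n_0'\in\{l,\dots,k-1\}$ the progressions $n_0+(k-l)\nnum$ and $n_0'+(k-l)\nnum$ lie in distinct residue classes modulo $k-l$ and are therefore disjoint. Since each $(\tcrpm_r)$ is a nontrivial solution, the associated $(\cnpm_n)$ is a nonzero sequence, so restricting any vanishing linear combination to the support of a single $n_0$ forces the corresponding coefficient to be zero. I expect the verification at the boundary indices, where the auxiliary coefficients $\beta^{kl}_n$ and $\beta^{lk}_n$ vanish, to be the only step requiring genuine care; the rest is bookkeeping on arithmetic progressions.
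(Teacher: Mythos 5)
Your proof is correct and follows essentially the same route as the paper's: case analysis on whether $n$ lies in the arithmetic progression $n_0+(k-l)\nnum$, reduction to \eqref{eq:kpm-recurrence-r} on the support, and linear independence from the disjointness of the supports. If anything, you are slightly more explicit than the paper at the boundary indices (the $r=0$ case and the indices $n\equiv n_0 \pmod{k-l}$ with $n<l$, where the vanishing of $\beta^{kl}_n$ and $\beta^{lk}_n$ from \cref{lem:beta-sym} is what kills the potentially nonzero neighbour $\cnpm_{n_0}$), which the paper handles more tersely.
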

\begin{proof}
    Let $n \in \nnum$. If $n < l$, then $\cnpm_n = 0$ and \cref{eq:kpm-recurrence} is fulfilled, so let $n \geq l$. As $k>l$, there exists $l \leq m_0 < k$ and $r \in \nnum$ such that $n = m_0 +r(k-l)$.
    If $m_0 \neq n_0$, we get
    \begin{equation}
        \cnpm_n = \cnpm_{n+k-l} = \cnpm_{n-(k-l)} = 0
    \end{equation}
    and \cref{eq:kpm-recurrence} holds.
    Thus only the case $n_0 = m_0$ remains.
    Then, $\cnpm_{n} = \tcrpm_r$, $\cnpm_{n+k-l} = \tcrpm_{r+1}$ and $\cnpm_{n-(k-l)} = \tcrpm_{r-1}$, and \cref{eq:kpm-recurrence} becomes
    \begin{align}
         \e^{-\iu \theta} \beta^{kl}_n \cnpm_{n+(k-l)}+\e^{\iu \theta} \beta^{lk}_n \cnpm_{n-(k-l)} +(f(n)\mp \iu) \cnpm_n & = 0 \\
          \Leftrightarrow \quad \e^{-\iu \theta} \beta^{kl}_{n_0+r(k-l)} \tcrpm_{r+1}+\e^{\iu \theta} \beta^{lk}_{n_0+r(k-l)}\tcrpm_{r-1} 
          +(f(n_0+r(k-l))\mp \iu) \tcrpm_{r} & = 0\, .
    \end{align}
    But this is fulfilled by assumption.

    Finally, linear independence of solutions competing to distinct values of $n_0$ follows immediately from the fact that, by their definition~\eqref{eq:cnpm-construction}, for each value of the index $n$ only at most one solution---the one with $n=n_0+r(k-l)$ for some integer $r$---will have a nonzero $n$th entry.
\end{proof}

We have now decoupled the linear recurrence relation \cref{eq:kpm-recurrence} of order $2(k-l)$ in $n \in \nnum$ into $k-l$ second-order recurrence relations \cref{eq:kpm-recurrence-r} in $r \in \nnum$.
We will treat these recurrence relations separately and, in preparation for the second step of our proof, we slightly simplify them in the next lemma.
\begin{lemma}
\label{lem:rec-dr-def}
     Let $k > l \in \nnum$, $\theta \in \rnum$, $l \leq n_0 < k$.
     Set $\Delta = k-l$.
     Let $(\dr_r)_{r \in \nnum}$ be a solution of the recurrence relation
     \begin{equation}
        \label{eq:rec-dn0r}
         \beta^{kl}_{n_0+r\Delta} \dr_{r+1} - (1+\iu f(n_0+r\Delta)) \dr_r - \beta^{lk}_{n_0 + r\Delta} \dr_{r-1} = 0\, .
     \end{equation}
     Then the sequences $(\tcrpm_r)_{r \in \rnum}$ given by
     \begin{equation}
         \tilde{c}^{(+,n_0)} = \iu^r \e^{\iu r \theta} \dr_r \, ,  \qquad \tilde{c}^{(-,n_0)} = (-\iu)^r \e^{\iu r \theta} \bigl(\dr_r\bigr)^\ast
     \end{equation}
     obey the recurrence relation~\eqref{eq:kpm-recurrence-r}.
\end{lemma}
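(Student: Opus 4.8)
The plan is to treat \cref{lem:rec-dr-def} as a direct \emph{verification}: I substitute each of the two proposed ansätze into the second-order recurrence \cref{eq:kpm-recurrence-r} and check that, after stripping off a common phase factor, the relation collapses onto \cref{eq:rec-dn0r}. The whole argument hinges on a single structural observation: for fixed $n_0$ and $r$, the three coefficients $\beta^{kl}_{n_0+r\Delta}$, $\beta^{lk}_{n_0+r\Delta}$ and $f(n_0+r\Delta)$ are real numbers, so every complex phase appearing in \cref{eq:kpm-recurrence-r} is carried solely by the prefactors $\e^{\pm\iu\theta}$ and by the ansatz itself.

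For the $+$ branch I would insert $\tilde c^{(+,n_0)}_r=\iu^r\e^{\iu r\theta}\dr_r$ into \cref{eq:kpm-recurrence-r} with the upper sign, so that $f\mp\iu=f-\iu$. Shifting $r\mapsto r\pm1$ in the two off-diagonal terms produces factors $\e^{\pm\iu\theta}$ that exactly cancel the prefactors $\e^{\mp\iu\theta}$ already present, while the powers of $\iu$ supply an extra $\iu$ on the $\beta^{kl}$ term and an extra $\iu^{-1}=-\iu$ on the $\beta^{lk}$ term. Factoring out the nonzero common phase $\iu^r\e^{\iu r\theta}$ leaves $\iu\beta^{kl}_{n_0+r\Delta}\dr_{r+1}-\iu\beta^{lk}_{n_0+r\Delta}\dr_{r-1}+(f(n_0+r\Delta)-\iu)\dr_r=0$, and multiplying through by $-\iu$ turns this into precisely \cref{eq:rec-dn0r}. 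Hence the $+$ ansatz solves \cref{eq:kpm-recurrence-r} whenever $\dr_r$ solves \cref{eq:rec-dn0r}.

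For the $-$ branch I would proceed identically with $\tilde c^{(-,n_0)}_r=(-\iu)^r\e^{\iu r\theta}(\dr_r)^\ast$ and the lower sign, so that $f\mp\iu=f+\iu$. The same cancellation of the exponential phases occurs, and after factoring out $(-\iu)^r\e^{\iu r\theta}$ one is left with an equation in the conjugates $(\dr_{r+1})^\ast,(\dr_r)^\ast,(\dr_{r-1})^\ast$. At this point I would take the complex conjugate of the entire relation; because $\beta^{kl}_{n_0+r\Delta}$, $\beta^{lk}_{n_0+r\Delta}$ and $f(n_0+r\Delta)$ are real, conjugation maps this equation back onto the very same relation obtained in the $+$ case, which we already identified with \cref{eq:rec-dn0r}. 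Thus the conjugated ansatz is a solution of \cref{eq:kpm-recurrence-r} exactly when $\dr_r$ solves \cref{eq:rec-dn0r}.

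There is no genuine obstacle here; the only thing that demands care is the bookkeeping of the powers of $\iu$ together with the exponential phases $\e^{\pm\iu\theta}$, and, in the $-$ case, recognizing that the step linking the two branches is a single global complex conjugation, legitimate precisely because the recurrence coefficients are real. This is also the conceptual reason the two branches are given conjugate structures: the transformation is engineered so that the $\mp\iu$ in \cref{eq:kpm-recurrence-r}—the only place where the two deficiency spaces $K_\pm$ differ—is absorbed into passing from $\dr_r$ to $(\dr_r)^\ast$, leaving a single real-coefficient recurrence \cref{eq:rec-dn0r} to analyze in the sequel.
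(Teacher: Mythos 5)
Your verification is correct and follows essentially the same route as the paper: substitute the ansatz into \cref{eq:kpm-recurrence-r}, cancel the common phase $(\pm\iu)^r\e^{\iu r\theta}$ together with the $\e^{\mp\iu\theta}$ prefactors, and reduce to \cref{eq:rec-dn0r}, handling the $-$ branch by a global complex conjugation justified by the realness of $\beta^{kl}_n$, $\beta^{lk}_n$ and $f(n)$. The sign bookkeeping in both branches checks out, so there is nothing to add.
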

\begin{proof}
    In the following, for the ease of notation, we set
    \begin{equation}
        \label{proofeq:recurrence-dr-conj}
        d_r^{(n_0,+)} = d_r^{(n_0)} \,  ,\quad  d_r^{(n_0,-)} = \bigl(d_r^{(n_0)}\bigr)^\ast \, ,
    \end{equation}
    and therefore 
    \begin{equation}\label{proofeq:ctod}
        \tcrpm_r = (\pm \iu)^r \e^{\iu r \theta} \drpm_r \, .
    \end{equation}
    With $\Delta = k-l$, \cref{eq:kpm-recurrence-r} becomes
    \begin{align}
        \e^{-\iu \theta} \beta^{kl}_{n_0+r\Delta} \tcrpm_{r+1}+ \e^{+\iu \theta} \beta^{lk}_{n_0+r\Delta}\tcrpm_{r-1} + (f(n_0+r \Delta) \mp \iu)\tcrpm_{r} & = 0 \\
        \Leftrightarrow \quad (\pm \iu)^{r+1} \e^{\iu(r+1) \theta} \e^{-\iu \theta} \beta^{kl}_{n_0+r\Delta} \drpm_{r+1} + (\pm \iu)^{r-1} \e^{\iu(r-1) \theta} \e^{+\iu \theta} \beta^{lk}_{n_0+r\Delta} \drpm_{r-1} \quad & \\
        + (\pm \iu)^r \e^{\iu r \theta} (f(n_0+r \Delta) \mp \iu) \drpm_r &  = 0 \\
        \Leftrightarrow \quad (\pm \iu)^2 \beta^{kl}_{n_0+r\Delta} \drpm_{r+1} + \beta^{lk}_{n_0+r\Delta} \drpm_{r-1} + (\pm \iu) (f(n_0+r \Delta) \mp \iu) \drpm_r &  = 0 \\
        \Leftrightarrow \quad - \beta^{kl}_{n_0+r\Delta} \drpm_{r+1} + \beta^{lk}_{n_0+r\Delta} \drpm_{r-1} + (\pm \iu f(n_0+r \Delta) +1) \drpm_r &  = 0 \\ 
        \Leftrightarrow \quad  \beta^{kl}_{n_0+r\Delta} \drpm_{r+1} -  (1 \pm \iu f(n_0+r \Delta) ) \drpm_r -\beta^{lk}_{n_0+r\Delta} \drpm_{r-1} &  = 0. 
    \end{align}
    In the ($+$) case, this is fulfilled by assumption as $ d_r^{(n_0,+)} = d_r^{(n_0)}$.
    In the ($-$) case, using \cref{proofeq:recurrence-dr-conj}, we get
    \begin{equation}
        \beta^{kl}_{n_0+r\Delta} \bigl(\dr_{r+1}\bigr)^\ast -  (1 - \iu f(n_0+r \Delta) )  \bigl(\dr_{r}\bigr)^\ast -\beta^{lk}_{r+\Delta}  \bigl(\dr_{r-1}\bigr)^\ast  = 0.
    \end{equation}
    As $\beta^{kl}_n, \beta^{lk}_n$ and $f(n)$ are real for all $n \in \nnum$, this is the complex conjugate of \cref{eq:rec-dn0r} and therefore fulfilled by assumption.
\end{proof}

Summing up: we showed that the recurrence relation~\eqref{eq:kpm-recurrence} yielding the elements of the deficiency subspace of $\Aop$ admits $k-l$ linearly independent solutions (\cref{lem:rec-delta-solutions}) and that it decouples into $k-l$ recurrence relations of second order (\cref{lem:rec-decouple}), which can  then be recast in a slightly simplified form (\cref{lem:rec-dr-def}). As each of these is a linear recurrence relation of second order with nonzero coefficients, each of these equations can be successfully solved.

As previously remarked, however, we still need to check whether these solutions are in $\ell^2(\mathbb{C})$. We will check this in the next step by making use of the results in \cref{sec:recurrence}.

\subsubsection{Step 2: checking square summability}

To begin with, we note that each of the solutions $(c_n^{(n_0,+)})_{n\in\mathbb{N}}$ of the recurrence relation~\eqref{eq:kpm-recurrence} corresponding to a solution $(d^{(n_0)}_r)_{r\in\mathbb{N}}$ of the recurrence relation~\eqref{eq:rec-dn0r} satisfies
\begin{equation}
\label{eq:cn-norm-pm}
    |c_n^{(n_0,+)}| = |c_n^{(n_0,-)}| = \begin{cases}
        |d^{(n_0)}_r|, & n = n_0 + r \Delta \text{ for some } r \in \nnum; \\
        0, & \text{otherwise}.
    \end{cases}
\end{equation}
Hence, we only need to determine whether the sequences $(d_r^{(n_0)})_{r \in \nnum}$ are in $ \ell^2(\cnum)$. As these are solutions of a second-order linear recurrence relation with nonzero coefficients, we can apply the asymptotic results recalled in \cref{sec:recurrence} to this end.

To simplify the notation, from now on we drop the indices $n_0,k,l$, and assume the dependence on them implicitly. We will thus write $d_r = d_r^{(n_0)}$, and also adopt the shorthands
\begin{equation}
    \label{eq:dr-shorthand}
    \gamma_r = \beta^{lk}_{n_0+r\Delta}, \qquad \delta_r = f(n_0+r\Delta) \, ,
\end{equation}
with $\Delta = k-l$. 
By \cref{lem:beta-sym}, 
\begin{equation}
    \beta^{kl}_{n_0+r\Delta} = \beta^{lk}_{n_0+r\Delta + k-l} = \beta^{lk}_{n_0+(r+1)\Delta} = \gamma_{r+1} \, .
\end{equation}
Thus, \cref{eq:rec-dn0r} becomes
\begin{equation}
    \gamma_{r+1}d_{r+1} -(1+\iu \delta_r) d_r - \gamma_r d_{r-1} \, .
\end{equation}
Dividing by $\gamma_{r+1}$ and shifting the index $r$ by $1$, we get
\begin{equation}
    \label{eq:rec-dr-normal}
    d_{r+2} - \frac{1 + \iu \delta_{r+1}}{\gamma_{r+2}} d_{r+1} - \frac{\gamma_{r+1}}{\gamma_{r+2}} d_{r} = 0 \, ,
\end{equation}
which is indeed a linear recurrence relation of the form~\eqref{eq:recurrence-birkhoff} discussed in \cref{sec:recurrence}. 

\begin{remark}\label{rem:gorska2}
In the special case $l=0$ and $f(n)=0$, the recurrence relation \cref{eq:rec-dr-normal} for $\Aop$ reduces to Eq.\ (4.15) in \cite{gorska-squeezingarbitraryorder-2014}, as expected (cf. \cref{rem:gorska}).
The authors of \cite{gorska-squeezingarbitraryorder-2014} argue that the corresponding solutions are square summable.
However, the justification they provide relies on an incorrect inequality, specifically Eq.\ (4.16) in Proposition 4.6.
To address this, from this point onward we adopt a different strategy and provide sufficient conditions for the square summability of solutions to \cref{eq:rec-dr-normal}, which in particular cover the special case treated in \cite{gorska-squeezingarbitraryorder-2014}.
\end{remark}
 
In order to apply Birkhoff's theory (cf. \cref{sec:recurrence}), we need to determine the asymptotic expansion of the coefficients in~\cref{eq:rec-dr-normal}, and then determine whether the solutions given by~\cref{prop:birkhoff} are square summable.
To this end, we show the following result:
\begin{lemma}
    \label{lem:gammar-expansion}
    Let $\gamma_r$ be defined as in~\cref{eq:dr-shorthand}.
    Then the ratios $\frac{\gamma_{r+1}}{\gamma_{r+2}}$ and $\frac{1}{\gamma_{r+2}}$ admit asymptotic expansions in powers of $r^{-1/2}$ respectively satisfying    \begin{equation}\label{eq:beta_asymptotics}
\frac{\gamma_{r+1}}{\gamma_{r+2}}\sim1-\frac{k+l}{2r}+o(r^{-1}),
    \end{equation}
and
\begin{equation}
    \frac{1}{\gamma_{r+2}} \sim \frac{1}{(\Delta r)^{(k+l)/2}} + o(r^{-(k+l)/2}) \, .
\end{equation}
\end{lemma}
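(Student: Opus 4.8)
The plan is to reduce both expansions to elementary manipulations of rational functions, exploiting the fact that the square $\gamma_r^2$ is itself a polynomial in $r$. Indeed, by \cref{def:beta} the quantity $(\beta^{lk}_n)^2 = (n-k+1,k)(n-k+1,l)$ is a product of two monic polynomials in $n$ of degrees $k$ and $l$, hence a monic polynomial of degree $k+l$. Substituting $n = n_0 + r\Delta$ with $\Delta = k-l$, it follows that
\[
\gamma_r^2 = (\beta^{lk}_{n_0+r\Delta})^2 = P(r),
\]
where $P$ is a polynomial in $r$ of degree $k+l$ with leading coefficient $\Delta^{k+l}$; in particular $\gamma_r > 0$ for all sufficiently large $r$, consistently with \cref{lem:beta-bounds}. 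I would record the two leading terms as $P(r) = \Delta^{k+l} r^{k+l} + c\, r^{k+l-1} + O(r^{k+l-2})$, keeping in mind that the subleading coefficient $c$ depends on $n_0$ but, as the computation will show, drops out of the quantities of interest.

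For the first ratio, I would write $(\gamma_{r+1}/\gamma_{r+2})^2 = P(r+1)/P(r+2)$, a quotient of two degree-$(k+l)$ polynomials with identical leading coefficient $\Delta^{k+l}$. Such a quotient is a rational function of $r$ tending to $1$, and therefore admits an asymptotic expansion in integer powers of $1/r$ (its Laurent expansion at infinity). Expanding $P(r+j)$ via the binomial theorem, the coefficient of $r^{k+l-1}$ equals $\Delta^{k+l}(k+l)j + c$, so that
\[
\frac{P(r+1)}{P(r+2)} = \frac{1 + [(k+l)+c\Delta^{-(k+l)}]/r + O(r^{-2})}{1 + [2(k+l)+c\Delta^{-(k+l)}]/r + O(r^{-2})} = 1 - \frac{k+l}{r} + O(r^{-2}),
\]
where the $n_0$-dependent constant $c$ cancels in the first-order term. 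Since the square root is analytic at $1$, composing this expansion with $\sqrt{\cdot}$ (legitimate as the argument is eventually bounded away from $0$) yields $\gamma_{r+1}/\gamma_{r+2} = 1 - (k+l)/(2r) + o(r^{-1})$, which is exactly \cref{eq:beta_asymptotics}; integer powers of $1/r$ are in particular powers of $r^{-1/2}$.

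For the second expansion I would factor out the leading behaviour: setting $Q(r) = P(r+2)/(\Delta(r+2))^{k+l}$, one has $Q(r) \to 1$ and $\gamma_{r+2} = (\Delta(r+2))^{(k+l)/2}\sqrt{Q(r)}$, so that
\[
\frac{1}{\gamma_{r+2}} = \frac{1}{(\Delta(r+2))^{(k+l)/2}}\,\frac{1}{\sqrt{Q(r)}}.
\]
Here $1/\sqrt{Q(r)}$ again admits an expansion in integer powers of $1/r$ with constant term $1$, while $(r+2)^{-(k+l)/2} = r^{-(k+l)/2}(1+2/r)^{-(k+l)/2} = r^{-(k+l)/2}(1 + O(r^{-1}))$. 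Multiplying these gives $1/\gamma_{r+2} = (\Delta r)^{-(k+l)/2} + o(r^{-(k+l)/2})$; the full expansion runs over the powers $r^{-(k+l)/2 - j}$ with $j \in \nnum$, all of which are of the form $r^{-m/2}$, hence powers of $r^{-1/2}$.

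The steps above are routine polynomial algebra; the only point requiring care—and the main conceptual obstacle—is justifying that these algebraic operations, namely dividing two asymptotic series with nonvanishing leading coefficient and composing with the analytic functions $x \mapsto \sqrt{x}$ and $x \mapsto x^{-1/2}$ near their nonzero limits, preserve the existence of an asymptotic expansion in the chosen scale $(r^{-s/2})_{s}$ in the sense of \cref{def:asymptotic-expansion}. This is a standard fact (the relevant functions are analytic at the limit point, and $\gamma_r$ is eventually positive so the square roots are well defined); once it is granted, only the first-order coefficients need be extracted, which is immediate.
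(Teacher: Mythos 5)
Your proposal is correct. The computation of the leading coefficient of $P(r+j)$, the cancellation of the $n_0$-dependent subleading coefficient $c$ in the quotient $P(r+1)/P(r+2)$, and the final composition with the square root are all sound, and the two claimed expansions follow. Your route differs from the paper's in its organization: the paper rewrites $\gamma_{r+1}/\gamma_{r+2}$ via factorials as an explicit finite product $\prod_{j=1}^{\Delta}\sqrt{1-k/(n_0+(r+1)\Delta+j)}\,\sqrt{1-l/(n_0+r\Delta+j)}$ and expands each linear factor separately, so that the coefficient $-\tfrac{k+l}{2r}$ emerges term by term and no subleading polynomial coefficient ever needs to be tracked; you instead work with the single squared quantity $\gamma_r^2=P(r)$ as a polynomial quotient and extract the square root only at the end, which avoids the factorial manipulation at the cost of having to verify that $c$ drops out (which you do). Both arguments rest on the same analyticity fact --- that dividing expansions with nonvanishing leading term and composing with $x\mapsto x^{\pm 1/2}$ near a positive limit preserves asymptotic expansions in the scale $(r^{-s/2})_s$ --- which the paper justifies by appeal to real-analyticity (citing Krantz--Parks) exactly as you indicate; your treatment of $1/\gamma_{r+2}$ is essentially identical to the paper's. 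No gap.
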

\begin{proof}
Recalling the definition~\eqref{eq:dr-shorthand} of $\gamma_r$ in terms of the coefficients $\beta^{lk}_n$ and, in turn, their definition in terms of Pochhammer symbols (\cref{def:beta}), we get
    \begin{align}
        \frac{\gamma_{r+1}}{\gamma_{r+2}}& =\frac{\beta^{lk}_{n_0+(r+1)\Delta}}{\beta^{lk}_{n_0+(r+2)\Delta}}= \left(\frac{(n_0+(r+1)\Delta-k+1,k)(n_0+(r+1)\Delta-k+1,l)}{(n_0+(r+2)\Delta-k+1,k)(n_0+(r+2)\Delta-k+1,l)}\right)^{1/2} \\
            & = \left(\frac{(n_0+(r+1)\Delta)!\, (n_0+r\Delta)!\, ((n_0+(r+2)\Delta-k)!)^2}{((n_0+(r+1)\Delta-k)!)^2(n_0+(r+2)\Delta)!\, (n_0+(r+1)\Delta)!}\right)^{1/2} \\ 
            & = \left(\prod_{i=j}^{\Delta} \frac{(n_0+(r+1)\Delta-k+j)^2}{(n_0+r\Delta +j)(n_0+(r+1)\Delta +j)}\right)^{1/2} \\
 & =\prod_{j=1}^{\Delta} \sqrt{\frac{n_0+(r+1)\Delta-k+j}{n_0+(r+1)\Delta +j}} \sqrt{\frac{n_0+r\Delta-l+j}{n_0 +r\Delta+j}} \, \\
 & =\prod_{j=1}^{\Delta} \sqrt{1-\frac{k}{n_0+(r+1)\Delta+j}}\,\sqrt{1-\frac{l}{n_0+r\Delta+j}} .\label{eq:finiteproduct}
    \end{align}
This can be extended to a function on $\mathbb{R}_+$ which is clearly real analytic in $r^{-1}$, as it is the product of finitely many real analytic functions; \textit{a fortiori}, the ratio admits an asymptotic expansion in powers of $r^{-1}$. To determine its first two coefficients, it will suffice to recall that, for every $\alpha\in\mathbb{R}$, 
\begin{equation}
    (1-x)^\alpha=1-\alpha x+o(x)\qquad(x\to0).
\end{equation}
As the product above is finite, we can expand each term in \cref{eq:finiteproduct}: as $r\to+\infty$,
    \begin{align}
        \sqrt{1-\frac{k}{n_0+(r+1)\Delta+j}}&=1-\frac{k}{2\left(n_0+(r+1)\Delta+j\right)}+o(r^{-1})=1-\frac{k}{2r\Delta}+o(r^{-1}),\\
         \sqrt{1-\frac{l}{n_0+r\Delta+j}}&=1-\frac{l}{2\left(n_0+r\Delta+j\right)}+o(r^{-1})=1-\frac{l}{2r\Delta}+o(r^{-1}),
    \end{align}
    whence
    \begin{align}\label{eq:expansion}
         \sqrt{1-\frac{k}{n_0+(r+1)\Delta+j}} \sqrt{1-\frac{l}{n_0+r\Delta+j}}&=1-\frac{k+l}{2r\Delta}+o(r^{-1}),
    \end{align}
    and finally, using \cref{eq:finiteproduct} and \cref{eq:expansion}, 
    \begin{align}
        \frac{\gamma_{r+1}}{\gamma_{r+2}}=1-\frac{k+l}{2r}+o(r^{-1}),
    \end{align}
thus proving the first claim.

We can treat $\frac{1}{\gamma_{r+2}}$ in a similar manner. Per \cref{def:beta}
\begin{align}
    \beta^{lk}_{n} & = \sqrt{(n-k+1,k)(n-k+1,l)} \\
    & = \sqrt{n^{k+l}+a_1 n^{k+l-1}+\dots + a_{k+l}} = n^{(k+l)/2}\sqrt{1+a_1 n^{-1}+\dots+a_{k+l}n^{-k-l}}\, , 
\end{align}
with $(a_j)_{j=1}^{k+l} \in \rnum$, where we used the fact that the Pochhammer symbol $(x,s)$ is a polynomial of order $s$ in $x$.
Using again the properties of the square root and of real analytic functions \cite[Prop. 1.1.15]{krantz-primerrealanalytic-1992}, we thus get:
\begin{equation}
    \frac{1}{\beta^{lk}_n} \sim n^{-(k+l)/2} \frac{1}{1+o(1)} \sim n^{-(k+l)/2} + o(n^{-(k+l)/2})\, , 
\end{equation}
and finally 
\begin{align}
    \frac{1}{\gamma_{r+2}} & = \frac{1}{\beta^{lk}_{n_0+(r+2)\Delta}} \sim  (n_0+(r+2)\Delta)^{-(k+l)/2} + o((n_0+(r+2)\Delta)^{-(k+l)/2})\\
    & \sim  \frac{1}{(\Delta r)^{(k+l)/2}} + o(r^{-(k+l)/2})\, ,
\end{align}
thus completing the proof.
\end{proof}

\begin{lemma}\label{lem:conv-birkhoff}
    Let $(d_r)_{r \in \nnum}$ be a solution of the recurrence relation
    \begin{equation}
        \label{eq:lem-dr-normalform}
         d_{r+2} - \frac{1 + \iu \delta_{r+1}}{\gamma_{r+2}} d_{r+1} - \frac{\gamma_{r+1}}{\gamma_{r+2}} d_{r} = 0
    \end{equation}
    with arbitrary initial conditions $d_0,d_1 \in \cnum$, and $\gamma_r$ and $\delta_r$ be given by~\cref{eq:dr-shorthand}.
    Assume that the ratio $\frac{\delta_{r+1}}{\gamma_{r+2}}$ admits an asymptotic expansion in powers of $r^{-1/2}$:
    \begin{equation}\label{eq:assump_asy}
        \frac{\delta_{r+1}}{\gamma_{r+2}} \sim \kappa + \frac{L_1}{r^{1/2}} + \frac{L_2}{r}+\dots
    \end{equation}
    with $\kappa < 2$. Then $(d_r)_{r \in \nnum} \in \ell^2(\cnum)$.
\end{lemma}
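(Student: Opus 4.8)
The plan is to recast the recurrence~\eqref{eq:lem-dr-normalform} into the standard form~\eqref{eq:recurrence-birkhoff} and apply \cref{prop:birkhoff}, then read off the growth rate of the resulting fundamental solutions. Writing the recurrence as $d_{r+2}+a(r)d_{r+1}+b(r)d_r=0$, we identify
\begin{equation}
a(r)=-\frac{1}{\gamma_{r+2}}-\iu\frac{\delta_{r+1}}{\gamma_{r+2}}\,,\qquad b(r)=-\frac{\gamma_{r+1}}{\gamma_{r+2}}\,.
\end{equation}
By \cref{lem:gammar-expansion} and the standing assumption~\eqref{eq:assump_asy}, both coefficients admit asymptotic expansions in powers of $r^{-1/2}$. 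Since $k+l\geq3$, the term $1/\gamma_{r+2}$ decays like $r^{-(k+l)/2}=o(r^{-1})$ and thus contributes only beyond the orders we need, so I would read off $a_0=-\iu\kappa$, $a_1=-\iu L_1$, $a_2=-\iu L_2$, while~\eqref{eq:beta_asymptotics} gives $b_0=-1$, $b_1=0$, $b_2=(k+l)/2$. All the $\kappa,L_j$ are real since $\delta_{r+1}/\gamma_{r+2}\geq0$.

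First I would verify the hypotheses of \cref{prop:birkhoff}. We have $b_0=-1\neq0$, and the characteristic equation $\rho^2-\iu\kappa\rho-1=0$ has roots
\begin{equation}
\rho_\pm=\frac{\iu\kappa\pm\sqrt{4-\kappa^2}}{2}\,.
\end{equation}
Because $0\leq\kappa<2$, the quantity $D:=\sqrt{4-\kappa^2}$ is real and strictly positive, so $\rho_+\neq\rho_-$, and a direct computation gives $|\rho_\pm|^2=(\kappa^2+(4-\kappa^2))/4=1$ together with $\rho_+\rho_-=-1$. This is precisely the critical regime $|\rho_+|=|\rho_-|$ motivating the use of Birkhoff's theory. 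Applying \cref{prop:birkhoff} then yields two linearly independent solutions $d_r^\pm\sim\rho_\pm^r\,\e^{\Omega_\pm r^{1/2}}\,r^{\alpha_\pm}\bigl(C_0^\pm+O(r^{-1/2})\bigr)$.

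The decisive step is to extract the real parts of $\Omega_\pm$ and $\alpha_\pm$ from~\eqref{eq:birkhoff-omega}--\eqref{eq:birkhoff-alpha}, since $|\rho_\pm|=1$ means the modulus of $d_r^\pm$ is governed entirely by $|\e^{\Omega_\pm r^{1/2}}|$ and $r^{\Real\alpha_\pm}$. Using $\rho_\pm-\iu\kappa/2=\pm D/2$ and $2\rho_\pm+a_0=\pm D$, one finds $\Omega_\pm=\pm 2\iu L_1/D$, which is purely imaginary, so $|\e^{\Omega_\pm r^{1/2}}|=1$ and this factor is harmless. For $\alpha_\pm$, I would treat the two summands of~\eqref{eq:birkhoff-alpha} separately: rationalizing the first, $\frac{a_2\rho_\pm+b_2}{a_0\rho_\pm+2b_0}$, against the resulting real denominator $-4D$ shows its real part equals $-(k+l)/4$, the $L_2$-dependent contributions being purely imaginary; the second summand, built from $\Omega_\pm^2$ and $\Omega_\pm a_1$, turns out to be purely imaginary once the competing $\pm L_1^2/D$ terms cancel. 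Hence $\Real\alpha_\pm=-(k+l)/4$ for both signs. This is the step I expect to be the main obstacle, as it demands careful bookkeeping of the complex roots $\rho_\pm$ and the cancellations; a useful independent check is the Casoratian $W_r=d_r^+d_{r+1}^--d_{r+1}^+d_r^-$, which obeys $W_{r+1}=b(r)W_r$ and hence $|W_r|\propto\gamma_{r+1}^{-1}\sim r^{-(k+l)/2}$, forcing $\Real(\alpha_++\alpha_-)=-(k+l)/2$ in agreement with the explicit values.

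Finally, since $\Real\alpha_\pm=-(k+l)/4$ and $|\rho_\pm|=|\e^{\Omega_\pm r^{1/2}}|=1$, we obtain $|d_r^\pm|=O(r^{-(k+l)/4})$ and thus $|d_r^\pm|^2=O(r^{-(k+l)/2})$. Because $k+l\geq3$ gives $(k+l)/2\geq3/2>1$, the series $\sum_r r^{-(k+l)/2}$ converges, so $d^\pm\in\ell^2(\cnum)$. As $\ell^2(\cnum)$ is a vector space and an arbitrary solution $(d_r)$ of~\eqref{eq:lem-dr-normalform} is a linear combination of $d^+$ and $d^-$, it too lies in $\ell^2(\cnum)$, which is the claim. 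Note that the hypothesis $\kappa<2$ is used sharply: for $\kappa=2$ the roots collapse ($D=0$) and hypothesis~(ii) of \cref{prop:birkhoff} fails, while for $\kappa>2$ the roots would have distinct moduli, placing us in the Poincaré--Perron regime where one solution grows and square summability is lost.
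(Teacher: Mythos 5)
Your proposal is correct and follows essentially the same route as the paper: identify $a(r),b(r)$, verify $b_0\neq 0$ and $\rho_+\neq\rho_-$ via $\kappa<2$, apply \cref{prop:birkhoff}, check that $\Omega_\pm$ is purely imaginary and $\Real\alpha_\pm=-(k+l)/4$, and conclude square summability from $k+l\geq 3$ plus linearity of the solution space. The only blemish is a harmless slip in the narration (the rationalized denominator $a_0\rho_\pm+2b_0=-\tfrac{D}{2}(D\pm\iu\kappa)$ becomes $-2D$, not $-4D$, after multiplying by the conjugate), which does not affect your (correct) value of $\Real\alpha_\pm$; the Casoratian consistency check is a nice addition not present in the paper.
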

\begin{proof}
    As discussed, the recurrence relation~\eqref{eq:lem-dr-normalform} is of the form~\cref{eq:recurrence-birkhoff} discussed in \cref{sec:recurrence}, with 
    \begin{equation}
        a(r)  = - \frac{1 + \iu \delta_{r+1}}{\gamma_{r+2}}  \,,   \qquad 
        b(r)  = -\frac{\gamma_{r+1}}{\gamma_{r+2}} \, .
    \end{equation}
    As such, in order to determine the asymptotic behavior of $d_r$, we need to determine the asymptotic behavior of the two quantities above. By our assumption~\eqref{eq:assump_asy}, and taking into account that, by \cref{lem:gammar-expansion}, $\gamma_{r+2}^{-1}$ has an asymptotic expansion in powers of $r^{-1/2}$ with leading order $r^{-(k+l)/2}$, we have
    \begin{equation}
        a(r) \sim -\iu \kappa - \iu \frac{L_1}{r^{1/2}} - \iu \frac{L_2}{r} -\dots\, , 
    \end{equation}
    where all coefficients $L_1,L_2,\ldots$ are real as $\delta_{r+1}/\gamma_{r+2} \in \rnum$.
    Besides, by~\cref{lem:gammar-expansion},
    \begin{equation}
                b(r) \sim -1 + \frac{k+l}{2r} +o(r^{-1})\,.
    \end{equation}
    That is, both $a(r)$ and $b(r)$ are nonzero functions admitting asymptotic expansions in powers of $r^{-1/2}$ whose zeroth first and second-order coefficients read
    \begin{alignat}{3}
        a_0=-\iu\kappa,&\qquad a_1=-\iu L_1,& \qquad a_2 = -\iu L_2;\\
        b_0=-1,&\qquad b_1 = 0,& \qquad b_2=\frac{k+l}{2}.
    \end{alignat}
   In particular, the characteristic roots of the associated limiting equation, i.e. the solutions of $\rho^2+a_0\rho+b_0=0$, are
    \begin{equation}
        \rho_\pm = \frac{\iu \kappa}{2} \pm \sqrt{1-\frac{\kappa^2}{4}} \, =\e^{\iu\theta_\pm},\qquad \theta_\pm = \arctan\left(\pm \kappa/\sqrt{4-\kappa^2}\right)\, .
    \end{equation}
    As $\kappa<2$, these roots are unit numbers with $\rho_+\neq\rho_-$, so that \cref{prop:birkhoff} applies: \cref{eq:lem-dr-normalform} admits a pair of linearly independent solutions $(d^\pm_r)_{r\in\mathbb{N}}$ admitting the following asymptotic expansions:
     \begin{equation}
        d^\pm_r \sim \rho^r_\pm \e^{\Omega_\pm \sqrt{r}} r^{\alpha_\pm} \sum_{s=0}^{\infty} \frac{C^\pm_s}{r^{s/2}} \, ,
    \end{equation}
    with $\Omega_\pm$ and $\alpha_\pm$ respectively given by \cref{eq:birkhoff-omega,eq:birkhoff-alpha}. Plugging $\rho_\pm$ and $b_1 = 0$ into \cref{eq:birkhoff-omega} gives
    \begin{equation}
        \Omega_\pm =-\frac{a_1}{a_0/2+\rho_\pm}= \frac{\iu L_1}{-\iu \kappa/2+ \iu \kappa/2 \pm \sqrt{1-\kappa^2/4}} =   \pm \frac{\iu L_1}{\sqrt{1-\kappa^2/4}} \equiv \pm \iu \tilde{\Omega}\, ;
    \end{equation}
similarly, by \cref{eq:birkhoff-alpha},
\begin{equation}
        \alpha_\pm = -\frac{k+l}{4} \pm \iu \varphi\,,  \quad \varphi = \frac{1}{\sqrt{4-\kappa^2}}\left( \kappa\frac{ (k+l)}{4} + L_2\right) + \frac{L_1^2 \kappa}{2(4-\kappa^2)^{3/2}} \, ,
    \end{equation}
    whence
     \begin{equation}
        d^\pm_r \sim \e^{\iu r \theta_\pm} \e^{\pm \iu \varphi\log r \pm \iu \tilde{\Omega} \sqrt{r}} r^{-(k+l)/4} \sum_{s=0}^\infty \frac{C^\pm_s}{r^{s/2}} \, .
    \end{equation}  
    Recalling \cref{def:asymptotic-expansion,def:o}, this means that there exists $C^\pm>0$ and $R>0$ such that, for $r\geq R$,
\begin{equation}
        \left|d_r^\pm-C_0^\pm\e^{\iu r \theta_\pm} \e^{\pm \iu \varphi\log r \pm \iu \tilde{\Omega}_\pm \sqrt{r}} r^{-(k+l)/4}\right|\leq C^\pm r^{-1/2-(k+l)/4},
    \end{equation}
    whence, using the triangle inequality and taking squares,
     \begin{equation}
        |d_r^\pm|^2\leq 2|C_0^\pm|^2r^{-(k+l)/2}+2(C^\pm)^2 r^{-1-(k+l)/2}.
    \end{equation}
    As $k+l\geq3$, the above inequality implies $\sum_{r\geq R}|d^\pm_r|^2<\infty$ and thus $\sum_{r\in\mathbb{N}}|d^\pm_r|^2<\infty$.
    
    We proved that the second-order recurrence relation \cref{eq:rec-dr-normal} admits two linearly independent solutions which are both square summable. As all solutions of \cref{eq:rec-dr-normal} can be obtained as a linear combination of those, the proof is complete.
\end{proof}

\begin{remark}
It is instructive to point out where the above proof would fail if either $k+l<3$, or the parameter $\kappa$ in \cref{eq:assump_asy} satisfies $\kappa>2$. In the former case, taking e.g. $k+l=2$, then we would have $\alpha_\pm = -\frac{1}{2} \pm \iu \varphi$, whence 
    \begin{equation}
        \sum_{r=0}^{\infty} |d_r|^2 \sim \sum_{r=0}^\infty r^{-1},
    \end{equation}
    which diverges. In the latter case $\kappa>2$, the quantity $\sqrt{1-\kappa^2/4}$ becomes imaginary, thus $|\rho_+| >  1$ and the asymptotic expansion of $d^+_r$ contains terms that grow with $r$. In both cases, this argument does not prove that every solution of~\cref{eq:lem-dr-normalform} is not in $\ell^2(\cnum)$, as there might still be in principle a linear combination of the two solutions that is square summable. 
    However, in many such cases we can apply the results from \cref{sec:ess-sa} to show that the operator $\Aop$ is essentially self-adjoint, in which case none of the solutions of~\cref{eq:lem-dr-normalform} can be square summable.
\end{remark}

We can finally obtain the desired result for the recurrence relation~\eqref{eq:kpm-recurrence} we started from:

\begin{proof}[Proof of \cref{prop:conv-birkhoff}]
    By \cref{lem:rec-delta-solutions}, \cref{eq:kpm-recurrence-bis} admits $k-l$ linearly independent solutions; and by \cref{lem:rec-decouple,lem:rec-dr-def}, such a system of solutions can be constructed by taking into account the solutions of the $k-l$ second-order recurrence relations
    \begin{equation}
        \label{proofeq:prop-solutions-rec}
        \beta^{kl}_{n_0+r\Delta} \dr_{r+1} - (1+\iu f(n_0+r\Delta)) \dr_r - \beta^{lk}_{n_0 + r\Delta} \dr_{r-1} = 0\, .
    \end{equation}
    By assumption, \cref{eq:dr-shorthand}, and \cref{lem:beta-sym}, we obtain
    \begin{align}
        \frac{\delta_{r+1}}{\gamma_{r+2}} & =  \frac{f(n_0+(r+1)\Delta)}{\beta^{lk}_{n_0+(r+2)\Delta}} = \frac{f(n_0+(r+1)\Delta)}{\beta^{kl}_{n_0+(r+1)\Delta}} \\
        & \sim \kappa - \frac{L_1}{(n_0 + (r+1)\Delta)^{1/2}} -\frac{L_2}{n_0+(r+1)\Delta} -\dots \\&\sim \kappa - \frac{\tilde{L}_1}{r^{1/2}} - \frac{\tilde{L}_2}{r} -\dots 
    \end{align}
    with $\kappa < 2$ and all other coefficients being real.
    Hence, by \cref{lem:conv-birkhoff}, all solutions of \cref{proofeq:prop-solutions-rec} are in $\ell^2(\mathbb{C})$, thus proving the claim.
\end{proof}

    \subsection[Self-adjoint extensions of A for k+l larger than 3]{Self-adjoint extensions of \texorpdfstring{$\Aop$}{A} for \texorpdfstring{$k+l \geq 3$}{k+l larger than 3}}\label{sec:deficiency_sub}

We can now prove the main result of this section concerning the self-adjoint extension of the $(k,l)$-squeezing operator $A_{k,l}(f)$ in the case $k+l\geq3$ and $f(n)/n^{(k+l)/2}\sim \kappa<2$. Recall that, without loss of generality, we are taking the squeezing parameter $\xi$ to have unit modulus, i.e. $\xi=\e^{\iu\theta}$ for some $\theta\in\mathbb{R}$.

\begin{theorem}
    \label{thm:a-deficiency}
    Let $k>l\in\mathbb{N}$ with $k+l \geq 3$, $f: \nnum \to \rnum_+$, and $\Aop$ be the corresponding $(k,l)$-squeezing operator.
    Assume that $f(n) / \beta^{kl}_n$ admits an asymptotic expansion in powers of $n^{-1/2}$,
    \begin{equation}
        \frac{f(n)}{\beta^{kl}_n} \sim \kappa + \frac{L_1}{n^{1/2}} +\frac{L_2}{n} + \dots
    \end{equation}
    with $\kappa < 2$.
    Then $\Aop$ is not essentially self-adjoint and has deficiency indices $n_\pm=\Delta=k-l$.
    Its deficiency spaces $K_\pm=\Ran(\Aop\pm\iu)^\perp$ are given by 
    \begin{equation}
        \label{eq:a-deficiency-spaces}
        K_\pm = \mspan (\varphi^\pm_{n_0})_{l \leq n_0<k} \, , \quad \varphi^\pm_{n_0}=\sum_{r=0}^{\infty}(\pm \iu)^r \e^{\iu r \theta} d^{(n_0,\pm)}_r \phi_{n_0+r \Delta} \, , 
    \end{equation}
    where $d^{(n_0,+)}_r = d^{(n_0)}_r$, $d^{(n_0,-)}_r = \bigl(d^{(n_0)}_r\bigr)^\ast$, and $(d^{(n_0)}_r)_{r \in \nnum}$ is a solution of~\cref{eq:rec-dn0r}.
    The self-adjoint extensions of $\Aop$ are parametrized by $\Delta\times\Delta$ unitary matrices as follows: $\mathrm{U}(\Delta)\ni U\mapsto A^U_{k,l}(f)$, where $A^U_{k,l}(f):\mathcal{D}(A^U_{k,l}(f))\subset\hilbert\rightarrow\hilbert$ is given by
\begin{equation}
\domain(A_{k,l}^U(f)) =\{\psi+\varphi_U \, : \, \psi \in \domain(\overline{\Aop}), \, \varphi_U \in \domain_U\},
\end{equation}
with
\begin{equation}\label{eq:singularpart}
     \domain_U = \left\{\sum_{i=0}^{\Delta-1} a_i \varphi_{l+i}^+ -\sum_{j=0}^{\Delta-1} a_i U_{ij} \frac{\norm{\varphi_{l+i}^+}}{\norm{\varphi_{l+j}^+}} \varphi_{l+j}^- \, : \,  (a_i)_{i = 0}^{\Delta-1} \in \cnum^\Delta \right\},
\end{equation}
 where $\overline{\Aop}$ is the closure of $\Aop$, and $A_{k,l}^U(f) \psi = \Aop^\ast \psi$. Furthermore, $\{\psi_0 + \varphi_U\, :\, \psi_0 \in \domain_0,\varphi_U \in \domain_U\}$ is a core of $A_{k,l}^U(f)$.
\end{theorem}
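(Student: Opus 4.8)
The plan is to assemble the three assertions---deficiency indices, explicit deficiency spaces, and the parametrization of self-adjoint extensions---from the recurrence-relation machinery of \cref{sec:recurrence_for_A} together with von Neumann's theory, and to close with a direct core argument. First I would compute the deficiency indices. By \cref{lem:kpm-recurrence}, a vector $\psi^\pm=\sum_n c_n^\pm\phi_n$ lies in $K_\pm$ precisely when the sequence $(c_n^\pm)$ solves the recurrence relation~\eqref{eq:kpm-recurrence} \emph{and} belongs to $\ell^2(\cnum)$. By \cref{lem:rec-delta-solutions} the full solution space of~\eqref{eq:kpm-recurrence} in $\cnum^\nnum$ has dimension exactly $k-l$, while \cref{prop:conv-birkhoff}---whose hypotheses coincide with those assumed here---produces $k-l$ linearly independent solutions already lying in $\ell^2(\cnum)$. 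Since a $(k-l)$-dimensional space cannot contain more than $k-l$ independent vectors, the \emph{entire} solution space is square summable, so $\dim K_\pm=k-l$. As $k>l$ this is strictly positive, hence $\Aop$ is not essentially self-adjoint, with equal indices as already guaranteed by \cref{prop:conjugation}.

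Second, I would read off the explicit bases by tracking the constructions of \cref{lem:rec-decouple,lem:rec-dr-def}: each $n_0$ with $l\le n_0<k$ yields a solution of~\eqref{eq:kpm-recurrence} supported on the progression $\{n_0+r\Delta\}_{r}$ (with $\Delta=k-l$), whose entries are $(\pm\iu)^r\e^{\iu r\theta}d_r^{(n_0,\pm)}$; via~\eqref{eq:psipm-expansion} these are exactly the vectors $\varphi_{n_0}^\pm$ of~\eqref{eq:a-deficiency-spaces}. Since the $\Delta$ values of $n_0$ fall into distinct residue classes modulo $\Delta$, the vectors with distinct $n_0$ have disjoint Fock supports and are therefore linearly independent---indeed mutually orthogonal---so by the dimension count they form a basis of $K_\pm$.

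Third, and this is where the real work lies, I would invoke von Neumann's theorem: the self-adjoint extensions of $\Aop$ are in bijection with unitaries $V\colon K_+\to K_-$, the extension associated with $V$ acting as $\Aop^\ast$ on $\domain(\overline{\Aop})\oplus(\id+V)K_+$. The task is to match this abstract description with the matrix formula~\eqref{eq:singularpart}. The structural facts I would use---both already available---are that $(\varphi_{n_0}^+)$ and $(\varphi_{n_0}^-)$ are \emph{orthogonal} families and that $\norm{\varphi_{n_0}^+}=\norm{\varphi_{n_0}^-}$, the latter because $|c_n^{(n_0,+)}|=|c_n^{(n_0,-)}|$ by~\eqref{eq:cn-norm-pm}. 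Passing to the orthonormal bases $\varphi_{l+i}^\pm/\norm{\varphi_{l+i}^\pm}$, I would compute that the expression in~\eqref{eq:singularpart} sends the $K_+$-component with coefficients $(a_i)$ to a $K_-$-component whose orthonormal coordinates are $-U^{\mathsf T}$ applied to the orthonormal coordinates of the $K_+$-component; the normalization ratios $\norm{\varphi_{l+i}^+}/\norm{\varphi_{l+j}^+}$ are precisely what makes this hold once $\norm{\varphi_{n_0}^+}=\norm{\varphi_{n_0}^-}$ is invoked. Consequently the associated map $V$ is isometric---hence, between spaces of equal finite dimension, unitary---if and only if $U$ is unitary, so $U\mapsto A_{k,l}^U(f)$ ranges bijectively over all self-adjoint extensions, and the action is $\Aop^\ast$ since $K_\pm\subset\domain(\Aop^\ast)$. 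The anticipated difficulty is entirely in this bookkeeping: keeping the placement of $U$ versus $U^{\mathsf T}$, the overall sign, and the norm ratios mutually consistent so that unitarity of $U$ corresponds exactly to unitarity of $V$.

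Finally, for the core statement I would argue directly. Since $\domain_0$ is by definition a core for $\overline{\Aop}$, any $\psi\in\domain(\overline{\Aop})$ admits $\psi_0^{(n)}\in\domain_0$ with $\psi_0^{(n)}\to\psi$ and $\overline{\Aop}\psi_0^{(n)}\to\overline{\Aop}\psi$. Holding a fixed $\varphi_U\in\domain_U$, the vectors $\psi_0^{(n)}+\varphi_U$ then converge to $\psi+\varphi_U$ in the graph norm of $A_{k,l}^U(f)=\Aop^\ast|_{\domain(A_{k,l}^U(f))}$, because $\Aop^\ast=\overline{\Aop}$ on $\domain(\overline{\Aop})$ while $\varphi_U$ is held fixed. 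This establishes density of $\{\psi_0+\varphi_U\}$ in $\domain(A_{k,l}^U(f))$ with respect to the graph norm, hence the core property.
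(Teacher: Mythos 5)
Your proposal is correct and follows essentially the same route as the paper: the same chain \cref{lem:kpm-recurrence} $\to$ \cref{lem:rec-delta-solutions} $\to$ \cref{prop:conv-birkhoff} for the deficiency indices, the same reading-off of the bases from \cref{lem:rec-decouple,lem:rec-dr-def} with the norm identity \eqref{eq:cn-norm-pm}, the same von Neumann parametrization matched to \eqref{eq:singularpart} via the orthogonal bases and $\norm{\varphi_{n_0}^+}=\norm{\varphi_{n_0}^-}$, and the same graph-norm density argument for the core. The only deviations are cosmetic (the sign/transpose convention in von Neumann's bijection and your explicit dimension count showing every solution of the recurrence is square summable), neither of which changes the substance.
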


\begin{proof}
Let $\psi^\pm$ be an element of $\hilbert$, which we expand in the Fock state basis $(\phi_n)_{n\in\mathbb{N}}$ of $\hilbert$,
 \begin{equation}
        \psi^\pm = \sum_{n = 0}^{\infty} c_n^\pm \phi_n\, .
    \end{equation}
    By \cref{lem:kpm-recurrence}, $\psi^\pm$ is an element of the deficiency subspace $K^\pm$ of $\Aop$ if and only if the coefficients $c_n$ are a solution of \cref{eq:kpm-recurrence}; by \cref{prop:conv-birkhoff}, the latter admits precisely $k-l$ linearly independent solutions in $\ell^2(\mathbb{C})$, whence both spaces $K^\pm$ have dimensions $k-l$, proving that $\Aop$ has deficiency indices equal to $k-l=\Delta$. 
    
    We can characterize the deficiency subspaces by taking $\Delta$ linearly independent solutions $(\cnpm_n)_{n \in \nnum}$ of \cref{eq:kpm-recurrence}. Those can be constructed by means of \cref{lem:rec-decouple,lem:rec-dr-def}: given $l \leq n_0 < k$,   
    \begin{equation}
        \cnpm_n = \begin{cases}
            (\pm \iu)^r \e^{\iu r \theta} d^{(n_0,\pm)}_r & n = n_0 +r \Delta \, ; \\
            0 & \text{otherwise},
        \end{cases}
    \end{equation}
    where $d^{(n_0,+)} = \dr_r$ and $d^{(n_0,-)} = \bigl(\dr_r\bigr)^\ast$ and $(\dr_r)_{r\in\nnum}$ is a solution of \cref{eq:rec-dn0r}; correspondingly, by \cref{eq:a-deficiency-spaces} we obtain an orthogonal (non-normalized) basis $\bigl(\varphi^\pm_{n_0}\bigr)_{l \leq n_0 < k}$ of the deficiency spaces $K_\pm$.

    As $\Aop$ has equal deficiency indices $n_\pm = \Delta$, by von Neumann's theory it admits a family of self-adjoint extensions $A_{k,l}^U(f)$, parametrized by a unitary matrix $U \in \mathrm{U}(\Delta)$.
    The self-adjoint extensions are given by~\cite[82]{teschl-mathematicalmethodsquantum-2009}
    \begin{align}\label{eq:domain_alkuf}
        \domain(A_{k,l}^U(f))& = \left\{ \psi + \varphi^+-U \varphi^+ \, : \, \psi \in \domain(\overline{\Aop}), \, \varphi^+ \in K_+\right\}\,, \\\label{eq:action_alkuf}
        A_{k,l}^U(f)(\psi + \varphi^+-U \varphi^+)& = A_{k,l}^U(f)\psi+\iu\varphi_++\iu U\varphi_+,
    \end{align}
    where $U: K_+\to K_-$ is a unitary operator. Since $(\varphi^\pm_{n_0})_{l \leq n_0 < k}$ is an orthogonal basis of $K_\pm$, we can uniquely represent said operator as
    \begin{equation}
        \label{proofeq:extension-v}
        U \varphi^+ = \sum_{i,j=0}^{\Delta-1} U_{ij}  \frac{\braket{\varphi_{l+i}^+,\varphi^+}}{\norm{\varphi_{l+i}^+}}\frac{\varphi_{l+j}^-}{\norm{\varphi_{l+j}^-}} \, , 
    \end{equation}
    where $(U_{i,j})_{i,j} \in \mathrm{U}(\Delta)$. With a slight abuse of notation, we will use the symbol $U$ for this matrix as well. Furthermore, we can write every $\varphi^+ \in K_+$ as $\varphi^+ = \sum_{s=0}^{\Delta-1}a_s \varphi_{l+s}^+$ with coefficients $a_s \in \cnum$.
    Using $\norm{\varphi_i^+} = \norm{\varphi_i^-}$ for all $l \leq i < k$,~\cref{proofeq:extension-v} becomes
    \begin{equation}
        U \varphi^+ = \sum_{i,j,s=0}^{\Delta-1} U_{ij}  \frac{\braket{\varphi_{l+i}^+,a_s \varphi_{l+s}^+}}{\norm{\varphi_{l+i}^+}}\frac{\varphi_{l+j}^-}{\norm{\varphi_{l+j}^-}} 
         = \sum_{i,j=0}^{\Delta-1} a_i U_{ij} \frac{\norm{\varphi_{l+i}^+}}{\norm{\varphi_{l+j}^+}} \varphi_{l+j}^-\, ;
    \end{equation}
    hence $\{\psi^+-U \psi^+ : \psi^+ \in K_+\} = \domain_U$, and we obtain the claimed expression for $\domain(A_{k,l}^U(f))$. The action of $A_{k,l}^U(f)$ on $\domain(A_{k,l}^U(f))$ could analogously be reconstructed by \cref{eq:action_alkuf}. Furthermore, as $\Aop \subset A_{k,l}^U(f) \subset \Aop^\ast$, $A_{k,l}^U(f) \psi = \Aop^\ast \psi$ for all $\psi \in \domain(A_{k,l}^U(f))$.

    Finally, let $A_{k,l}^{U,0}(f)$ be the restriction of $A_{k,l}^U(f)$ to the domain $\{\psi_0 + \varphi_U\, :\, \psi_0 \in \domain_0,\varphi_U \in \domain_U\}$.
    As the closure of $\domain_0$ with respect to the graph norm of $A_{k,l}^{U,0}(f)$ is $\domain(\overline{\Aop})$, the operator closure of $A_{k,l}^{U,0}(f)$ is $A_{k,l}^{U}(f)$, and $A_{k,l}^{U,0}(f)$ is essentially self-adjoint, thus completing the proof.
\end{proof}

We proved that for $k+l\geq3$, under suitable conditions on the asymptotic behavior of $f(n)/\beta^{kl}_n$, the operators $\Aop$ admit a $(k-l)$-dimensional space of self-adjoint extensions. This complements the results in \cref{prop:rel-bounded}, where we found conditions on $f(n)$ under which $\Aop$ is essentially self-adjoint. Similarly to what we did in that case (cf. \cref{cor:faster}), we can simplify the conditions on $f(n)$ as follows:
\begin{corollary}
    \label{cor:a-deficiency}
    Let $k > l$ with $k + l \geq 3$, $f:\nnum \to \rnum_+$, and $\Aop$ as in~\cref{def:higher-order-squeezing}.
    Assume that $f(n) n^{-(k+l)/2}$ admits an asymptotic expansion in powers of $n^{-1/2}$:
    \begin{equation}
        f(n) n^{-(k+l)/2} \sim \kappa+\frac{L_1}{n^{1/2}} + \frac{L_2}{n} + \dots
    \end{equation}
    with $\kappa < 2$.
    Then $\Aop$ has deficiency indices $n_\pm = k-l$, and its self-adjoint extensions are parametrized as in \cref{thm:a-deficiency}.
\end{corollary}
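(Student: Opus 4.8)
The plan is to derive this corollary directly from \cref{thm:a-deficiency}, whose hypothesis is phrased in terms of an asymptotic expansion of $f(n)/\beta^{kl}_n$ rather than of $f(n)\,n^{-(k+l)/2}$. The two hypotheses are interchangeable for our purposes because $\beta^{kl}_n$ and $n^{(k+l)/2}$ differ only by a factor with a well-behaved asymptotic expansion and limit $1$; in particular the leading coefficient $\kappa$ is the same in both expansions, and since the conclusions of \cref{thm:a-deficiency} depend only on the condition $\kappa<2$, this reduction is all that is needed.

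Concretely, I would write
\begin{equation}
    \frac{f(n)}{\beta^{kl}_n}=\Bigl(f(n)\,n^{-(k+l)/2}\Bigr)\cdot\frac{n^{(k+l)/2}}{\beta^{kl}_n},
\end{equation}
and first establish that the second factor admits an asymptotic expansion in powers of $n^{-1}$ (hence, trivially, in powers of $n^{-1/2}$) with leading coefficient $1$. This is essentially already contained in \cref{lem:gammar-expansion}: by \cref{def:beta}, $\beta^{kl}_n=\sqrt{(n-l+1,l)(n-l+1,k)}$ is the square root of a monic polynomial of degree $k+l$ in $n$, so $\beta^{kl}_n=n^{(k+l)/2}\bigl(1+a_1 n^{-1}+\dots+a_{k+l}n^{-(k+l)}\bigr)^{1/2}$ with real coefficients $a_j$. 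Consequently $n^{(k+l)/2}/\beta^{kl}_n=\bigl(1+a_1 n^{-1}+\dots\bigr)^{-1/2}$ is a real-analytic function of $n^{-1}$ taking the value $1$ at $n^{-1}=0$, and thus expands as $1+b_1 n^{-1}+b_2 n^{-2}+\dots$ with real $b_j$.

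The remaining step is to multiply the two expansions: since both factors admit asymptotic expansions in the scale $(n^{-s/2})_{s\in\nnum}$ (cf.\ \cref{def:asymptotic-expansion}), so does their product, with coefficients given by the formal Cauchy product of the two series. The leading coefficient of the product is $\kappa\cdot 1=\kappa<2$, while the subleading coefficients are real combinations of the $L_s$ and $b_s$ that I would not need to compute explicitly. Hence $f(n)/\beta^{kl}_n$ satisfies the hypothesis of \cref{thm:a-deficiency} with the same $\kappa<2$, and the conclusion---deficiency indices $n_\pm=k-l$ together with the parametrization of self-adjoint extensions---follows at once. The only mildly delicate point is verifying that multiplication respects the asymptotic scale and preserves the leading term; there is no genuine obstacle beyond this bookkeeping.
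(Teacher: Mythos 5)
Your proposal is correct and follows essentially the same route as the paper: factor $f(n)/\beta^{kl}_n$ as $\bigl(f(n)\,n^{-(k+l)/2}\bigr)\cdot\bigl(n^{(k+l)/2}/\beta^{kl}_n\bigr)$, observe via \cref{def:beta} that the second factor is a real-analytic function of $n^{-1}$ with value $1$ at infinity, and invoke the stability of asymptotic expansions under multiplication to land in the hypotheses of \cref{thm:a-deficiency} with the same leading coefficient $\kappa<2$. The paper's proof is just a more compressed version of this same computation.
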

\begin{proof}
    Per \cref{def:beta}, $\beta^{kl}_n$ can be expressed as a convergent series:
    \begin{equation}
        \beta^{kl}_n = \sqrt{(n-l+1,l)(n-l+1,k)} = n^{(k+l)/2}\left(1+\frac{a_1}{n}+\dots \right) \, ;
    \end{equation}
    hence we obtain the claim from \cref{thm:a-deficiency} and basic properties of asymptotic expansions (see e.g. \cite[Prop. 1.1.15]{krantz-primerrealanalytic-1992}):
    \begin{equation}
        \frac{f(n)}{\beta^{kl}_n} = \frac{f(n)}{n^{(k+l)/2}}\frac{1}{ \left(1+ a_1/n+\dots\right)} \sim   \kappa + \frac{\tilde{L}_1}{n^{1/2}} + \frac{\tilde{L}_2}{n}+\dots\, .\qedhere
    \end{equation}
\end{proof}
Finally, we can show that, under the same conditions as \cref{thm:a-deficiency}, $\Aop$ is not bounded from below, i.e. it has no ground state energy:
\begin{proposition}
\label{prop:no-lower-bound}
    Let $k>l$ with $k+l \geq 3$, $f: \nnum \to \rnum_+$ and  $\Aop$ as in~\cref{def:higher-order-squeezing}.
    Assume that $f(n) / \beta^{kl}_n$ admits an asymptotic expansion in powers of $n^{-1/2}$,
    \begin{equation}
        \frac{f(n)}{\beta^{kl}_n} \sim \kappa + \frac{L_1}{n^{1/2}} + \frac{L_2}{n} + \dots
    \end{equation}
    with $\kappa < 2$.
    Then $\Aop$ is unbounded below, and so are all its self-adjoint extensions $A_{k,l}^U(f)$.
\end{proposition}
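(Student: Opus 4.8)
The plan is to show that the symmetric operator $\Aop$ is already unbounded below on its original domain $\domain_0$; this suffices for the full claim. Indeed, every self-adjoint extension $A_{k,l}^U(f)$ extends $\Aop$, so $\braket{\psi,A_{k,l}^U(f)\psi}=\braket{\psi,\Aop\psi}$ for all $\psi\in\domain_0\subseteq\domain(A_{k,l}^U(f))$; since the infimum of the Rayleigh quotient over the full domain (which equals the bottom of the spectrum) is bounded above by its infimum over the subspace $\domain_0$, showing the latter to be $-\infty$ forces $A_{k,l}^U(f)$ to be unbounded below as well. Thus it is enough to exhibit, for every $E>0$, a unit vector $\psi\in\domain_0$ with $\braket{\psi,\Aop\psi}<-E$.

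To this end I would fix $l\le n_0<k$ and seek trial vectors supported on the arithmetic progression $n=n_0+r\Delta$, $\Delta=k-l$, writing $\psi=\sum_r c_r\phi_{n_0+r\Delta}$. As in \cref{sec:recurrence_for_A}, the operator acts tridiagonally there, with $\beta^{kl}_{n_0+r\Delta}=\gamma_{r+1}$ and $\delta_r=f(n_0+r\Delta)$ in the notation of \cref{eq:dr-shorthand}, so that (recalling $\Aop$ is symmetric, so the form is real)
\[
\braket{\psi,\Aop\psi}=\sum_r\delta_r|c_r|^2+2\Real\sum_r\xi\,\gamma_{r+1}\,\overline{c_{r+1}}\,c_r .
\]
The decisive step is the phase choice $c_r=(-1)^r\e^{\iu r\theta}u_r$ with $u_r\ge0$, which aligns every off-diagonal contribution so that $\xi\,\overline{c_{r+1}}c_r=-u_{r+1}u_r$, making the cross term maximally negative. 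A discrete summation by parts then yields the clean identity
\[
\braket{\psi,\Aop\psi}=\sum_r u_r^2\bigl(\delta_r-\gamma_r-\gamma_{r+1}\bigr)+\sum_r (u_{r+1}-u_r)^2\,\gamma_{r+1}.
\]

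The heart of the matter is the asymptotic analysis of the diagonal weight $\delta_r-\gamma_r-\gamma_{r+1}$. From \cref{lem:gammar-expansion} one gets $\gamma_{r+1}-\gamma_r=O(r^{(k+l)/2-1})$, while the hypothesis $f(n)/\beta^{kl}_n\sim\kappa+L_1 n^{-1/2}+\dots$ gives $\delta_r=\kappa\gamma_{r+1}+O\!\bigl(\gamma_{r+1}r^{-1/2}\bigr)$; hence $\delta_r-\gamma_r-\gamma_{r+1}=(\kappa-2)\gamma_{r+1}+o(\gamma_{r+1})$. Since $\kappa<2$, this weight is eventually negative and grows like $(\kappa-2)(\Delta r)^{(k+l)/2}$. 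I would then take $u_r$ to be a normalized ``box'' of width $M$ placed far out, $u_r=M^{-1/2}$ for $R\le r<R+M$ and $0$ otherwise, so that $\norm{\psi}=1$: fixing $\epsilon>0$ with $c_0:=2-\kappa-\epsilon>0$ and taking $R$ large, the first sum is bounded above by $-c_0\,\gamma_{R+1}$, whereas the ``kinetic'' term concentrates at the two edges and equals $M^{-1}(\gamma_R+\gamma_{R+M})$. Choosing $M$ large enough to beat the two edge contributions and then letting $R\to\infty$ (so that $\gamma_{R+M}/\gamma_{R+1}\to1$ at fixed $M$) gives $\braket{\psi,\Aop\psi}\le -\tfrac{c_0}{2}\gamma_{R+1}\to-\infty$ at fixed unit norm, which is exactly what is required.

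The main obstacle is precisely this final balance: the desired negative term and the spurious positive kinetic term are of the same order $\gamma_{R+1}\sim r^{(k+l)/2}$, so one must enlarge the packet width $M$ enough to suppress the factor $\sim 1/M$ coming from the two box edges, while keeping $R$ large enough that both the subleading asymptotic errors $o(\gamma_{r+1})$ and the edge ratio $\gamma_{R+M}/\gamma_{R+1}-1$ are negligible. Notably, the condition $k+l\ge3$ enters only through guaranteeing $\gamma_{r+1}\to\infty$ (any growth already forces divergence of the quotient); the genuine input is $\kappa<2$, which is what lets the aligned cross term overpower the nonnegative diagonal term $f$.
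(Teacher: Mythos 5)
Your proof is correct and follows essentially the same strategy as the paper: normalized box-shaped trial states supported on the arithmetic progression $n_0+r\Delta$ with alternating phases $(-1)^r\e^{\iu r\theta}$ chosen to make the off-diagonal contribution maximally negative, with $\kappa<2$ then forcing the quadratic form to $-\infty$. The only (immaterial) difference is bookkeeping: the paper fixes the starting position $R$ and widens the box, concluding via divergence of the Ces\`aro average of $\gamma_r$, whereas you fix a sufficiently large width $M$ and translate the box to infinity after a summation-by-parts identity — both limits work.
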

\begin{proof}
    By \cref{def:asymptotic-expansion}, there exists $C_1>0$ and $N \in \nnum$ such that, for all $n \geq N$,
    \begin{equation}
        \left|\frac{f(n)}{\beta^{kl}_n} - \kappa \right| < \frac{C_1}{n^{1/2}}\, ;
    \end{equation}
    hence there exists $\kappa < \kappa_1<2$ and $\tilde{N} \in \nnum$ such that 
    \begin{equation}
        \label{proofeq:no-lower-bound-k1}
        \frac{f(n)}{\beta^{kl}_n} \leq \kappa_1
    \end{equation}
    for all $n \geq \tilde{N}$.
    Let $\tilde{R} \in \nnum$ and $l \leq n_0 \leq k-1$ such that $\tilde{N} = n_0 + \tilde{R} \Delta$, where $\Delta = k-l$.
    In the following, we use the shorthands $\gamma_r, \delta_r$ as introduced in~\cref{eq:dr-shorthand}.
    By \cref{lem:gammar-expansion}, $\frac{\gamma_r}{\gamma_{r+1}} \sim 1+o(1)$, hence for all $C_2>1$ there exists $\tilde{R} \leq R \in \nnum$ such that, for all $r \geq R$,
    \begin{equation}
        \gamma_{r+1} \leq C_2 \gamma_r\, .
    \end{equation}
    As $C_2 > 1$ was arbitrary, it follows together with \cref{proofeq:no-lower-bound-k1} that there exists $\tilde{\kappa} = \kappa_1 C_2 <2$ such that
    \begin{equation}
            \label{proofeq:no-lower-bound-k2}
        \delta_r = f(n_0+r \Delta ) \leq \kappa_1 \beta^{kl}_{n_0+r\Delta} = \kappa_1\gamma_{r+1} \leq \kappa_1 C_2 \gamma_r = \tilde{\kappa} \gamma_r 
    \end{equation}
    for all $r \geq R$.

    To prove that $\Aop$ is not bounded from below, we define the following sequence
    \begin{equation}
        \psi^{(n)} = \sum_{r=R}^{R+n-1} c^{(n)}_r \phi_{n_0+r\Delta} \,  \quad c^{(n)}_r = \frac{(-1)^r \e^{\iu r \theta}}{\sqrt{n}} \, ,
    \end{equation}
    with $\norm{\psi^{(n)}} = 1$ for all $n \in \nnum$ and show that $\lim_{n \to \infty}\braket{\psi^{(n)}, \Aop \psi^{(n)}} = -\infty$.
    Using~\cref{lem:a-action}, we obtain
    \begin{align}
        \braket{\psi^{(n)}, \Aop \psi^{(n)}} & =  \sum_{r = R}^{R+n-1} \sum_{s=R}^{R+n-1} \frac{(-1)^{r+s} \e^{\iu (s-r) \theta}}{n}\left( \braket{\phi_{n_0+r\Delta},\e^{\iu \theta} \gamma_{s+1}\phi_{n_0+(s+1)\Delta}}\right. \nonumber \\
        & \quad + \left. \braket{\phi_{n_0+r \Delta},\delta_s \phi_{n_0+s\Delta}} + \braket{\phi_{n_0+r\Delta},\e^{-\iu \theta}\gamma_s \phi_{n_0+(s-1)\Delta}}\right) \\
        & = \frac{1}{n} \sum_{r = R}^{R+n-1} \delta_r - \frac{1}{n} \sum_{r=R+1}^{R+n-1} \e^{-\iu \theta+\iu \theta } \gamma_r - \frac{1}{n} \sum_{r=R}^{R+n-2} \e^{\iu \theta -\iu \theta} \gamma_{r+1} \\
        & = \frac{1}{n} \sum_{r = R}^{R+n-1} \delta_r - \frac{2}{n} \sum_{r=R+1}^{R+n-1} \gamma_r \leq \frac{\delta_R}{n} + \frac{\tilde{\kappa}-2}{n}\sum_{r=R+1}^{R+n-1} \gamma_r \, , 
    \end{align}
    where we used $\delta_r \leq \tilde{\kappa} \gamma_r$ (see \cref{proofeq:no-lower-bound-k2}) in the last step.
    Since $k + l \geq 3$, we obtain $\lim_{n \to \infty} \gamma_{R+n+1}/n = \infty$.
    As $\tilde{\kappa} -2<0$ per assumption, we proved that $\Aop$ is unbounded below. This implies that all self-adjoint extensions $A_{k,l}^U(f)$, cf. \cref{thm:a-deficiency}, are also unbounded below, as $\domain(\Aop)\subset\domain(A^U_{k,l}(f))$ and the action of the two operators coincide on the former space.
\end{proof}

\section{Polynomial field term and examples}\label{sec:examples}

We will now come back to the setting considered in \cref{sec:intro}, and specialize our results to the case where the field term in $\Aop$ is a polynomial in the number operator $a^\dag a$. We will then analyze a selection of relevant examples.

\begin{proposition}
\label{prop:polynomial}
    Let $k>l\in\mathbb{N}$, $h \in \nnum$, $\xi  \in \cnum$, $f:\mathbb{N}\rightarrow\mathbb{R}_+$ be given by
    \begin{equation}
        f(n)=\sum_{j=0}^h a_jn^j,\qquad a_0,\ldots,a_{h}\geq0\,,
    \end{equation}
    and also assume $a_h > 0$ unless $f=0$.
    Let $\Aop$ the corresponding $(k,l)$-squeezing operator. Then the following statements hold:
    \begin{enumerate}[(i)] 
        \item \label{enum:prop-poly-leq2} If $k+l<3$, then $\Aop$ is essentially self-adjoint;
        \item \label{enum:prop-poly-geq3} If $k+l\geq3$, then:
        \begin{enumerate}
            \item[(ii-a)] \label{enum:prop-poly-geq3-ess} If $2h>k+l$, or $2h=k+l$ and $a_h>2 |\xi|$, then $\Aop$ is essentially self-adjoint and bounded from below;
            \item[(ii-b)] \label{enum:prop-poly-geq3-extension}If $2h<k+l$, or $2h=k+l$ and $a_h<2 |\xi|$, then $\Aop$ is neither essentially self-adjoint nor bounded from below, has deficiency indices $n_\pm=k-l$, and its self-adjoint extensions are parametrized as given by \cref{thm:a-deficiency}.
        \end{enumerate}
    \end{enumerate}
\end{proposition}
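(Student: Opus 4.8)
The plan is to treat \cref{prop:polynomial} as a direct specialization of the general criteria established in \cref{sec:ess-sa,sec:deficiency}: the entire argument reduces to comparing the growth of the polynomial $f$ with that of the squeezing coefficients $\beta^{kl}_n$. Since $a_h>0$ whenever $f\neq0$, we have $f(n)\sim a_h n^h$, while \cref{lem:beta-bounds} gives $\beta^{kl}_n\sim n^{(k+l)/2}$. More precisely, $f(n)\,n^{-(k+l)/2}=\sum_{j=0}^h a_j\,n^{-s_j/2}$ with $s_j=(k+l)-2j$, a \emph{finite} sum of powers of $n^{-1/2}$. The relative growth is therefore governed entirely by the sign of $(k+l)-2h$, which partitions the analysis into the three regimes $2h>k+l$, $2h=k+l$, and $2h<k+l$.

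I would first dispose of case (i). When $k+l<3$ one necessarily has $l=0$ and $k\le2$, so both \cref{cor:slower,cor:faster} are available. If $2h\le k$, then $f(n)=\sum_{j=0}^h a_j n^j\le C\,n^{h}\le C\,n^{k/2}$ for large $n$, and \cref{cor:slower} yields essential self-adjointness; note that for $k=2$ this covers the linear case $h=1$, since then $n^{k/2}=n$. If instead $2h>k$, then $h>(k+l)/2$ and $f(n)\ge\kappa|\xi|\,n^{(k+l)/2}$ eventually for every $\kappa>2$, so \cref{cor:faster} applies. These two subcases exhaust all polynomials, proving case (i). For case (ii-a), with $k+l\ge3$, essential self-adjointness and boundedness below again follow from \cref{cor:faster}, once we exhibit some $\kappa>2$ with $f(n)\ge\kappa|\xi|\,n^{(k+l)/2}$ for large $n$: if $2h>k+l$ the left-hand side grows strictly faster than any multiple of $n^{(k+l)/2}$, so any $\kappa>2$ works, whereas if $2h=k+l$ and $a_h>2|\xi|$ then $f(n)\,n^{-(k+l)/2}\to a_h$ and one picks $\kappa$ with $2<\kappa<a_h/|\xi|$.

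For case (ii-b), I would invoke \cref{cor:a-deficiency} together with \cref{prop:no-lower-bound}. Both require that $f(n)\,n^{-(k+l)/2}$ admit an asymptotic expansion in powers of $n^{-1/2}$ with leading coefficient $\kappa<2$, under the normalization $|\xi|=1$ reached by dividing $f$ by $|\xi|$ as in the convention of \cref{sec:recurrence_for_A}. The expansion itself is immediate, being the finite sum $\sum_{j=0}^h (a_j/|\xi|)\,n^{-s_j/2}$ already recorded above: since here $2h\le k+l$, each exponent $s_j=(k+l)-2j\ge(k+l)-2h\ge0$ is a distinct nonnegative integer, so this is a genuine expansion in powers of $n^{-1/2}$. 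Its leading coefficient is $\kappa=a_h/|\xi|$ when $2h=k+l$ (the $s=0$ term) and $\kappa=0$ when $2h<k+l$ (no $s=0$ term); in either subcase $\kappa<2$ by hypothesis. Thus \cref{cor:a-deficiency} gives deficiency indices $n_\pm=k-l$ together with the parametrization of self-adjoint extensions from \cref{thm:a-deficiency}, while \cref{prop:no-lower-bound} shows that $\Aop$ and all its extensions are unbounded below.

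The substance of the argument is thus entirely inherited from the preceding sections; the only point requiring genuine care is the bookkeeping of the $|\xi|$-normalization, namely matching the proposition's hypotheses on $a_h$ relative to $2|\xi|$ with the rescaled threshold $\kappa$ relative to $2$ appearing in the general theorems. This is the main (if modest) obstacle, since the general results in \cref{sec:deficiency_sub} are stated under $|\xi|=1$ and must be transported back to arbitrary $\xi$ by the rescaling $f\mapsto f/|\xi|$, $\xi\mapsto\e^{\iu\theta}$; everything else is a routine appeal to \cref{cor:slower,cor:faster,cor:a-deficiency,thm:a-deficiency,prop:no-lower-bound}.
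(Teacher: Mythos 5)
Your proposal is correct and follows essentially the same route as the paper: case (i) via the split $2h\le k$ versus $2h>k$ using \cref{cor:slower,cor:faster}, case (ii-a) via \cref{cor:faster}, and case (ii-b) by reading off the finite expansion of $f(n)\,n^{-(k+l)/2}$ in powers of $n^{-1/2}$ and invoking \cref{cor:a-deficiency,thm:a-deficiency,prop:no-lower-bound}. Your explicit attention to the $|\xi|$-rescaling (matching $a_h\lessgtr 2|\xi|$ to $\kappa\lessgtr 2$) is the same bookkeeping the paper performs, just stated more carefully.
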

\begin{proof}
Let $k+l<3$, i.e. $l=0$ and $k\leq2$. If $2h\leq k$, then necessarily $h = 0$ or $h = 1$, and for all $n\geq1$, we have $n^j\leq n^{k/2}$ for all $j=0,\dots,h$, whence
\begin{equation}
    f(n)\leq \sum_{j=0}^h a_j n^{k/2}\equiv a n^{k/2},
\end{equation}
whence $\Aop$ is essentially self-adjoint by \cref{cor:slower}. Instead, if $2h >  k$, there exists $N \in \nnum$ such that $a_h n^h > 3 |\xi| n^{k/2}$ for all $n \geq N$, and thus
\begin{equation}
    f(n) \geq a_h n^h > 3 |\xi| n^{k/2}\, ;
\end{equation}
and by \cref{cor:faster} $\Aop$ is essentially self-adjoint, which concludes case \labelcref{enum:prop-poly-leq2}.

We now consider case \labelcref{enum:prop-poly-geq3}, i.e. let $k+l \geq 3$.
If $2h > k+l$, there exists $N \in \nnum$ such that $a_h n^h > 3 |\xi| n^{(k+l)/2}$ for all $n \geq N$, hence $f(n) > 3|\xi|  n^{(k+l)/2}$ and $\Aop$ is essentially self-adjoint and bounded from below by \cref{cor:faster}.
If, on the other hand, $2h \leq k+l$, $f(n) n^{-(k+l)/2}$ is a linear combination of powers of $n^{-1/2}$:
\begin{equation}
    f(n) n^{-(k+l)/2} = \sum_{j = 0}^h a_j n^{j-(k+l)/2} 
\end{equation}
Specifically, if $2h < k+l$, there is no $0$th-order (constant) contribution, and by \cref{cor:a-deficiency}, $\Aop$ admits $k-l$ self-adjoint extensions parametrized by \cref{thm:a-deficiency}.
Furthermore, by \cref{prop:no-lower-bound}, $\Aop$ is also not bounded from below in this case.

If $2h = k+l$, the essential self-adjointness crucially depends on the ratio $a_h/|\xi|$.
For $a_h > 2 |\xi|$, 
\begin{equation}
    f(n) \geq a_h n^h > 2 |\xi| n^{(k+l)/2}\, ,   
\end{equation}
and $\Aop$ is essentially self-adjoint and bounded from below by \cref{cor:faster}.
If, on the other hand, $a_h < 2 |\xi|$, the $0$th-order coefficient is $\kappa = a_h / |\xi| < 2$ in \cref{cor:a-deficiency}, and $\Aop$ is not essentially self-adjoint, admitting $k-l$ self-adjoint extensions parametrized by \cref{thm:a-deficiency}.
Applying \cref{prop:no-lower-bound}, $\Aop$ and its self-adjoint extensions are also not bounded from below.
\end{proof}

This proposition allows us to directly apply our results to a wide range of operators of physical interest.
To begin with, we consider the higher-order squeezing operator from \cref{eq:k-squeeze}:
\begin{example}[Higher-order squeezing term]\label{ex:k-squeeze}
    Let $k \geq 3$ and $\xi \in \cnum$.
    We consider again the higher-order squeezing operator $A_k = \xi (a^\dagger)^k+\xi^\ast a^k$ introduced in \cref{sec:intro}.
    As $f=0$, the conditions of Case (ii-b) in \cref{prop:polynomial} are clearly fulfilled, thus $A_k$ has deficiency indices $n_\pm = k$, and admits a family of self-adjoint expansions $A_k^U$, parametrized by $k\times k$ unitary matrices $U \in \mathrm{U}(k)$. 
    
    Let us explicitly show how the parametrization from \cref{thm:a-deficiency} simplifies in such a setting. For simplicity, again we fix $\xi=\e^{\iu\theta}$. Since $f(n) = 0$, the solution $(\dr_r)_{r \in \nnum}$ with initial condition $\dr_0 = 1$ of the recurrence relation~\eqref{eq:rec-dn0r} is a sequence of real numbers, which implies $d^{(n_0,+)}_r = d^{(n_0,-)}_r=\dr_r$ in \cref{thm:a-deficiency}. Consequently, 
    \begin{align}
       \domain(A_k^U) & =\{\psi+\varphi_U \, : \, \psi \in \domain(\overline{A_k}), \, \varphi_U \in \domain_U\}\, , \\
       \domain_U & = \left\{\sum_{i=0}^{k-1} a_i \varphi_{l+i}^+ -\sum_{j=0}^{k-1} a_i U_{ij} \frac{\norm{\varphi_{l+i}^+}}{\norm{\varphi_{l+j}^+}} \varphi_{l+j}^- \, : \,  (a_i)_{i = 0}^{k-1} \in \cnum^k \right\} \, , 
    \end{align}
    where $\varphi_{n_0}^\pm = \sum_{r = 0}^\infty (\pm \iu)^r \e^{\iu r \theta} \dr_r \phi_{n_0+rk}$ for $0 \leq n_0 < k$.
    Furthermore, $\{\psi_0 + \varphi_U\, :\, \psi_0 \in \domain_0,\varphi_U \in \domain_U\}$ is a core of $A_k^U$.
    By \cref{prop:no-lower-bound}, both $A_k$ and all its self-adjoint extensions $A_k^U$ are not bounded from below.

    Finally, we mention that two particular self-adjoint extensions of $A_k^U$, respectively corresponding to $U=I$ and $U=-I$, play a particularly important role in numerical simulations. We comment on this point in \cref{sec:conclusion}.
\end{example}

We shall now consider what happens in the presence of an additional Kerr term, i.e. $K(a^\dag)^ha^h$ for some integer $h$, cf. \cref{eq:k-squeeze-Kerr}, and $K\geq0$ being a coupling constant. Let us first show that such a term is indeed in the form $f(a^\dag a)$ considered in this paper. To this end, a direct calculation analogous to the one in the proof of \cref{lem:beta-sym} yields
\begin{equation}
    (a^\dagger)^h a^h \phi_n= (n-h+1,h)\phi_n \, ,
\end{equation}
where, as in the remainder of the paper, $(x,s)$ denotes the Pochhammer symbol, cf. \cref{eq:pochhammer}. Therefore, the $h$-order Kerr term corresponds to the choice of function 
\begin{equation}\label{eq:Kerr_function}
    f(n)=K(n-h+1,h)\, , 
\end{equation}
which is polynomial of order $h$ in $n$.
We first consider the standard (second-order) squeezing operator with a Kerr term, and then proceed to analyze the case of higher-order squeezing operators.

\begin{example}[Squeezing operator with Kerr term]\label{ex:2-squeeze-Kerr}
    Let $\xi \in \cnum$, $h \in \nnum$ and $K >0$, and consider the operator
    \begin{equation}
        A_{2,\rm Kerr} = \xi (a^\dagger)^2 + \xi^\ast a^2+ K (a^\dagger)^h a^h  \, , \quad \domain(A) = \domain_0 \, ,
    \end{equation}
This coincides with the operator $\Aop$ discussed in this paper, with $k=2$, $l=0$, and $f(n)$ as per \cref{eq:Kerr_function}.
The conditions of Case (i) in \cref{prop:polynomial} hold, so this operator is essentially self-adjoint for any value of $h$.
\end{example}

\begin{example}[Higher-order squeezing operator with Kerr term]\label{ex:k-squeeze-Kerr}
We now consider the operator from \cref{eq:k-squeeze-Kerr}:
\begin{equation}
        A_{k,\rm Kerr} = \xi (a^\dagger)^k + \xi^\ast a^k +K (a^\dagger)^h a^h\, , \quad \domain(A) = \domain_0 \, ,
\end{equation}
with $\xi \in \cnum$, $K>0$, and $k,h \in \nnum$ with $k\geq3$. Again, this coincides with the operator $\Aop$ discussed in this paper with $l=0$ and $f(n)$ as per \cref{eq:Kerr_function}. 
As $f(n)$ is a polynomial of order $h$ in $n$, we can distinguish three cases using \cref{prop:polynomial}:
\begin{enumerate}[(i)]
    \item $k>2h$. Then $A_{k,\rm Kerr}$ has deficiency indices $n_\pm = k$ and is not essentially self-adjoint.
    Furthermore, it is not bounded from below and all its self-adjoint extensions are also not bounded from below.
    \item $k < 2h$: $A_{K,\rm Kerr}$ is essentially self-adjoint and bounded from below.
    \item $k=2h$. Here, the essential self-adjointness of $A_{k,\rm Kerr}$ critically depends on the value of the ratio $K/|\xi|$. If $K/|\xi| > 2$, , the operator is essentially self-adjoint and bounded from below. If, on the other hand, $K/|\xi| < 2$, the operator has deficiency indices $k$ and is not bounded from below.
\end{enumerate}
These result confirm our intuition. The squeezing and the Kerr terms are respectively of order $k$ and $2h$ in the creation and annihilation operators. When $k<2h$, the Kerr term ``dominates'' over the higher-order squeezing term in the large-photon limit, thus it works as a regulating term which both restores essential self-adjointness and finiteness of the ground state energy. When $k>2h$,  the Kerr term just acts as a perturbation of the operator $A_k$, and will still admit infinitely many self-adjoint extensions. In the case $k=2h$, where the two terms are of equal order and none of them prevails in the large-photon limit, the deciding quantity becomes the ratio between the corresponding multiplicative constants $K$ and $|\xi|$.
\end{example}

The situation analyzed in \cref{ex:2-squeeze-Kerr,ex:k-squeeze-Kerr} for the Kerr term can be readily extended to more complicated polynomials in the number operator by means of \cref{prop:polynomial}: the asymptotic behavior of $f(n)$ for large $n$---and therefore, whether the term of highest order dominates over the squeezing term or not---will determine whether essential self-adjointness is achieved. 

To conclude, we note that it is still possible to construct examples to which our abstract results do not apply—for instance, when $f(n)$ exhibits an oscillatory behavior that precludes both the large-$n$ estimates crucial for the results in \cref{sec:ess-sa} and the existence of asymptotic expansions underlying the analysis in \cref{sec:deficiency}. For completeness, we briefly present such a technical example:

\begin{example}
\label{ex:no-result}
    We consider the operator $\Aop$ with $k > l$, $k+l \geq 3$, and 
    \begin{equation}
        f(n) = n^{(k+l)/2+1} |\sin(n \pi/2)|.
    \end{equation}
    As $\sin(n \pi/2) = 0$ for even $n$, there exists no $\kappa >2$ such that $f(n) \geq \kappa |\xi| n^{(k+l)/2}$, hence we cannot apply \cref{cor:faster}.
    For the same reason, $f(n) n^{-(k+l)/2}$ admits no asymptotic expansion $\kappa + o(1)$, and we cannot apply \cref{cor:a-deficiency}. Thus, we cannot draw any conclusion on the essential self-adjointness with the methods at hand. Still, by \cref{prop:conjugation} we do know that $\Aop$ has self-adjoint extensions.
\end{example}

\section{Concluding remarks}\label{sec:conclusion}

In this work, we have analyzed the essential self-adjointness of a broad class of operators of the form \( A_{k,l}(f) = \xi (a^\dagger)^k a^l + \xi^* (a^\dagger)^l a^k + f(a^\dagger a) \), defined on the linear span \( \mathcal{D}_0 \) of Fock states. These operators generalize higher-order squeezing Hamiltonians and include, as special cases, relevant models with Kerr nonlinearities or field energy terms.
Our main findings identify a sharp transition in the self-adjointness behavior governed by the asymptotic growth of \( f(n) \) compared to \( n^{(k+l)/2} \). When \( k + l < 3 \), essential self-adjointness always holds. For \( k + l \geq 3 \), the operator is essentially self-adjoint if \( f(n) \) grows sufficiently fast; otherwise, it admits a nontrivial space of self-adjoint extensions, whose structure we explicitly characterize by mapping the problem in the realm of recurrence relations and utilizing the Birkhoff–Trjitzinsky theory to characterize the square summability of the solutions. 
We also showed that $\Aop$ always has at least one self-adjoint extension for arbitrary choices of parameters.

In particular, our results can be applied to the case of higher-order squeezing operators possibly supplemented by field terms that are polynomial in the number operator, e.g. Kerr-type terms. We showed that such terms can act as a regularizing mechanism: when the field term dominates asymptotically (e.g., when its order exceeds that of the squeezing part), it restores both essential self-adjointness and boundedness from below, ensuring the existence of a well-defined quantum evolution.

Physically, these results clarify when higher-order squeezing Hamiltonians define unambiguous quantum dynamics and when additional physical input---to be mathematically encoded via a properly chosen unitary matrix appearing in the ``singular'' part $\mathcal{D}_U$ of the domain, cf. \cref{eq:singularpart}---is required to complete the description. Notably, the presence of field terms dominating in the large-excitation regime can provide a natural mechanism for selecting well-posed dynamics.

Several directions remain open for future investigation. Most notably, the strategy applied in this paper to parametrize essential self-adjoint extensions can be extended to more general classes of bosonic operators defined on \( \mathcal{D}_0 \). In more complex cases, the resulting recurrence relation will typically be of higher order, requiring the full machinery of Birkhoff–Trjitzinsky theory to analyze its asymptotics. This will likely require an explicit analysis of higher-order recurrence relations possibly involving coefficients that admit asymptotic expansions in powers of $r^{-1/\omega}$ with $\omega > 2$.
Additionally, a full characterization of the spectral properties of the operators considered here—including the nature of their spectra and possible bound states—would be of high value. These questions shall be the object of future research.

Finally, in addition to their conceptual implications, our results bear directly on the practice of numerical simulation. Since any simulation of bosonic models requires a finite-dimensional truncation—typically performed in the Fock basis $(\phi_n)_{n\in\mathbb{N}}$—the convergence of such simulations crucially depends on whether this basis forms a core for the operator in question. When it does, convergence to the correct unitary dynamics is guaranteed. In the non-essentially self-adjoint cases we study, where Fock states are not a core, the operators are not bounded from below. This rules out the applicability of recent convergence results to the dynamics generated by the Friedrichs extension \cite{fischer2025wrong}, leaving the outcome of numerical simulations uncertain.

This problem is studied in our companion paper \cite{ashhab2025truncation}, where it is shown that simulations of higher-order squeezing operators without additional terms exhibit an unexpected behavior: rather than converging, they oscillate between two distinct unitary groups, which we identify as arising from different self-adjoint extensions---precisely, those corresponding to the two choices $U=\pm I$ in the parametrization of \cref{thm:a-deficiency}. The addition of field terms, such as Kerr contributions, then plays a decisive role: when they dominate asymptotically, they restore essential self-adjointness and stabilize the numerical output. In this way, our analysis not only sheds light on the structure of higher-order bosonic Hamiltonians, but also provides practically relevant guidance for their computational study.

\subsection*{Acknowledgments} The authors thank Sahel Ashhab, Daniel Braak, and Rémi Robin for our scientific discussions and exchanges of ideas.
DL acknowledges financial support by Friedrich-Alexander-Universität Erlangen-Nürnberg through the funding program “Emerging Talent Initiative” (ETI), and was partially supported by the project TEC-2024/COM-84 QUITEMAD-CM.

\medskip

\appendix

\section{Proof of \texorpdfstring{\cref{prop:birkhoff}}{Proposition 4.5}}\label{sec:appendix}

Here we will provide the proof of \cref{prop:birkhoff}. As discussed in \cref{sec:recurrence}, this proposition is a specialization of the Birkhoff--Trjitzinsky theory~\cite{birkhoff-analytictheorysingular-1933}, which  provides precise conditions under which formal series solutions of recurrence relations correspond to asymptotic expansions of actual solutions. Since the original statement and proof of this result are known to be technically intricate and challenging to parse (see, e.g.,~\cite{wong-asymptoticexpansionssecondorder-1992,elaydi-introductiondifferenceequations-2005,braaksma-summationformalsolutions-2000,immink-reductioncanonicalforms-1991,wong-asymptoticexpansionssecondorder-1992}), we shall follow the more accessible presentation by Wimp and Zeilberger~\cite{wimp-resurrectingasymptoticslinear-1985}, directly applying it to second-order recurrence relations in the form~\eqref{eq:recurrence-birkhoff}.

\begin{theorem}[\cite{wimp-resurrectingasymptoticslinear-1985}]
\label{thm:birkhoff-general}
     Consider a recurrence relation in the form~\eqref{eq:recurrence-birkhoff}. We assume that $a(r)$ and $b(r)$ have asymptotic expansion of the form
    \begin{equation}
        a(r) \sim r^{J/\omega} \sum_{s=0}^{\infty} a_s r^{-s/\omega}\, ,  \quad b(r) \sim  r^{K/\omega}\sum_{s=0}^{\infty} b_s r^{-s/\omega}
    \end{equation}
    for some $K,J \in \znum$, $\omega \in \nnum$, and $\omega \geq 1$.
    Then the following properties hold:
    \begin{enumerate}
        \item \cref{eq:recurrence-birkhoff} admits exactly two linearly independent formal series solutions in the form $\e^{Q^\pm_p(r)} s^\pm_p(r)$, where     \begin{equation}\label{eq:birkhoff_q}
        Q^\pm_p(r)  = \mu_0^\pm r \ln r + \sum_{j = 1}^p \mu^\pm_j r^{(p+1-j)/p} \, ,
        \end{equation}
and $s^\pm_p(r)$ is a formal series
        \begin{equation}
        \label{eq:birkhoff_s}
        s_p^\pm(r) = r^{\alpha_\pm} \sum_{j = 0}^t \sum_{s = 0}^\infty C^\pm_{sj} (\ln r)^j r^{(q_j-s)/p} \, ,
    \end{equation}
    where $t,p,q_j,\mu^\pm_0 p \in \nnum$, $\mu^\pm_j,\alpha_\pm,C^\pm_{sj} \in \cnum$, and $p = v \omega$ for some $v \in \nnum$.
    \item The two formal series solutions are asymptotic expansions for $r\to\infty$ of two linearly independent solutions $(d_r^\pm)_{r \in \nnum}$ of \cref{eq:recurrence-birkhoff}:
    \begin{equation}
        d_r^\pm \sim \e^{Q^\pm_p(r)} s^\pm_p(r)\, .
    \end{equation}
    \end{enumerate}
\end{theorem}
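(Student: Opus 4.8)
The plan is to derive this statement as the order-two specialization of the general theorem of Wimp and Zeilberger~\cite{wimp-resurrectingasymptoticslinear-1985}, which establishes the Birkhoff--Trjitzinsky asymptotics for linear recurrences of arbitrary order $N$ whose coefficients admit asymptotic expansions in powers of $r^{-1/\omega}$. First I would observe that~\eqref{eq:recurrence-birkhoff} is already in the monic normal form required there, with $N=2$, and that the hypothesis supplies exactly the assumed expansions of $a(r)$ and $b(r)$ with leading orders $r^{J/\omega}$ and $r^{K/\omega}$. Invoking their theorem with $N=2$ then returns precisely two linearly independent formal solutions of the stated shape $\e^{Q_p^\pm(r)} s_p^\pm(r)$, each of which is the asymptotic expansion of a genuine solution $(d_r^\pm)$; this is the full claim. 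The bulk of the proof is therefore a matching of notation: checking that the shapes of $Q_p^\pm(r)$ in~\eqref{eq:birkhoff_q} and of $s_p^\pm(r)$ in~\eqref{eq:birkhoff_s} are exactly what their general statement produces once the order is fixed to two.

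Since we import rather than reprove the deep content, it is worth recording the two conceptually distinct ingredients on which it rests. The first is purely \emph{formal}: substituting the ansatz $\e^{Q(r)} r^{\alpha}\sum_{s} C_s r^{-s/p}$ into the recurrence, expanding each shifted factor via the asymptotics of $Q(r+1)-Q(r)$, and matching coefficients order by order. A Newton--polygon analysis of the resulting balances fixes the admissible leading exponents---forcing the ramification $p = v\omega$, the logarithmic term $\mu_0 r\ln r$ in the degenerate case, and the fractional powers appearing in $Q_p^\pm(r)$---after which all remaining coefficients $C_{sj}^\pm$ are determined recursively. This construction is algebraic and always succeeds, yielding the two formal series.

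The genuinely hard step, and the reason the theorem is deep rather than routine, is the \emph{analytic} one: showing that each formal series, which is typically divergent, is nonetheless the true asymptotic expansion of an actual solution. This is where the Birkhoff--Trjitzinsky machinery and its reformulation by Wimp and Zeilberger do the real work, factoring the recurrence operator into asymptotic first-order factors and realizing each formal factor by a genuine solution with the prescribed behavior through a summation/fixed-point argument. Reproducing this passage would amount to rewriting~\cite{wimp-resurrectingasymptoticslinear-1985,birkhoff-analytictheorysingular-1933}, so I would not attempt it here; the point of stating \cref{thm:birkhoff-general} is precisely to package that result in a form directly usable for second-order recurrences, which is what the subsequent \cref{prop:birkhoff} exploits.
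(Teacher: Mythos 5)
Your proposal takes essentially the same approach as the paper: \cref{thm:birkhoff-general} is presented there as an imported result of Wimp and Zeilberger~\cite{wimp-resurrectingasymptoticslinear-1985}, stated without proof, and the paper's own work in \cref{sec:appendix} begins only with the specialization to $\omega=2$ in \cref{prop:birkhoff}. Your identification of the two underlying ingredients (the formal Newton--polygon construction and the hard analytic step realizing the formal series as asymptotic expansions of genuine solutions) is accurate, and deferring both to the cited literature is exactly what the paper does.
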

In this specific setting, $\e^{Q^\pm_p(r)} s^\pm_p(r)$ is a formal series solution of \cref{eq:recurrence-birkhoff} if, upon substituting it into \cref{eq:recurrence-birkhoff} and factoring out the exponential, the coefficient of each term
\begin{equation}
    r^{\alpha_\pm + (q_j - s)/p} (\ln r)^j, \qquad 0 \le j \le t,
\end{equation}
vanishes identically.

\begin{remark}
  The special case $\omega = 1$ (asymptotic expansions in powers of $r^{-1}$) is treated explicitly in \cite{wong-asymptoticexpansionssecondorder-1992}. This would suffice for the purpose of this paper in the cases of even $k+l$, but would leave out the odd cases, for which $\omega=2$ is necessary.
The original work \cite{birkhoff-analytictheorysingular-1933} also covers higher-order recurrence relations and coefficients whose asymptotic expansions include a finite number of polynomially growing terms $r^s$.
For a detailed exposition with illustrative examples, see Ref.~\cite{wimp-resurrectingasymptoticslinear-1985}.
\end{remark}

We can now proceed with the proof of \cref{prop:birkhoff}.

\begin{proof}[Proof of \cref{prop:birkhoff}]
    As the assumptions of \cref{thm:birkhoff-general} are satisfied with $K = J = 0$ and $\omega = 2$, our strategy will be the following. We shall search for two formal series solutions of \cref{eq:recurrence-birkhoff} as in \cref{eq:birkhoff-solution}, and
    \begin{enumerate}
        \item We will show that \cref{eq:birkhoff-solution} does indeed reduce to the general form of \cref{thm:birkhoff-general} with $p = 2v$ for some $v \in \nnum$ and some value of the parameters;
        \item We will then show that \cref{eq:birkhoff-solution} is a formal series solution of \cref{eq:recurrence-birkhoff} if the parameters $\Omega_\pm$ and $\alpha_\pm$ are as given by \cref{eq:birkhoff-omega,eq:birkhoff-alpha} respectively.
    \end{enumerate}
    Since, by \cref{thm:birkhoff-general}, there exist exactly two linearly independent formal series solutions in this form, this will imply our claim.\smallskip

    For the first point, it suffices to notice that, by choosing $v = 1$ (thus $p = 2$), $t = 0$, $\mu_0^\pm = 0$, $q_0=0$, $\mu_1^\pm=\log\rho_\pm$ (we use the principal determination of the logarithm), and using the notation $\Omega_\pm = \mu_2^\pm$ and $C^\pm_s=C^\pm_{0s}$, \cref{eq:birkhoff_q,eq:birkhoff_s} reduce to
\begin{align}
        Q_2^\pm(r) &=  (\log \rho_\pm) r + \Omega_\pm r^{1/2}\, ,\\
        s_2^\pm(r) & = r^{\alpha_\pm} \sum_{s = 0}^\infty C^\pm_{s} r^{-s/2} \, ,
    \end{align}
    and therefore the formal series $\e^{Q_2^\pm(r)}s_2^\pm(r)$ indeed reduces to \cref{eq:birkhoff-solution}, and are linearly independent by $\rho_+\neq\rho_-$.
    
    We proceed with the second point. Plugging the formal series~\eqref{eq:birkhoff-solution} and the asymptotic expansions of $a(r)$ and $b(r)$ into \cref{eq:recurrence-birkhoff}, we get the \textit{formal} equality
    \begin{align}
    	\rho^{r+2}_\pm \e^{\Omega_\pm (r+2)^{1/2}}&  \sum_{s=0}^{\infty} C_s^\pm (r+2)^{\alpha_\pm-s/2} + \rho^{r+1}_\pm \e^{\Omega_\pm (r+1)^{1/2}} \sum_{s=0}^\infty \sum_{s' = 0}^\infty  C_s^\pm  (r+1)^{\alpha_\pm-s/2}a_{s'} r^{-s'/2} \\
    	& + \rho^{r}_\pm \e^{\Omega_\pm r^{1/2}} \sum_{s=0}^\infty \sum_{s' = 0}^\infty   C_s^\pm  r^{\alpha_\pm-s/2}b_{s'}r^{-s'/2} = 0 \, .
    \end{align}
    As $b_0\neq0$, the two roots $\rho_\pm$ are both nonzero and we can factor out a term $\rho_\pm^r\e^{\Omega_\pm r^{1/2}} r^{\alpha_\pm}$ above. We can also rewrite $(r+\nu)^{\alpha_\pm-s/2} = r^{\alpha_\pm-s/2}(1+\nu/r)^{\alpha_\pm-s/2}$ for $\nu=1,2$. 
By these manipulations, we get    
\begin{align}\label{eq:termbyterm}
    	\rho^{2}_\pm & \e^{\Omega_\pm (r+2)^{1/2}-\Omega_\pm r^{1/2}} \sum_{s=0}^{\infty} C_s^\pm r^{-s/2} \left(1+\frac{2}{r}\right)^{\alpha_\pm-s/2}\\
    	& + \rho_\pm \e^{\Omega_\pm (r+1)^{1/2}-\Omega_\pm r^{1/2}} \sum_{s=0}^\infty \sum_{s' = 0}^\infty  C_s^\pm a_{s'}  r^{-s/2-s'/2}\left(1+\frac{1}{r}\right)^{\alpha_\pm-s/2}\\& +  \sum_{s=0}^\infty \sum_{s' = 0}^\infty   C_s^\pm  b_{s'}r^{-s/2-s'/2} = 0 \, .
    \end{align}
    Following the approach in \cite{wong-asymptoticexpansionssecondorder-1992,wimp-resurrectingasymptoticslinear-1985}, we can manipulate this equality by some elementary series expansions. For $\nu=1,2$, we have
    \begin{equation}
    	\left(1+\frac{\nu}{r}\right)^{1/2} = 1 + \frac{\nu}{2r}-\frac{\nu^2}{8r^2} + o(r^{-2})\, ,
    \end{equation}
    and therefore
    \begin{equation}
    	(r+\nu)^{1/2} -r^{1/2}=  r^{1/2}\left( \frac{\nu}{2r}-\frac{\nu^2}{8r^2} + o(r^{-2})\right) = \frac{\nu}{2}r^{-1/2} -\frac{\nu^2}{8}r^{-3/2} + o(r^{-3/2}) \, ,
    \end{equation}
    whence implying 
    \begin{equation}
    	\label{proofeq:beta-expansion}
    	\e^{\Omega_\pm(r+\nu)^{1/2} -\Omega_\pm r^{1/2}} = 1+ \frac{\Omega_\pm \nu}{2}r^{-1/2} + \frac{\Omega_\pm^2 \nu^2}{8}r^{-1} + o(r^{-1}) \, .
    \end{equation}
    Analogously, by the binomial theorem, we obtain
    \begin{equation}
    	\label{proofeq:alpha-expansion}
    	\left(1+\frac{\nu}{r}\right)^{\alpha_\pm-s/2} = 1+ \nu \left(\alpha_\pm-\frac{s}{2}\right)r^{-1} + o(r^{-1}) \, .
    \end{equation}
    Inserting \cref{proofeq:beta-expansion,proofeq:alpha-expansion} into \cref{eq:termbyterm}, we obtain a formal series of powers of $r^{-1/2}$:
     \begin{align}
    \label{proofeq:termbyterm-expanded}
        & \rho^{2}_\pm  \left(1+\Omega_\pm r^{-1/2}+\frac{\Omega_\pm^2}{2}r^{-1} + o(r^{-1})\right) \sum_{s=0}^{\infty} C_s^\pm r^{-s/2} \left(1+2\left(\alpha_\pm-\frac{s}{2}\right)r^{-1}  + o(r^{-1}) \right)\\
    	& + \rho_\pm\left(1+\frac{\Omega_\pm}{2} r^{-1/2}+\frac{\Omega_\pm^2}{8}r^{-1} + o(r^{-1})\right) \sum_{s=0}^\infty \sum_{s' = 0}^\infty  C_s^\pm  a_{s'} r^{-s/2-s'/2}\left(1+\left(\alpha_\pm-\frac{s}{2}\right)r^{-1}  + o(r^{-1}) \right)\nonumber\\& +  \sum_{s=0}^\infty \sum_{s' = 0}^\infty  C_s^\pm  b_{s'} r^{-s/2-s'/2} = 0 \, . \nonumber
    \end{align}
    Factoring out the terms $C^\pm_s r^{-s/2}$ in each addend, we can reorder the expression above as such: 
    \begin{equation}
        \sum_{s = 0}^\infty C^\pm_s r^{-s/2}\left(f^\pm_0(s) + f^\pm_1(s) r^{-1/2} + f^\pm_2(s) r^{-1} + \ldots\right) = \sum_{s=0}^\infty C^\pm_sr^{-s/2}\sum_{s'=0}^\infty f_{s'}^\pm(s)r^{-s'/2}=0\,,
    \end{equation}
    where the coefficient functions $(f^\pm_j(s))_{j \in \nnum}$ depend on the parameters $\rho_\pm,\Omega_\pm,\alpha_\pm$ as well as $(a_{s'})_{s' \in \nnum}$ and $(b_{s'})_{s' \in \nnum}$. In particular, an explicit calculations yields the following result for the first three orders:
    \begin{align}
        f^\pm_0(s) &= \rho_\pm^2 + \rho_\pm a_0 + b_0;\\
        f^\pm_1(s)&=\rho_\pm^2 \Omega_\pm + \rho_\pm \frac{\Omega_\pm}{2} a_0 + \rho_\pm a_1 +b_1\, , \\
        f^\pm_2(s)&=\rho_\pm^2 \frac{\Omega_\pm^2}{2} + 2\rho_\pm^2\left(\alpha_\pm - \frac{s}{2}\right) + \rho_\pm \frac{\Omega_\pm^2}{8}a_0 + \rho_\pm \frac{\Omega_\pm}{2}a_1 +\rho_\pm a_2 + \rho_\pm a_0 \left(\alpha_\pm -\frac{s}{2}\right) + b_2\,.
    \end{align}
    Both $f^\pm_0(s)\equiv f_0^\pm$ and $f^\pm_1(s)\equiv f^\pm_1$ are identically zero. Indeed, $f_0^\pm=0$ by virtue of the characteristic equation~\eqref{eq:characteristic}, whose roots are precisely $\rho_\pm$, and $f^\pm_1$ also vanishes if and only if
    \begin{equation}
        \label{proofeq:omega-expression}
        \Omega_\pm = -\frac{a_1 \rho_\pm+b_1}{a_0 \rho_\pm/2+\rho_\pm^2} \, ,
    \end{equation}
    which is \cref{eq:birkhoff-omega}. Note that
    the denominator is nonzero under our assumptions since, by $b_0 \neq 0$, we get $\rho_\pm \neq 0$, and from $\rho_+ \neq \rho_-$ follows $a_0/2+\rho_\pm \neq 0$. As for $f^\pm_2(s)$, we can show that, if $\alpha_\pm$ is given by \cref{proofeq:alpha-expansion}, then $f^\pm_2(0)=0$. Indeed, by direct algebraic manipulation, \cref{eq:birkhoff-alpha} is equivalent to 
    \begin{align}\label{proofeq:alpha-expression} 
    \alpha_\pm & = \frac{a_2 \rho_\pm+b_2}{a_0 \rho_\pm + 2 b_0} - \frac{\Omega_\pm^2(\rho_\pm/2+ a_0/8)+ \Omega_\pm a_1/2}{2\rho_\pm+a_0 }\\
            & = -\frac{a_2 \rho_\pm+b_2}{2 \rho_\pm^2+ a_0 \rho_\pm} - \frac{\rho_\pm^2 \frac{\Omega_\pm^2}{2}+ \rho_\pm \frac{\Omega^2}{8}a_0 + \rho_\pm \frac{\Omega_\pm}{2}a_1 }{2 \rho_\pm^2+ a_0 \rho_\pm} \, ,
    \end{align}
    hence $f^\pm_2(s)$ vanishes precisely at $s=0$. Again, the expression for $\alpha_\pm$ is well-defined as $b_0 \neq 0$ and $\rho_+ \neq \rho_-$.

    We thus proved that, with our choice of parameters, $f^\pm_0=f^\pm_1=0$, and $f^\pm_2(0)=0$; since $s\mapsto f^\pm_2(s)$ is a linear function, $s=0$ is its only root, so that $f^\pm_2(s)\neq0$ for $s\neq0$.
    
    We now rearrange the double formal sum above by grouping together all terms with the same value of $s+s'$; in this form, we obtain a double sum in two new indices $\sigma=0,1,2,\ldots$ and $j=0,1,\ldots,\sigma$,
    \begin{equation}
        \sum_{\sigma=0}^\infty r^{-\sigma/2}\sum_{j=0}^\sigma f^\pm_{\sigma-j}(j)C^\pm_j=0.
    \end{equation}
Therefore, \cref{eq:birkhoff-solution} defines a formal series solution of the recurrence relation if and only if
\begin{equation}\label{proofeq:order-by-order}
    \sum_{j=0}^\sigma f^\pm_{\sigma-j}(j)C^\pm_j=0\qquad\text{for all }\sigma\in\mathbb{N}.
\end{equation}
We now show that, with our choice of parameters $\rho_\pm,\Omega_\pm$ and $\alpha_\pm$, one can always find $(C_s^\pm)_{s \in \nnum}$ such that \cref{proofeq:order-by-order} is formally satisfied at all orders $r^0,r^{-1/2},r^{-1}, \dots$. To this end, note that \cref{proofeq:order-by-order} is satisfied for $\sigma=0,1,2$ independently of the values of $(C_s^\pm)_{s\in\nnum}$ since, for those values of $\sigma$, it reduces respectively to
\begin{align}
    f_0^\pm(0)C_0^\pm&=0;\\
    f_1^\pm(0)C_0^\pm+f_0^\pm(1)C_1^\pm&=0;\\
    f_2^\pm(0)C_0^\pm+f_1^\pm(1)C_1^\pm+f_0^\pm(2)C_2^\pm&=0,
\end{align}
but, as proven above, $f^\pm_0$ and $f^\pm_1$ are identically zero, and $f^\pm_2(0)=0$. For $\sigma\geq3$, \cref{proofeq:order-by-order} reads
\begin{equation}
    f_\sigma^\pm(0)C_0^\pm+f_{\sigma-1}^\pm(1)C_1^\pm+\ldots+f^\pm_3(\sigma-3)C_{\sigma-3}^\pm+f_2^\pm(\sigma-2)C_{\sigma-2}^\pm=0,
\end{equation}
and, since we also proved $f^\pm_2(s)\neq0$ for $s\neq0$, and $\sigma-2\neq0$, we can solve this equation in $C_{\sigma-2}^\pm$:
\begin{equation}\label{eq:happyending}
    C_{\sigma-2}^\pm=-\frac{1}{f_2^\pm(\sigma-2)}\sum_{j=0}^{\sigma-3}f^\pm_{\sigma-j}(j)C_j^\pm.
\end{equation}
This equation fixes all coefficients $C_1^\pm,C_2^\pm,\ldots$ in terms of $C_0^\pm$: indeed, for $\sigma=3$, \cref{eq:happyending} fixes $C_1^\pm$ in terms of $C_0^\pm$, for $\sigma=4$ it fixes $C_2^\pm$ in terms of $C_1^\pm$, and so on. That is, our ansatz~\eqref{eq:birkhoff-solution} yields a valid formal series solution of \cref{eq:recurrence-birkhoff} uniquely defined up to a multiplicative constant.

This completes the construction of two linearly independent formal series solutions of \cref{eq:recurrence-birkhoff} in the form stated in \cref{thm:birkhoff-general}, from which the claim follows by the argument given at the beginning of the proof.
\end{proof}

\AtNextBibliography{\small}	\DeclareFieldFormat{pages}{#1}\sloppy 
	\printbibliography
    
\end{document}